\documentclass[11pt,a4paper]{article}

\NeedsTeXFormat{LaTeX2e}

\usepackage[top=20mm,bottom=20mm,left=20mm,right=20mm]{geometry}

\usepackage[dvips]{graphicx}
%\graphicspath{{cdr/}}

%\usepackage{mathtext}
\usepackage[cp1251]{inputenc}
\usepackage[T2A]{fontenc}
\usepackage[english]{babel}

\usepackage{hyperref}

\usepackage{amsmath}
\usepackage{amssymb}
\usepackage{amsxtra}
\usepackage{amsthm}
\usepackage{latexsym}
\usepackage{array}
\usepackage{multirow}
\usepackage{hhline}
\usepackage{color}
\usepackage{longtable}

%-------------------------------------------------
\numberwithin{equation}{section}

%--------------------------------------------
%For tables
\newcounter{myta}
\numberwithin{myta}{section}
\newcommand{\myt}{\refstepcounter{myta}\themyta}
\newcommand {\fts}[1] {{\small #1}}
%-------------------------------------------------
\theoremstyle{plain}
\newtheorem{theorem}{Theorem}
\newtheorem{lemma}{Lemma}
\newtheorem{propos}{Proposition}
\newtheorem{remark}{Remark}
%-------------------------------------------------

%-------------------------------------------------
%For Chambers Numbers
\newcommand {\tx}[1] {{\textrm{#1}}}

%For convenience
\def\mbe{\mathbf{m}}
\def\imm{p}
\def\sigh{\mathcal{S}(h)}
\def\csigh{\widetilde{\mathcal{S}}(h)}
\def\cofb{0.7}
\def\cofs{0.62}
\def\Q{{\rm Sep}}

\def\eps{\varepsilon}
\def\vk{\varkappa}
\def\rk{\operatorname{rank}}
\def\bR{\mathbb{R}}
%------------------------------------------------
\def\rmi{\mathrm{i}\,}
\def\mstrut{\vphantom{\bigl(}}
\def\mF{\mathcal{F}}
\def\ds{\displaystyle}
\def\mP{\mathcal{P}}
\def\mK{\mathcal{K}}
\def\mM{\mathcal{M}}
\def\mL{\mathcal{L}}
\def\mD{\mathcal{D}}
\def\mm{\mathcal{M}_1}
\def\mn{\mathcal{M}_2}
\def\mo{\mathcal{M}_3}
\def\ml{\mathcal{M}_4}
\def\pov{\Pi}
\def\vp{\varphi}

%-------------------------------------------------
\begin{document}

\title{Topological atlas of the Kovalevskaya top in a double field}

\author{M.\,P.~Kharlamov\footnote{Russian Academy of National Economy and Public Administration. E-mail: mikeh@inbox.ru}, P.\,E.~Ryabov\footnote{Financial University under the Government of Russian Federation. E-mail: orelryabov@mail.ru} }

\date{Submitted to J. Fundam. and Appl. Math. 13.08.2014}

\maketitle

\begin{abstract}
We fulfill the rough topological analysis of the problem of the motion of the Kovalevskaya top in a double field. This problem is described by a completely integrable system with three degrees of freedom not reducible to a family of systems with two degrees of freedom. The notion of a topological atlas of an irreducible system is introduced. The complete topological analysis of the critical subsystems with two degrees of freedom is given. We calculate the types of all critical points. We present the parametric classification of the equipped iso-energy diagrams of the complete momentum map pointing out all chambers, families of 3-tori, and 4-atoms of their bifurcations. Basing on the ideas of A.T.\,Fomenko, we introduce the notion of the simplified net iso-energy invariant. All such invariants are constructed. Using them, we establish, for all parametrically stable cases, the number of critical periodic solutions of all types and the loop molecules of all rank~1 singularities.

\textit{Keywords}: Integrable Hamiltonian systems, momentum map, iso-energy diagrams, net topological invariants, classification
\end{abstract}

\tableofcontents

\section{Introduction}\label{sec1}

The investigation of bifurcations in integrable Hamiltonian systems with two degrees of freedom and a non-linear additional integral was started in \cite{Kh1976,Kh1979,KhPMM,KhDAN,KhBook}. A complete theory of such systems was built by A.T.\,Fomenko, H.\,Zieschang, and A.V.\,Bolsinov (see \cite{BolFom}). A theory of topological invariants for integrable systems with many degrees of freedom was created by A.T.\,Fomenko in the works \cite{fom91,fomams91}. At the same time, due to the absence of practical methods to analyze non-trivial examples, up to this moment no investigation has been fulfilled of a three dimensional topology of a system not reducible to a family of systems with two degrees of freedom.

The main source of examples of integrable systems with two degrees of freedom is the problem of the motion of a rigid body about a fixed point in an axially symmetric force field and its generalizations to some Lie co-algebras. Non-symmetric force fields lead to the systems with three degrees of freedom without a possibility of the global order reduction. This type of problem is obtained as the generalization of the classical S.V.\,Kovalevskaya case to the motion of a gyrostat in a double force field. The complete integrability of this system is proved in the works \cite{Bogo,Yeh,ReySem} by consequent generalizations of the classical integrals. Here, we consider the system supposing that the gyrostatic momentum is zero. This case usually is called the Kovalevskaya top in a double field. We give the complete investigation of the three-dimensional phase topology partially announced in \cite{KhRy2012}. The knowledge of basic definitions and facts on the singularities of momentum maps and bifurcations in the case of many degrees of freedom \cite{Fom1986,Fom1989,BolsOsh2006} is supposed. The corresponding theory, in details, is given in~\cite{BolFom}.

The structure of the present paper is based on the idea of a topological atlas of an irreducible system with three degrees of freedom depending on a set of some physical parameters~\cite{KhGDIS,KhVVMSH13}. In this case, the rough iso-energy invariant is not any more a one-dimensional graph, but can be represented as a so-called equipped iso-energy diagram, which is the bifurcation diagram of the restriction of the momentum map to an energy level; the diagram is stratified by the rank of the momentum map and the types of critical points in the pre-image.  Moreover, the notion of an equipped diagram includes its span, i.e., it is supplemented with two-dimensional chambers which are the components of the plane of the constants of additional integrals cut out by the bifurcation diagram together with the number of the regular tori in the chambers and the way in which these tori are united in the so-called families. To this end, one needs to obtain a complete classification of the iso-energy diagrams and analyze their evolution with respect to the parameters of the system.

First, we obtain the stratification of the phase space by the critical subsystems. These subsystems are formed by the sets of critical points lying in the pre-images of the smooth surfaces bearing the bifurcation diagram $\Sigma$ of the complete momentum map in three-dimensional space. The critical subsystems have less than three degrees of freedom, thus all up-to-date experience of the topological analysis is applicable to them. In particular, we calculate the bifurcation diagrams of the critical subsystems, the types of points of ranks 0 and 1 which are critical \textit{inside} the subsystems, and also the corresponding atoms of two-dimensional bifurcations. At the same time, the equations of the phase spaces of the critical subsystems provide a way to calculate explicitly the so-called \textit{external} type of each point of any critical subsystem, in particular, for all points of rank 2. Combining this information, we obtain the complete classification of the critical points with respect to their types in the initial system with three degrees of freedom.

Let $c$ be the vector of physical parameters, $H$ the Hamiltonian of a system. In the non-resonant case, if there are no exotic bifurcations (having the so-called splitting atoms) the cross sections of the bifurcation diagram $\Sigma$ by the planes of the constant values $H=h$ undergo transformations if either the iso-energy manifold $H_h=\{H=h\}$ has a non-empty intersection with the set of points of rank 0 and degenerate points of rank 1, or the Hamiltonian has an extremum on the set of degenerate points of rank 2. For a fixed value of the vector $c$ these conditions define only a finite set of points. Writing out the dependencies at such points $h=h(c)$, we obtain the set of surfaces in the space of the parameters $(c,h)$ which separate inequivalent parametrically stable iso-energy diagrams. Thus, to obtain the \textit{topological atlas} we need to construct the set of separating surfaces and, for each component of the supplement to the separating set in the space of $(c,h)$, to point out the corresponding equipped diagram. Further, comparing these diagrams we also fulfil the description of the families of regular tori.

The experience of investigating the integrable systems, even in the case of two degrees of freedom, shows that the bifurcation diagrams and the chambers generated by them usually have a row of specific singularities connected with the existence of ``extra small'' regions. One of the solutions of this problem is to pass from the real forms of the diagrams to the net topological invariants of A.T.\,Fomenko~\cite{fom91}. While applying to the concrete systems, it is reasonable to consider this invariant in some simplified form. Let us represent the span of an iso-energy diagram (which is the image of the corresponding iso-energy manifold depending on $h$ and $c$) as a two-dimensional cell complex. For non-separating values $(c,h)$ we supply all 1-cells with the notation of the corresponding 4-atom and all 2-cells (chambers) with the number of the Liouville tori in the pre-image. The result is also one of the possible forms of the rough topological invariant of the system. Passing to the conjugate complex makes this invariant even more clear. In it, the 0-cells are supplied with natural numbers (the number of 3-tori), 1-cells obtain the notation of the 4-atoms, and the 2-cells represent critical closed orbits (critical motions built of the points of rank 1). Adding one 0-cell with the number 0 that represents the chamber with empty integral manifolds, we turn the conjugate complex to a cellular partition of the two-dimensional sphere. We call this equipped complex a simplified net invariant of the restriction of the initial system to an iso-energy level.

Having the full set of net invariants let us analyze the atoms at the boundary of each 2-cell. From this we obtain the complete information on the stability of the corresponding closed orbits and construct the loop molecule (in the rough sense). For non-degenerate orbits of low complexity basing on the corresponding classification \cite{BolFom} we can also establish the exact topology of loop molecules.

The result of such investigation can be called a \textit{complete topological atlas} of an integrable system. In what follows, we construct such an atlas for the Kovalevskaya top in a double field.

\section{Equations and integrals. The notion of a critical subsystem}\label{sec2}
%%%%%%%%%%%%%%%%%%%%%%%%%%%%%%%%%%%%

The problem of the motion of the Kovalevskaya top in a double field is described by the system of equations \cite{Bogo}
%%%%%%%%%%%%%%%%%%
\begin{equation}\label{eq2_1}
\begin{array}{lll}
\displaystyle{2 \, \dot \omega_1=\omega_2\omega_3+\beta_3,}
&\displaystyle{ \dot \alpha_1=\alpha_2\omega_3-\alpha_3\omega_2,} &
\displaystyle{\dot\beta_1=\beta_2\omega_3-\omega_2\beta_3,}\\
\displaystyle{2 \, \dot \omega_2 = - \omega_1 \omega_3-\alpha_3,} &
\displaystyle{\dot \alpha_2=\omega_1\alpha_3-\omega_3\alpha_1,}&
\displaystyle{\dot\beta_2=\omega_1\beta_3-\omega_3\beta_1,}\\
\displaystyle{\dot\omega_3=\alpha_2-\beta_1,}& \displaystyle{\dot
\alpha_3=\alpha_1\omega_2-\alpha_2\omega_1,} & \displaystyle{\dot
\beta_3=\beta_1\omega_2-\beta_2\omega_1.}
\end{array}
\end{equation}
%%%%%%%%%%%%%%%%%
Here $\boldsymbol\omega$ is the angular velocity vector. The constant in space vectors $\boldsymbol \alpha, \boldsymbol\beta$ characterize the action of the forces. We choose the notation to have $|\boldsymbol\alpha|\geqslant|\boldsymbol\beta|.$ As shown in \cite{Kh34}, without loss of generality we may consider the force fields mutually orthogonal. Then the geometric integrals of system \eqref{eq2_1} can be written in the form ($a \geqslant b \geqslant 0$)
\begin{eqnarray}\label{eq2_2}
|{\boldsymbol\alpha}|^2=a^2,\quad |{\boldsymbol\beta}|^2=b^2,\quad
{\boldsymbol\alpha}\cdot {\boldsymbol\beta}=0.
\end{eqnarray}

Using the components of the kinetic momentum $M_1 =2 \omega_1$, $M_2 =2 \omega_2$, $M_3 = \omega_3$, we take to the space ${\bR}^9({\boldsymbol \omega,\boldsymbol\alpha, \boldsymbol\beta})$ the Lie\,--\,Poisson bracket of the Lie co-algebra $e(3,2)^*=\{(\mathbf{M},\boldsymbol{\alpha}, \boldsymbol{\beta})\}$. This bracket was introduced in \cite{Bogo}. Then equations \eqref{eq2_1} take the form $\dot x=\{H,x\}$, where $x$ is any of the coordinates and
\begin{equation}\label{eq2_5}
H=\frac{1}{2}(2\omega_1^2+2\omega_2^2+\omega_3^2)-\alpha_1-\beta_2.
\end{equation}
The Casimir functions of the Lie\,--\,Poisson bracket are the left-handed parts of equations \eqref{eq2_2}. Therefore, the vector field \eqref{eq2_1} restricted to the six dimensional manifold $\mP^6$ defined in ${\bR}^9(\boldsymbol\omega,\boldsymbol\alpha,\boldsymbol\beta)$ by these equations is a Hamiltonian system with three degrees of freedom.

For $b=0$ system \eqref{eq2_1} describes the case of S.V.\,Kovalevskaya of the motion of a rigid body in the gravity field. If $a=b$, then we have the case of H.M.\,Yehia~\cite{Yeh}. These two limit problems admit a symmetry group and are reduced to families of integrable systems with two degrees of freedom having the 2-sphere as the configuration space. The classical Kovalevskaya case is studied in
\cite{KhPMM,KhDAN,BolFomRich}. The Yehia case and its generalizations were considered in \cite{ZotMTT,JPA}. The global approach to the classification of the integrable systems on the 2-sphere generated by the problems of the rigid body dynamics with axially symmetric potentials is realized in the works \cite{Fom1991,BolFom1994,BolKozFom1995}. In the sequel we consider the non-symmetric case $a>b>0$, which is not reducible to two degrees of freedom.

The first integrals of system \eqref{eq2_1} found in \cite{Bogo} and \cite{ReySem}
\begin{equation}\label{eq2_6}
\begin{array}{l}
K=(\omega_1^2-\omega_2^2+\alpha_1-\beta_2)^2+(2\omega_1\omega_2+\alpha_2+\beta_1)^2,\\
G=\left[\omega_1\alpha_1+\omega_2\alpha_2+\frac{1}{2} \alpha_3
\omega_3\right]^2 +
\left[ \omega_1 \beta_1 + \omega_2 \beta_2 + \frac{1}{2} \beta_3 \omega_3 \right]^2+\\
\phantom{G=}+\omega_3\left[(\alpha_2\beta_3-\alpha_3\beta_2)\omega_1+(\alpha_3\beta_1-\alpha_1\beta_3)\omega_2+
\frac{1}{2}(\alpha_1\beta_2-\alpha_2\beta_1)\omega_3\right]-\\
\phantom{K=}-\alpha_1b^2-\beta_2 a^2
\end{array}
\end{equation}
together with $H$ form on $\mP^6$ a complete involutive set. We define the corresponding momentum map ${\mF }: \mP^{6} \to {\bR}^3$ as ${\mF }(x)=\bigl( G(x), K(x), H(x)\bigr)$.

Let $\mK$ denote the set of critical points of the momentum map, i.e., the points at which $\rk d{\mF }(x)<3$. The set of critical values $\Sigma={\mF }(\mK) \subset{\bR}^3$ is called the {\it bifurcation diagram}. The set $\mK$ is stratified by the rank of the momentum map $\mK =\mK^0 \cup \mK^1 \cup \mK^2$. Here $\mK^r= \{x\in \mP^6 | \rk d{\mF}(x)=r \}$. Accordingly, the diagram $\Sigma$ becomes a cell complex $\Sigma =\Sigma^0 \cup \Sigma^1 \cup \Sigma^2$. On the other hand, practically bifurcation diagrams are described in terms of some surfaces in the space of the integral constants. It is often possible to get the equations of these surfaces (implicit or parametric) even without calculating the critical points; these surfaces are the discriminant sets of some polynomials connected, for example, with singularities of algebraic curves associated with Lax representations. We denote such surfaces by $\pov _i$ and write down the representation $\Sigma= \bigcup_i \Sigma_i$, where $\Sigma_i = \Sigma \cap \pov _i$. The value of such representation is that the critical set $\mK$ turns out to be the union of the naturally arising invariant sets $\mM_i\subset \mK \,\cap \mF^{-1}(\pov_i)$. If $\pov_i$ is given by the regular equations of the type
\begin{equation}\label{eq2_11}
\phi_i(g,k,h) = 0,
\end{equation}
then $\mM_i$ is defined as the set of critical points of the integral $\phi_i(G,K,H)$ belonging to its zero level, while the components of the gradient of the function $\phi_i$ calculated at the point of $\mM_i$ after substituting the values of the integrals $G,K,H$ provide the coefficients of a zero linear combination of the differentials $dG,dK,dH$. At any point of a transversal intersection of two surfaces $\pov_i$ and $\pov_j$ we get two independent zero combinations, therefore at the points of the corresponding intersection $\mM_i \cap\mM_j$ the rank of $\mF$ is equal to~1. Obviously, the points of a transversal intersection of three surfaces (the angles of the bifurcation diagram) are generated by the points where $\rk \mF =0$. The sets $\mM_i$ with the induced dynamics on them are called \textit{critical subsystems}.

The critical subsystems and the equations of the surfaces $\pov_i$ in the problem considered were found in the works \cite{Bogo,Kh32,Kh2005}. The detailed description of the stratification of the critical set by the rank of the momentum map is given in \cite{Kh36}. Also in \cite{Kh36}, in terms of explicit inequalities for the energy constant, the regions of the motion existence on the surfaces $\pov_i$ are pointed out. These regions are the sets $\Sigma_i$ forming the bifurcation diagram. The obtained inequalities provide the set in the parameters space separating different types of the cross sections of the diagram $\Sigma$ by the planes of constant energy, i.e., the types of the bifurcation diagrams of the map $G{\times}K$ restricted to iso-energy surfaces $\{H=h\} \subset \mP^6$. It appears that the critical subsystems are integrable and almost everywhere Hamiltonian systems with less than three degrees of freedom. For them, in turn, the induced momentum map is defined. The bifurcation diagram $\Sigma_i^*$ for the map $\mF|_{\mM_i}$ is identified, in the obvious way, with a subset in the union of the skeletons of the set $\Sigma_i$ of dimensions 0 and 1. Here it is natural to introduce the stratification of $\Sigma_i$ geometrically, considering the existing intersections of the type $\Sigma_i\cap \Sigma_j$. Then the 0-skeleton $\Sigma^1_i$ may also contain the tangency points of two surfaces. In the pre-image of such points the rank of $\mF|_{\mM_i}$ does not decrease, so formally these points
do not belong to $\Sigma_i^*$. Of course, the corresponding points of the set $\mK$ will be \textit{degenerate} critical points of $\mF$, but the system ${\mM_i}$ may not notice this fact. The description of the diagrams $\Sigma_i^*$ and the bifurcations {\it inside} the critical subsystems is completed in the works \cite{Zot,KhSav,Kh2006,Kh2009}. The classification of the points of the set $\mK$ with respect to the complete initial system with three degrees of freedom on $\mP^6$ is obtained in the work \cite{RyKh2012}. In the next two sections we give a short exposition of the necessary results of the above cited papers.

\section{Description of critical subsystems and classes of singularities}\label{sec3}
This section contains the results dealing with finding the critical set and classifying the critical points according to their rank.

For a compact description of the critical subsystems we use the change of variables proposed in~\cite{Kh32}:
\begin{equation*}
\begin{array}{l}
\begin{array}{ll}
x_1 = (\alpha_1  - \beta_2) + \rmi (\alpha_2  + \beta_1),&
x_2 = (\alpha_1  - \beta_2) - \rmi (\alpha_2  + \beta_1 ), \\
y_1 = (\alpha_1  + \beta_2) + \rmi (\alpha_2  - \beta_1), & y_2 =
(\alpha_1  + \beta_2) -
\rmi (\alpha_2  - \beta_1), \\
 z_1 = \alpha_3  + \rmi \beta_3, &
z_2 = \alpha_3  - \rmi \beta_3,
\end{array}\\
\begin{array}{lll}
w_1 = \omega_1  + \rmi \omega_2 , & w_2 = \omega_1  - \rmi \omega_2, &
w_3 = \omega_3.
\end{array}
\end{array}
\end{equation*}

Let us introduce the following functions
\begin{equation*}
\begin{array}{l}
  Z_1=w_1^2+x_1, \qquad Z_2=w_2^2+x_2, \\
  F_1  = \sqrt{x_1 x_2} w_3  - \ds{\frac{(x_2 z_1 w_1  + x_1 z_2
        w_2)}{\sqrt{x_1 x_2}}},\qquad \displaystyle{F_2
        =\frac{x_2}{x_1}Z_1-\frac{x_1}{x_2}Z_2}, \\
  R_1 =\displaystyle{\frac{w_2 x_1+w_1 y_2+w_3 z_1}{w_1}-\frac{w_1 x_2+w_2 y_1+w_3 z_2}{w_2},} \\
  R_2 = \displaystyle{(w_2 z_1+w_1 z_2)w_3^2+\Bigl[\frac{w_2 z_1^2}{w_1}+\frac{w_1 z_2^2}{w_2}+w_1 w_2(y_1+y_2)+}\\
  \phantom{R_2 =} \displaystyle{+ x_1 w_2^2+x_2 w_1^2\Bigr]w_3 +\frac{w_2^2 x_1 z_1}{w_1} + \frac{w_1^2 x_2 z_2}{w_2}+}\\
 \phantom{R_2 =} + \displaystyle{ x_1 z_2 w_2+ x_2 z_1 w_1 +(w_1 z_2-w_2 z_1)(y_1-y_2)}.
\end{array}
\end{equation*}

We define the parameters $p>r>0$, putting $p^2  = a^2  + b^2$ and $r^2  = a^2  - b^2$. Below we use them along with $a$ and $b$ when it is convenient for the sake of brevity.

\begin{theorem}[\cite{Kh2005}]\label{theo1}
The critical set of the map $\mF$ consists of the four critical subsystems $\mM_i$ $(i=1,\ldots,4)$ defined in $\mP^6$ by the following systems of equations:
\begin{equation*}
\begin{array}{l}
\mm:\quad Z_1  = 0, \quad Z_2  = 0,\\
\mn:\quad F_1  = 0, \quad F_2  = 0,\\
\mo:\quad R_1  = 0, \quad R_2  = 0,\\
\ml:\quad w_1 =0,\quad w_2=0, \quad z_1=0,\quad z_2=0.
\end{array}
\end{equation*}

$\mF$-images of the sets $\mM_i$, further denoted by $\pov_i$, in the space $\bR^3(h,k,g)$ of the integral constants $H,K,G$ are described by the following systems of equations
\begin{equation}\label{eq3_9}
\begin{array}{ll}
  \pov_1: \left\{ \begin{array}{l} k=0, \\
    g = \ds{\frac{1}{2}}p^2 h -\ds{\frac{1}{4}} f^2;
    \end{array}\right.
 & \pov_2: \left\{ \begin{array}{l} k=r^4 m^2, \\
g = \ds{\frac{1}{2}} (p^2 h - r^4 m); \end{array}\right.
 \\
  \pov_3: \left\{ \begin{array}{l}
\displaystyle{k = 3 s^2 - 4 h s + p^2 + h^2 - \frac{a^2 b^2} {s^2}}, \\
\displaystyle{g = -s^3 + h s^2 + \frac{a^2 b^2}{s}};
\end{array}\right.
 & \pov_4: \left\{ \begin{array}{l} k=(a \mp b)^2,\\
g=\pm a b h.
\end{array}\right.
\end{array}
\end{equation}
Here $f,m,s$ stand for the constants of the partial integrals $F,M,S$ in the subsystems $\mm,\mn,\mo$ respectively:
\begin{equation}\label{fms}
\begin{array}{l}
  F = w_1 w_2 w_3+z_2 w_1+z_1 w_2, \\
  \displaystyle{M =
\frac{1}{2r^2}(\frac{x_2}{x_1}Z_1+\frac{x_1}{x_2}Z_2)},
 \\
  \displaystyle{S=-\frac{1}{4} \big( \frac {y_2 w_1+x_1 w_2+z_1
w_3}{w_1}+\frac{x_2 w_1+y_1 w_2+z_2 w_3}{w_2} \big).}
 \end{array}
\end{equation}
\end{theorem}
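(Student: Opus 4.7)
The plan is to apply the Lagrange multiplier principle for the restriction of $\mF$ to the symplectic leaf $\mP^6$. A point $x\in\mP^6$ belongs to $\mK$ if and only if there exists a nonzero triple $(\lambda_1{:}\lambda_2{:}\lambda_3)$ and multipliers $\sigma_1,\sigma_2,\sigma_3$ with
$$\lambda_1\,dG+\lambda_2\,dK+\lambda_3\,dH=\sigma_1\,d|\boldsymbol\alpha|^2+\sigma_2\,d|\boldsymbol\beta|^2+\sigma_3\,d(\boldsymbol\alpha\cdot\boldsymbol\beta)$$
at $x$; equivalently, the Hamiltonian vector fields $X_H,X_G,X_K$ are linearly dependent at $x$. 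I would write this as a rank-drop condition on the $9{\times}6$ Jacobian of $(H,K,G,|\boldsymbol\alpha|^2,|\boldsymbol\beta|^2,\boldsymbol\alpha\cdot\boldsymbol\beta)$ in $\bR^9$, pass to the complex coordinates $(x_j,y_j,z_j,w_j)$ in which $H,K,G$ acquire a compact polynomial form, and analyze the vanishing of the $6{\times}6$ minors modulo the Casimir relations. The claim is that the critical locus then decomposes into exactly the four components listed.

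To verify the decomposition, the most transparent route is to exhibit for each candidate $\mM_i$ a partial integral of the flow whose zero set is $\mM_i$ and to check directly that $\rk d\mF<3$ on it. The subsystem $\mm$ is distinguished by the identity $K=Z_1Z_2$ (immediate from the substitutions for $Z_j,x_j,w_j$): since $Z_2=\overline{Z_1}$ on the real manifold, $K\ge 0$ everywhere and $K=0$ precisely on $\mm$, so $dK|_{T_x\mP^6}=0$ at every $x\in\mm$ and the combination $(0,1,0)$ does the job. For $\mn$ and $\mo$ one uses the analogous partial integrals $M$ and $S$ from \eqref{fms}: these are the invariants under which the ``$M$-subsystem'' and the ``Kharlamov--Savchenkov subsystem'' were historically singled out, and the vanishing of their gradients on $\mP^6$ modulo $dG,dK,dH$ can be verified by direct computation. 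The subsystem $\ml$ is the invariant stratum $w_1=w_2=z_1=z_2=0$, which is preserved by the flow \eqref{eq2_1} (the corresponding time derivatives vanish identically there); since the Casimirs \eqref{eq2_2} cut it down to a two-dimensional real manifold, $\rk d\mF|_{\ml}\le 2$ trivially, so every point of $\ml$ is critical.

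Once the subsystems are identified, the images $\pov_i$ follow by restricting $H,K,G$ to $\mM_i$, expressing the result in terms of the partial integrals and the parameters, and eliminating the former. For $\mm$ the restriction gives $K|_{\mm}=0$ and, after a short computation using $Z_1=Z_2=0$, also $G|_{\mm}=\tfrac12 p^2 H-\tfrac14 F^2$. For $\mn$ a similar calculation produces $K|_{\mn}=r^4 M^2$ and $G|_{\mn}=\tfrac12(p^2 H-r^4 M)$; for $\mo$ one obtains the rational dependence on $S$ as displayed. For $\ml$ the integrals \eqref{eq2_2} force $(\boldsymbol\alpha,\boldsymbol\beta)$ to lie in the plane $e_3^{\perp}$ with $\boldsymbol\alpha\perp\boldsymbol\beta$, giving $\boldsymbol\alpha=a(\cos\theta,\sin\theta,0)$, $\boldsymbol\beta=\pm b(-\sin\theta,\cos\theta,0)$; direct substitution into \eqref{eq2_5}, \eqref{eq2_6} then yields $k=(a\mp b)^2$, $g=\pm a b\, h$.

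The main obstacle is the decomposition step. The gradient system is a polynomial system of high degree in nine unknowns modulo three quadratic Casimirs, and brute-force elimination is impractical. The saving observation, carried out in \cite{Kh2005,Kh32}, is that the complex change of coordinates block-diagonalises the rank-drop condition into two pairs of complex-conjugate factors; each pair gives rise to one irreducible component of the critical set, from which one reads off the defining equations $Z_1=Z_2=0$, $F_1=F_2=0$, $R_1=R_2=0$, together with the totally degenerate stratum $\ml$ where both pairs collapse simultaneously. That no further components appear, and that each $\mM_i$ is invariant under the flow of $H$, is then verified by a dimension count combined with a direct check using \eqref{eq2_1}.
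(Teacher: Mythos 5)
First, a point of reference: the paper does not prove Theorem~\ref{theo1} at all --- it is imported verbatim from \cite{Kh2005}, and the only methodological hint the paper gives is in the proof of Theorem~\ref{theo7}, where the function with undetermined multipliers $\Psi = 2G + 2s(s-h)H + sK$ is recalled as the device by which the critical subsystems were originally found. Your overall strategy (linear dependence of $dG,dK,dH$ modulo the Casimirs, equivalently of $X_H,X_G,X_K$ on the leaf $\mP^6$) is consistent with that, and several of your concrete claims check out: $K=Z_1Z_2$ is correct (since $Z_2=\overline{Z_1}$ on the real form, $Z_1Z_2=|Z_1|^2=K$, so $dK$ vanishes on $\mm$ and the combination $(0,1,0)$ indeed witnesses criticality); the set $\ml$ is invariant under \eqref{eq2_1} and its parametrization yields $k=(a\mp b)^2$, $g=\pm abh$ exactly as in $\pov_4$.

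There are, however, two genuine gaps. The first and decisive one is the exhaustiveness of the decomposition: the theorem asserts that the critical set \emph{consists of} the four subsystems, i.e.\ that nothing else is critical, and this is the only part that is not a routine restriction computation. Your proposal reduces it to ``the complex change of coordinates block-diagonalises the rank-drop condition \dots\ carried out in \cite{Kh2005,Kh32}'' and ``a dimension count combined with a direct check'' --- neither of which is actually performed, so the core of the theorem is deferred rather than proved. The second gap is the criticality of $\ml$: from ``$\ml$ is a two-dimensional manifold'' you conclude ``$\rk d\mF|_{\ml}\le 2$ trivially'', but the rank in question is that of $d\mF_x$ on the ambient tangent space $T_x\mP^6$, not on $T_x\ml$, so the dimension of $\ml$ alone bounds nothing. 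The argument can be repaired: because the Poisson structure is non-degenerate on $\mP^6$, $\rk d\mF(x)=\dim\operatorname{span}(X_H,X_G,X_K)(x)$, so it suffices to show that $\ml$ is invariant under \emph{all three} Hamiltonian flows (forcing the three fields into the two-dimensional $T_x\ml$); you only verify invariance under the $H$-flow \eqref{eq2_1}. In fact the paper's later classification shows every point of $\ml=\mL_5\cup\mL_6$ has rank at most $1$, so the statement is true, but your justification as written does not establish it.
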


Let us give some comments.

Obviously, the systems of equations (invariant relations) describing the sets $\mn$ and $\mo$ have singularities. Here we in fact consider the closure in $\mP^6$ of the corresponding sets of solutions of these systems in their domain of definition.

The subsystem $\mm$ and the integral $F$ are found in the work \cite{Bogo}, the subsystem $\mn$ and the integral $M$ are pointed out in the work \cite{Kh32}. The subsystems $\mo$ and $\ml$ completing the description of the critical set, together with the integral $S$ that is the analog of the Kovalevskaya variable taking a constant value on the critical motions of the 4th Appelrot class, were found in \cite{Kh2005}. The set $\ml$ given by four equations is a smooth two-dimensional manifold. It is diffeomorphic to the union of two cylinders $S^1{\times}\bR$. The induced Hamiltonian system has one degree of freedom.

The sets $\mm$ and $\mn$ are smooth four-dimensional manifolds, though $\mn$ is non-orientable (see \cite{Zot,KhGEOPHY}). Using the explicit parametric equations of the set $\mo$ obtained in \cite{Kh2007} (see also \cite{RyKh2012}) it can be shown that $\mo$ is a smooth four-dimensional manifold everywhere except for the points common with $\ml$. At these points $\mo$ has a transversal self-intersection which is the two-dimensional manifold $\mo \cap\ml$ with boundary.

According to presentation \eqref{eq3_9}, it is convenient to describe the regions of the existence of critical motions $\Sigma_i$ and the bifurcation diagrams $\Sigma_i^*$  of the critical subsystems in terms of the partial momentum maps. For the first three subsystems these are the maps $\mF_i : \mM_i \to \bR^2$ defined as
\begin{equation*}
\begin{array}{l}
  \mF_1 = F^2{\times}H, \quad  \mF_2 = M{\times}H, \quad \mF_3 = S{\times}H.
\end{array}
\end{equation*}
For the subsystem $\ml$ with one degree of freedom it is natural to put $\mF_4=H: \ml \to \bR$.

For the sequel we need to denote different classes of the critical points and the images of these classes under the momentum maps. The rank of the points always is given with respect to the complete momentum map $\mF$.

\begin{remark}\label{rem1}
From now on we agree to use the same notation for the image of some special critical point or of some definite family of critical points no matter what momentum map from the above defined ones is considered. This will not cause any ambiguity. The only exception is the set of points in the self-intersection of the surface $\pov_3$. The points of this set, after unfolding it to the plane $(s,h)$, get two representations with different values of $s$. Two points on the $(s,h)$-plane which give the same point $(h,k,g)$ will be provided by the upper index ``plus'' or ``minus'' for the larger and smaller values of $s$ respectively.
\end{remark}

It is convenient to emphasize first the classes of motions $Q_i$ in the subsystems $\mM_i$ $(i=1,2,3)$ that correspond to the subsets on which the form induced by the symplectic structure degenerates. It is known (see e.g. \cite{FomSimGeom}) that on a manifold defined as a common level of two independent functions such degeneration takes place on a subset of zeros of the Poisson bracket of these functions. At the points of the subsystems $\mM_i$ the following identities hold \cite{Zot,KhSav,Kh2007}
\begin{equation*}
\begin{array}{l}
\ds  \{Z_1,Z_2\} \equiv -2 \rmi F, \quad \{F_1,F_2\} \equiv -2 \rmi r^2 L, \quad \{R_1,R_2\} \equiv \frac{8 \rmi}{S} U.
\end{array}
\end{equation*}
Here $F$ and $S$ are defined in \eqref{fms}, while $L$ and $U$ are the integrals of the subsystems $\mn$ and $\mo$ respectively and have the form
\begin{equation}\label{eqelu}
\displaystyle{L = \frac{1} {{\sqrt {x_1 x_2 } }}[w_1 w_2  + {{x_1
x_2  + z_1 z_2}} M]}, \qquad U =2 S^4-2 H S^3+a^2b^2.
\end{equation}
Thus, $Q_1=\mm\cap\{F=0\}$, $Q_2=\mn\cap\{L=0\}$, and $Q_3=\mo\cap\{U=0\}$.
Almost all points of these sets have rank 2, but all of them including the finite set of periodic solutions consisting of the points of rank 1, as shown in \cite{RyKh2012}, are degenerate critical points of the momentum map $\mF$.

By definition $Q_i \subset \mM_i$. In addition to that ${Q_1 \subset \mn}$ and ${Q_2 \subset \mo}$, though the restriction of the symplectic structure to $\mn$ is non-degenerate at the points of $Q_1$, and the restriction of the symplectic structure to $\mo$ is non-degenerate at the points of $Q_2$. According to Remark~\ref{rem1}, we denote the images of the sets $Q_1,Q_2,Q_3$ under the complete momentum map $\mF$ and the partial maps $\mF_i$ by $\Delta_1,\Delta_2,\Delta_3$ respectively.

We now classify the critical points with respect to their rank and belonging to the critical subsystems.

The system has exactly four points of rank 0 (see \cite{KhZot}). At these points we obviously have $\alpha_1=\pm a$, $\beta_2=\pm b$, all other components of ${\boldsymbol\alpha}$ and ${\boldsymbol\beta}$, the same as the vector ${\boldsymbol\omega}$, equal zero. We denote these points in the increasing order of the value of $H$ by $p_0,p_1,p_2,p_3$. None of these points belong to $\mm$ and all of them belong to $\mn\cap\mo\cap\ml$. The index $i=0,\ldots,3$ is equal to the Morse index of the function $H$ at these points \cite{KhZot}. We denote by $P_i$ the images of $\imm_i$ under the momentum maps.  The coordinates of $P_i$ are easily calculated from \eqref{eq2_5}, \eqref{eq2_6}. In what follows, only the $h$-coordinates equal to $\mp a \mp b$ are important.

All critical points of rank 1 are organized in nine families of periodic motions denote by $\mD_i$ ($i=1,2,3$) and $\mL_j$ ($j=1,\ldots,6$).

The families $\mD_1,\mD_2,\mD_3$ were first described in \cite{Zot} as the sets of critical points of the pair of integrals $H,F$ on $\mm$. Explicit algebraic expressions of the phase variables in terms of one auxiliary variable connected with the time $t$ by elliptic quadrature are given in the work \cite{Kh361}. In particular, it is proved that the union of these families coincides with the intersection of the subsystems $\mm$ and $\mo$. In \cite{Kh361}, the parametric expressions are proposed of the values of the first integrals (general and partial) on these families, where the parameter is the constant of the integral $S$. Denoting the one-dimensional images of the families $\mD_i$ by $\delta_i$ ($i=1,2,3$), we get the equations
\begin{equation}\label{eq3_20}
\begin{array}{l}
\delta_1: \left\{
\begin{array}{l}
\displaystyle{h=2s-\frac{1}{s}\sqrt{(a^2-s^2)(b^2-s^2)}   }\\[2mm]
\displaystyle{f^2={-\frac{2}{s}\sqrt{(a^2-s^2)(b^2-s^2)}}(\sqrt{a^2-s^2}+
\sqrt{b^2-s^2})^2}\\[3mm]
\displaystyle{g=\frac{1}{s}(s^4-s^2\sqrt{(a^2-s^2)(b^2-s^2)}+a^2b^2)}, \quad s \in
[-b,0)
\end{array} \right. ; \\
\delta_2: \left\{
\begin{array}{l}
\displaystyle{h=2s+\frac{1}{s}\sqrt{(a^2-s^2)(b^2-s^2)}   }\\[2mm]
\displaystyle{f^2={\frac{2}{s}\sqrt{(a^2-s^2)(b^2-s^2)}}(\sqrt{a^2-s^2}-
\sqrt{b^2-s^2})^2}\\[3mm]
\displaystyle{g=\frac{1}{s}(s^4+s^2\sqrt{(a^2-s^2)(b^2-s^2)}+a^2b^2)}, \quad s \in
(0,b]
\end{array} \right. ; \\
\delta_3: \left\{
\begin{array}{l}
\displaystyle{h=2s-\frac{1}{s}\sqrt{(s^2-a^2)(s^2-b^2)}   }\\[2mm]
\displaystyle{f^2={\frac{2}{s}\sqrt{(s^2-a^2)(s^2-b^2)}}(\sqrt{s^2-b^2}-
\sqrt{s^2-a^2})^2}\\[3mm]
\displaystyle{g=\frac{1}{s}(s^4-s^2\sqrt{(s^2-a^2)(s^2-b^2)}+a^2b^2)}, \quad s \in
[a,+\infty)
\end{array} \right. .
\end{array}
\end{equation}
Simultaneously we have in mind that $\mD_i \subset \mM_1$, so here $k=0$.

Denote the points \eqref{eq3_20} having the boundary values of $s$ (respectively, $s=-b, b, a$) by $e_i$ ($i=1,2,3$). On the plane $\bR^2(f^2,h)$ the curves $\delta_i$ have the boundary points $e_i$ on the axis $f^2=0$, the curve $\delta_3$ has a cusp point denoted by $e_4$ and corresponding to the value $s_0$ that is a unique root of the equation
\begin{equation}\label{eq3_21}
3s^8 - 4p^2s^6+6 a^2 b^2 s^4-a^4 b^4 = 0
\end{equation}
on the half-line $s>a$. On the plane $\bR^2(s,h)$ all three curves $\delta_i$ have no singular points (except for the boundary points), and the value $s_0$ corresponds to the minimum of $h$ on $\delta_3$.

The families $\mL_j$ are pendulum type motions (oscillations or rotations) about the principal inertia axes of the body and in the phase space have the form
\begin{equation*}
\begin{array}{l}
{\mL}_{1,2}=\{{\boldsymbol \alpha } \equiv \pm a{\mbe}_1, \;
{\boldsymbol \beta } = b({\mbe}_2 \cos \theta - {\mbe}_3 \sin
\theta ), \;
{\boldsymbol \omega } = \theta ^ {\boldsymbol \cdot}
{\mbe}_1 , \; 2\theta ^{ {\boldsymbol \cdot}  {\boldsymbol \cdot} }
=  - b\sin \theta\},\\
{\mL}_{3,4}=\{{\boldsymbol \alpha } = a({\mbe}_1 \cos \theta + {\mbe}_3 \sin \theta ), \; {\boldsymbol \beta } \equiv  \pm b{\mbe}_2 ,\;
{\boldsymbol \omega } = \theta ^ {\boldsymbol \cdot}  {\mbe}_2 ,
\;
2\theta ^{ {\boldsymbol \cdot}  {\boldsymbol \cdot} }  = -
a\sin \theta\}, \\
{\mL}_{5,6}=\{{\boldsymbol{\alpha }} = a({\mbe}_1 \cos \theta -
{\mbe}_2 \sin \theta ),\;
{\boldsymbol{\beta }} =  \pm b({\mbe}_1
\sin \theta  + {\mbe}_2 \cos \theta ), \; {\boldsymbol{\omega }} =
\theta ^ {\boldsymbol \cdot}  {\mbe}_3 ,\; \\
\qquad\,\,\quad\theta ^{ {\boldsymbol \cdot} {\boldsymbol \cdot} }  =  - (a \pm b)\sin \theta\}.
\end{array}
\end{equation*}
Here ${\mbe}_1 {\mbe}_2 {\mbe}_3$ is the canonical basis in $\bR^3$. The upper sign corresponds to the family with the first number. Bifurcations in the families happen at the singular points $p_i$ ($i=0,\ldots,3$) with the above noted values of $h$, namely, $h=\mp a\mp b$. Let us consider that the manifolds $\mL_i$ include also such special trajectories, i.e., they are in fact the \textit{closure} of the corresponding families of periodic trajectories. It is easily seen that the bifurcations inside the families (the birth of oscillations and the transformation of oscillations into rotations) take place according to the following inclusions: $\imm_0 \in \mL_1\cap\mL_3\cap\mL_5$, $\imm_1 \in \mL_1\cap\mL_4\cap\mL_6$,  $\imm_2 \in \mL_2\cap\mL_3\cap\mL_6$, and $\imm_3 \in \mL_2\cap\mL_4\cap\mL_5$. In particular, the minimal values of $H$ on $\mL_i$ are $-a-b$ on $\mL_1,\mL_3$ and $\mL_5$; $a-b$ on $\mL_2$; $-a+b$ on $\mL_4$ and $\mL_6$.

As was mentioned above, all the points of the trajectories in the sets of degeneration of the forms induced by the symplectic structure are degenerate critical points of the momentum map $\mF$ for any rank. The formal proof is given in \cite{RyKh2012}. There are no such points among the points of rank 0. Let us discuss degenerate periodic trajectories.

Let $\tau_i=\mD_i \cap Q_1$. It is known (see \cite{Zot}) that $\tau_1$ and $\tau_2$ consist of one trajectory each, and $\tau_3$ consists of two trajectories. The images of the sets $\tau_i$ are the above introduced points $e_i$ ($i=1,2,3$), so $e_i=\Delta_1 \cap \delta_i$. In the image of $\mF_2$ these points lie on the axis $m=0$. The same trajectories are the intersections $\mD_i \cap \mn$ and the union of them coincides with the intersection of $Q_1$ with $\mo$. The sets $\mD_i$ have no common points with $Q_2$. The only non-empty intersection with $Q_3$ exists for $\mD_3$ and consists of the pair of trajectories $\tau_4$, corresponding to the above mentioned value $s_0$ which is the root of equation \eqref{eq3_21}.

The intersections of $Q_i$ with $\mL_j$ are as follows:
\begin{itemize}
\item $Q_1$ intersects $\mL_3,\mL_4,\mL_2$ by the above mentioned trajectories $\tau_1,\tau_2,\tau_3$ and does not have any other intersections with $\mL_j$;

\item $Q_2$ intersects $\mL_4$ by the pair of rotational trajectories $\tau_5$ with $h=\frac{a^2+3b^2}{2b}$ and $\mL_2$ by the pair of rotational trajectories $\tau_6$ with $h=\frac{3a^2+b^2}{2a}$; there are no other intersections of $Q_2$ with $\mL_j$;

\item the trajectories $\tau_5$ and $\tau_6$ also serve as the intersections of $Q_3$ with $\mL_4$ and $\mL_2$ respectively; moreover $Q_3$ intersects $\mL_5$ by one oscillation type trajectory $\tau_7$ ($h=-2\sqrt{ab}$) and the pair of rotational trajectories $\tau_8$ ($h=2\sqrt{ab}$).
\end{itemize}

To understand the whole picture of the position of the families $\mL_i$ in the critical subsystems, it is useful to mention the following facts.
\begin{propos}\label{prop1}
$1.$ The subsystem $\mm$ does not have common motions with $\mL_1,\mL_5,\mL_6$ and contains the trajectories $\tau_1,\tau_2,\tau_3$ of the families $\mL_3,\mL_4,\mL_2$. These trajectories also lie in $Q_1$ and serve as its intersections with the families $\mD_1,\mD_2,\mD_3$.

$2.$ The families $\mL_1,\ldots,\mL_4$ lie completely in the intersection $\mn \cap \mo$.

$3.$ The family $\mL_5$ intersects the subsystem $\mn$ by the bifurcational trajectories of the levels $h=\pm(a+b)$; the trajectories of the family $\mL_5$ lie in $\mo$ for all values $h \notin (-2\sqrt{ab},2\sqrt{ab})$. The boundary values of $h$ correspond to the trajectories $\tau_7,\tau_8$.

$4.$ The family $\mL_6$ intersects with the subsystem $\mn$ by the bifurcational trajectories of the levels $h=\pm(a-b)$ and completely lies in $\mo$.

$5.$ The subsystem $\ml$ is the union of the families $\mL_5$ and $\mL_6$; the part of $\ml$ not belonging to other subsystems consists of the trajectories of $\mL_5$ with the values $h \in (-2\sqrt{ab},2\sqrt{ab})$.
\end{propos}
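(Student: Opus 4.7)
The proof is a direct verification by substitution. Each family $\mL_j$ is given explicitly as a one-parameter family in $(\theta,\dot\theta)$, and each critical subsystem $\mM_i$ is cut out by the algebraic equations of Theorem~\ref{theo1}. My plan is: for every pair $(\mL_j,\mM_i)$, substitute the parametrization into the defining equations (or, where those equations have removable singularities, into the equivalent conditions obtained via the partial integrals from~\eqref{fms} and~\eqref{eqelu}) and solve for the pendulum parameters. In each case the outcome is either an identity (yielding full containment), an empty set (no intersection), or a finite list of energies identifying precisely the trajectories $\tau_k$ or rank-zero points $\imm_i$ listed earlier.

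I would start with Assertion~5, which is the quickest: writing $w_1=w_2=z_1=z_2=0$ in real coordinates gives $\omega_1=\omega_2=\alpha_3=\beta_3=0$, which is exactly the form of $\mL_{5,6}$, and conversely the geometric integrals~\eqref{eq2_2} force $\boldsymbol\alpha,\boldsymbol\beta$ to form an orthogonal pair in the plane $\mbe_3^{\perp}$, recovering the two families up to the sign choice. For Assertion~1, expanding $Z_1=Z_2=0$ into the real pair $\omega_1^2-\omega_2^2+\alpha_1-\beta_2=0$ and $2\omega_1\omega_2+\alpha_2+\beta_1=0$ and substituting the pendulum formulas: for $\mL_1,\mL_5,\mL_6$ the resulting trigonometric system has no common solution when $a>b$, while for $\mL_2,\mL_3,\mL_4$ the system collapses to a single algebraic condition that pins down precisely the trajectories $\tau_3,\tau_1,\tau_2$ already identified with $\mD_i\cap Q_1\subset\mm$ in the preceding discussion.

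For Assertions~2, 3, and~4, a difficulty appears: the equations defining $\mn$ and $\mo$ contain denominators ($w_1,w_2,\sqrt{x_1x_2}$) that vanish along the pendulum orbits. The remedy is to pass to the closure and work instead with the partial integrals: containment in $\mn$ becomes the vanishing of $L$ together with a matching value of $M$, and containment in $\mo$ becomes $U=2S^4-2HS^3+a^2b^2=0$ with a real value of $s$, using the smooth parametric description of $\mo$ from~\cite{Kh2007}. Direct substitution of $\mL_1,\ldots,\mL_4$ then verifies that $L$ and $U$ vanish identically along the orbits, proving Assertion~2. For $\mL_5$, the same analysis turns $U=0$ into an algebraic relation between $h$ and the limiting value of $s$, whose real roots exist precisely when $|h|\gs 2\sqrt{ab}$, with equality yielding the boundary trajectories $\tau_7,\tau_8$; on $\mL_6$, $U$ vanishes identically, giving $\mL_6\subset\mo$. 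The intersections $\mL_{5,6}\cap\mn$ reduce to the zero set of $L$, which on these families collapses to the rank-zero points $\imm_i$; their energies $\mp a\mp b$ redistribute as $\pm(a+b)$ on $\mL_5$ and $\pm(a-b)$ on $\mL_6$, completing Assertions~3 and~4.

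The principal obstacle I anticipate is not the algebra itself but the bookkeeping around removable singularities on $\ml$, where all of $w_1,w_2,z_1,z_2$ vanish simultaneously and the rational defining equations of $\mn,\mo$ cannot be evaluated naively. This forces one either to clear denominators by suitable multiplications or, more cleanly, to rely throughout on the smooth parametric presentations of the critical subsystems via $L$ and $U$; once that framework is consistently set up, the remaining verifications are elementary manipulations with trigonometric and polynomial identities.
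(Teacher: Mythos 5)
Your overall strategy --- direct substitution of the explicit pendulum parametrizations into the defining equations of Theorem~\ref{theo1}, passing to closures where the rational invariant relations degenerate --- is the right one, and it is essentially all the paper has behind this proposition (no formal proof is given there; the assertions are presented as consequences of the explicit descriptions). Your treatment of Assertions~1 and~5 is sound: on $\mL_1$ the condition $Z_1=0$ reads $\dot\theta^2+a-b\cos\theta=0$, which is impossible since $a>b$; on $\mL_{5,6}$ it reads $(a\mp b)\cos\theta=(a\mp b)\sin\theta=0$, also impossible; on $\mL_2,\mL_3,\mL_4$ it pins the pendulum energy to $h=2a,-2b,2b$, i.e.\ exactly $\tau_3,\tau_1,\tau_2$; and $w_1=w_2=z_1=z_2=0$ is literally $\omega_1=\omega_2=\alpha_3=\beta_3=0$, which with \eqref{eq2_2} recovers $\mL_5\cup\mL_6$.

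The execution of Assertions~2--4, however, rests on a misidentification that would make the verification fail. You propose to test membership in $\mn$ by the vanishing of $L$ and membership in $\mo$ by $U=2S^4-2HS^3+a^2b^2=0$. But $L$ and $U$ are the Poisson brackets $\{F_1,F_2\}$ and $\{R_1,R_2\}$ (up to factors): their zero sets cut out the \emph{degeneration loci} $Q_2=\mn\cap\{L=0\}$ and $Q_3=\mo\cap\{U=0\}$, which are proper subsets of $\mn$ and $\mo$, not the subsystems themselves. Your claims that ``$L$ and $U$ vanish identically along'' $\mL_1,\ldots,\mL_4$ and that ``on $\mL_6$, $U$ vanishes identically'' are false and contradict the paper's own statements that $Q_2$ meets the $\mL_j$ only in $\tau_5,\tau_6$ and that $Q_3$ does not meet $\mL_6$ at all; a direct check on $\mL_6$, where $hs=s^2-ab$, gives $U=2abs^2+a^2b^2>0$. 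Membership in $\mn$ and $\mo$ must instead be tested on $F_1=F_2=0$ and $R_1=R_2=0$ (cleared of denominators, or via the parametric presentations), and the threshold $|h|\gs 2\sqrt{ab}$ for $\mL_5\subset\mo$ comes from the solvability of $s^2-hs+ab=0$, i.e.\ from the limiting value of the partial integral $S$ on $\mL_5$ (the relation $h=s+ab/s$ defining $\lambda_5$), not from $U=0$, whose real-root threshold is a different number. Relatedly, reducing $\mL_{5,6}\cap\mn$ to ``the zero set of $L$'' and then to the rank-zero points understates the answer: the proposition asserts the intersection consists of the whole bifurcational fibers at $h=\pm(a+b)$ and $h=\pm(a-b)$, which at the upper levels include the homoclinic trajectories, not only the equilibria $\imm_i$. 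Once $L,U$ are replaced by the correct invariant relations, the rest of your computation goes through as you describe.
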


Let us denote the images of the families $\mL_i$ under the momentum maps by $\lambda_i$ ($i=1,\ldots,6$). These are one-dimensional objects. The image points of the special trajectories $\tau_i$ will be denoted by $e_i$ ($i=1,\ldots,8$).

Let us now collect the information on the images of the above mentioned special points and subsets of the phase space.

{\renewcommand{\arraystretch}{1.5}
\begin{table}[!htbp]
\centering
\begin{tabular}{|c|c|c|c|c|c|c|}
\multicolumn{7}{r}{\fts{Table \myt\label{tabee}}}\\
\hline
${}$ & ${h}$ & ${k}$ & ${g}$ & ${f}$ & ${m}$ & ${s}$ \\
\hline
${e_1}$ & ${-2b}$ & ${0}$ & ${-b p^2}$ & ${0}$ & ${0}$ & ${-b}$ \\
\hline
${e_2}$ & ${2b}$ & ${0}$ & ${bp^2}$ & ${0}$ & ${0}$ & ${b}$ \\
\hline
${e_3}$ & ${2a}$ & ${0}$ & ${ap^2}$ & ${0}$ & ${0}$ & ${a}$ \\
\hline
${e_4}$ & ${h(s_0)}$ & ${0}$ & ${g(s_0)}$ & ${f(s_0)}$ & ${-}$ & ${s_0}$ \\
\hline
${e_5}$ & ${\ds \frac{a^2+3b^2}{2b}}$ & ${\ds \frac{r^4}{4 b^2}}$ & ${\ds \frac{p^4-b^2 r^2}{2b}}$ & ${-}$ & ${-\ds \frac{1}{2b}}$ & ${b}$ \\
\hline
${e_6}$ & ${\ds \frac{3a^2+b^2}{2a}}$ & ${\ds \frac{r^4}{4a^2}}$ & ${\ds \frac{p^4+a^2 r^2}{2a}}$ & ${-}$ & ${-\ds \frac{1}{2a}}$ & ${a}$ \\
\hline
${e_7}$ & ${-2\sqrt{ab}}$ & ${(a-b)^2}$ & ${-2(ab)^{3/2}}$ & ${-}$ & ${-}$ & ${-\sqrt{ab}}$ \\
\hline
${e_8}$ & ${2\sqrt{ab}}$ & ${(a-b)^2}$ & ${2(ab)^{3/2}}$ & ${-}$ & ${-}$ & ${\sqrt{ab}}$ \\
\hline
${e_9}$ & ${p\sqrt{2}}$ & ${\ds \frac{r^4}{2p^2}}$ & ${\ds \frac{2p^4+r^4}{2p\sqrt{2}}}$ & ${-}$ & ${-\ds \frac{1}{p\sqrt{2}}}$ & $\ds{\frac{p}{\sqrt{2}}}$ \\
\hline
\end{tabular}\,
\end{table}
}

In Table~\ref{tabee}, we give the values of all first integrals at the points of the trajectories $\tau_i$, i.e., the coordinates of the points $e_i$ in all image spaces of the momentum maps.  Here for the point $e_4$ the coordinates $h,g$ and $f$ are calculated by the formulas \eqref{eq3_20} for $\delta_3$ with the above defined value $s_0$.

\begin{remark}\label{rem2}
In Table~$\ref{tabee}$, we add a new point $e_9$. It corresponds to the minimal value of the energy $H$ on the set of degenerate critical points $Q_2$. Unlike the sets $Q_1$ and $Q_3$, where all extremal energy values provide the appearance of degenerate periodic orbits, i.e., correspond to bifurcations of degenerate two-dimensional tori, on the set $Q_2$ the minimal value of $H$ is reached without any additional bifurcations of the degenerate critical tori of rank 2. Nevertheless this value is essential for the classification of the iso-energy invariants.
\end{remark}

The images of the sets $Q_i$ of degenerate critical points of rank 1 and 2 under the map $\mF$ have the form
\begin{equation*}
\begin{array}{l}
  \Delta_1: \left\{\begin{array}{l} k=0, \\
2g=p^2h
\end{array}\right., \quad h \geqslant -2b.
 \\
  \Delta_2: \left\{\begin{array}{l}
k=\ds{\frac{1}{r^4}(2g-p^2h)^2} \\
g=g_{\pm}(h)=\ds{\frac{1}{4p^2}} \left[(2p^4-r^4)h \pm r^4 \sqrt{\mstrut h^2-2p^2} \right]
\end{array}\right., \quad h \geqslant \sqrt{\mstrut 2p^2}. \\
\Delta_3: \left\{\begin{array}{l}
h=\ds{\frac{a^2b^2+3 s^4}{2 s^3}}\\
k=\ds{-\frac{3s^2}{4}+ a^2+b^2 -\frac{3a^2b^2}{2 s^2}+\frac{a^4b^4}{4 s^6}} \\
g=\ds{\frac{3a^2b^2+s^4}{2s}}
\end{array}\right., \quad 0 < s \leqslant s_0.
\end{array}
\end{equation*}
The equations for the sets $\Delta_i$ in the image planes of the partial momentum maps are given in Table~\ref{tabdel}.

{
\begin{table}[!htbp]
\centering
\begin{tabular}{|c|c|c|c|}
\multicolumn{4}{r}{\fts{Table \myt\label{tabdel}}}\\
\hline
{} & \begin{tabular}{c}In ${\bR}^2(f^2,h)$ \end{tabular} &
\begin{tabular}{c}In ${\bR}^2(m,h)$\end{tabular} &
\begin{tabular}{c}In ${\bR}^2(s,h)$\end{tabular}\\
\hline
$\Delta_1$ & \begin{tabular}{l}$f=0$,\; $h\geqslant -2b$\end{tabular} &
\begin{tabular}{l}$m=0$,\; $h\geqslant -2b$ \end{tabular} &
$e_1(-b,-2b),e_2(b,2b),e_3(a,2a)$\\
\hline
$\Delta_2$ & -- & \begin{tabular}{l}$2p^2m^2+2h m+1 =0,$\;$m<0$\end{tabular}&
\begin{tabular}{l}$2s^2-2hs+p^2=0,$\;$s>0$\end{tabular}\\
\hline
$\Delta_3$ & -- & -- &
\begin{tabular}{l}$3s^4-2hs^3+a^2b^2=0,$\;$s\in(0,s_0]$\end{tabular}\\
\hline
\end{tabular}
\end{table}
}

On the families ${\mL}_j$, the values of the first integrals fill the curves $\lambda_j$. In  $\bR^3(h,k,g)$, the equations of these curves are
\begin{equation*}
\begin{array}{l}
\lambda_{1,2}=\{g = a^2 h\pm a r^2,k=(h\pm
2a)^2,h \geqslant \mp(a\pm b)\},\\
\lambda_{3,4}=\{g = b^2 h\mp b r^2, k=(h\pm
2b)^2, h \geqslant -(a\pm b)\},\\
\lambda_{5,6}=\{g=\pm abh,k=(a\mp b)^2,h\geqslant -(a\pm b)\}.
\end{array}
\end{equation*}
On the plane $\bR^2(m,h)$ the images of $\mL_1,\ldots,\mL_4$ under $\mF_2$ are the half-lines
\begin{equation*}
\begin{array}{l}
\lambda_{1,2}=\{h = r^2 m \mp 2a, h \geqslant \mp(a\pm b)\},\qquad
\lambda_{3,4}=\{h =-r^2 m \mp 2b, h \geqslant -(a\pm b)\}.
\end{array}
\end{equation*}
On the plane $\bR^2(s,h)$ the images of all six families under $\mF_3$ are
\begin{equation*}
\begin{array}{ll}
\lambda_{1,2}=\{s = \mp a, h \geqslant \mp(a\pm b)\}, &
\lambda_{3,4}=\{s = \mp b, h \geqslant -(a\pm b)\},\\[1mm]
\lambda_{5}=\{h=s + \ds \frac{ab}{s}, s\in [-a,-b] \cup (0,+\infty)\}, &
\lambda_{6}=\{h=s - \ds \frac{ab}{s}, s\in [-a,0) \cup [b,+\infty)\},\\
\end{array}
\end{equation*}
In all these formulas the upper sign corresponds to the family with the first number.

Collecting the images on the planes under the partial momentum maps $\mF_1,\mF_2,\mF_3$ of the specific sets of critical points described above, we obtain the bifurcation diagrams $\Sigma_1^*, \Sigma_2^*, \Sigma_3^*$ of the subsystems $\mm,\mn,\mo$ respectively. They are shown in Fig.~\ref{fig_bifset1} -- \ref{fig_bifset3}. Let us give some comments on these figures.
\begin{enumerate}
 \item The typical inverse image of the points on the curves $\Delta_1$ and $\Delta_2$ in the critical set, as it was already mentioned, consists of degenerate critical points of rank 2, which normally are not bifurcational inside a critical subsystem with two degrees of freedom. Nevertheless, with this choice of the partial integrals, $\Delta_1$ and $\Delta_2$ become the outer boundaries of the regions of the existence of motions for the subsystems $\mm$ and $\mn$, therefore they are the parts of the bifurcation diagrams. The phenomena taking place in the neighborhood of the pre-images of these curves inside the subsystems $\mm$ and $\mn$ are in details studied in \cite{Zot1,KhGEOPHY}. In the third subsystem $\mo$ the curves $\Delta_2,\Delta_3$ do not cause any bifurcations with respect to $\mo$. But they are the images of degenerate points of the complete momentum map and thus separate the points of $\mo$ having different outer type. Thus, these sets are also shown in Fig.~\ref{fig_bifset3} (with dashed lines).

 \item The diagrams $\Sigma_i^*$ ($i=1,2,3$) split the points of rank 2 regular with respect to the subsystems $\mm,\mn$ and $\mo$ into definite classes (for $\mo$ we add to the separating set also the curves $\Delta_2,\Delta_3$). These classes (to be more exact, the subregions arising in the image of the partial momentum map and corresponding to non-empty integral manifolds) are denoted for the first three subsystems by the symbols, respectively, $a,b,c$ supplied with indices. The connected components of the supplement to bifurcation diagrams are usually called \textit{chambers} \cite{BolFom}. Here for the partition into chambers we also take into account the images of the 2-tori degenerate in $\mP^6$.

 \item The diagram $\Sigma_4^*$ consists of the isolated energy values separating different types of periodic solutions. These values are $h=\pm a\pm b$ at the points of rank 0 and $h=\pm 2\sqrt{ab}$; the latter separate the motions in the family $\mL_5$ belonging to to the subsystem $\mo$ and isolated from another critical subsystems. For the family $\mL_5$ the image segment which is not shown in the figures and correspond to the values $h\in(-2\sqrt{ab},2\sqrt{ab})$ is denoted by $\lambda_{50}$.

\end{enumerate}

\begin{remark}\label{remsegm}
The curves $\Delta_i$, $\lambda_j$ are divided into different segments by the node points $e_1,\ldots,e_9$. A similar division on the curve $\delta_3$ is generated by the point $e_4$. We will mark such segments by putting a second index after the curve number.
\end{remark}

With the above information, let us give an explicit description of the bifurcation diagram $\Sigma$ of the map $\mF$. Introduce some notation.

On the curves \eqref{eq3_20}, we define the inversions of the dependencies $h(s)$ on monotonous segments:
\begin{equation*}
\begin{array}{llll}
\delta_1:& s=s_1(h), & h\in[-2b,+\infty), & s_1(h)\in
[-b,0),\\
\delta_2: & s=s_2(h), & h \geqslant 2b, & s_2(h)\in
(0,b],\\
\delta_{31}: & s=s_{31}(h), & h \in [h_0,2a], & s \in [a, s_0],\\
\delta_{32}: & s=s_{32}(h), & h \in [h_0,+\infty), & s \in
[s_0+\infty).
\end{array}
\end{equation*}
Here $h_0$ is the value $h(s_0)$ on the curve $\delta_3$. We now obtain the equation \eqref{eq3_21} for the value $s_0$ from the condition for the minimum of $h$ on the curve $\delta_3$ in the following form
\begin{equation}\label{eq3_27}
\frac{1}{s}\sqrt{(s^2-a^2)(s^2-b^2)}=\sqrt{s^2-b^2}-\sqrt{s^2-a^2}.
\end{equation}
Here the fact that it has a unique solution at $s>a$ is obvious.

From equations \eqref{eq3_9} for $\pov_1$ we find the dependency on~$\delta_1$:
$$
g = g_1(h) = s^3+\frac{ab}{s}-s^2 \phi(s)|_{s=s_1(h)}, \quad h
\geqslant -2b.
$$
Considering the intervals where $h(s)$ is monotonous on the curves $\lambda_5 -
\lambda_6$, we denote
$$
\displaystyle{s_{51}(h)= \frac{h - \sqrt{h^2-4ab}}{2},}\quad
\displaystyle{s_{52}(h)= \frac{h + \sqrt{h^2-4ab}}{2},}\quad
\displaystyle{s_6(h)= \frac{h + \sqrt{h^2+4ab}}{2}}.
$$
Now the bifurcation diagram is completely described by the following theorem \cite{Kh36}, which is formulated in such a way that all the conditions for the parameters on the surfaces are explicit inequalities when the energy value $h$ is fixed.

\begin{theorem}\label{theo2}
$1.$ The set $\Sigma_1=\pov_1 \cap \Sigma$ has the form
\begin{equation*}
\left\{\begin{array}{l}
         h \geqslant -2b \\
         k = 0,\quad  g_1(h) \leqslant g \leqslant \frac{1}{2}p^2h
       \end{array}
   \right. .
\end{equation*}
$2.$ The set $\Sigma_2=\pov_2 \cap \Sigma$ lies in the half-space $h \geqslant -(a+b)$ and is described by the following collection of the systems of inequalities:
\begin{equation*}
\left\{
\begin{array}{l}
-(a+b)\leqslant h \leqslant p\sqrt{2}\\
b^2 h -b r^2\leqslant g \leqslant a^2 h +a r^2
\end{array}
\right.; \quad
\left\{
\begin{array}{l}
h \geqslant p\sqrt{2}\\
b^2 h -b r^2 \leqslant g \leqslant g_{-}(h)
\end{array}
\right.; \quad
\left\{
\begin{array}{l}
h \geqslant p\sqrt{2}\\
g_{+}(h) \leqslant g \leqslant a^2 h +a r^2
\end{array} \right. .
\end{equation*}
$3.$ The set $\Sigma_3=\pov_3 \cap \Sigma$ is completely described by the following collection of conditions in the $(s,h)$-plane. For the negative values of $s$
\begin{equation*}
\left\{
\begin{array}{l} -(a+b)\leqslant h \leqslant -2 \sqrt{ab}\\
s \in [-a,s_{51}(h)] \cup [s_{52}(h),-b]
\end{array}
\right.; \quad
\left\{
\begin{array}{l} -2 \sqrt{ab}\leqslant h \leqslant -2b\\
s \in [-a,-b]
\end{array}
\right.; \quad
\left\{
\begin{array}{l} h >  -2b\\
s \in [-a,s_1(h)]
\end{array}
\right..
\end{equation*}
For the positive values of $s$
\begin{equation*}
\begin{array}{ll}
\left\{
\begin{array}{l} -a+b\leqslant h \leqslant 2b\\
s \in [b,s_6(h)]\end{array}
\right.;
&
\left\{
\begin{array}{l} 2b \leqslant h \leqslant h_0\\
s \in [s_2(h),s_6(h)]
\end{array}
\right.; \\[4mm]
\left\{
\begin{array}{l} h_0 \leqslant h \leqslant 2a\\
s \in [s_2(h),s_{31}(h)]\cup[s_{32}(h),s_6(h)]
\end{array}
\right.;
&
\left\{
\begin{array}{l} h > 2a\\
s \in [s_2(h),a]\cup[s_{32}(h),s_6(h)]
\end{array}
\right..
\end{array}
\end{equation*}
$4.$ The set $\Sigma_4=\pov_4 \cap \Sigma$ consists of two half-lines
\begin{equation*}
\left\{
\begin{array}{l}
h \geqslant  - (a + b)\\
g = abh,\quad k = (a - b)^2
\end{array}
\right.;
\quad
\left\{
\begin{array}{l}
h \geqslant  - a + b\\
g =  - abh,\quad  k = (a + b)^2
\end{array}
\right. .
\end{equation*}
\end{theorem}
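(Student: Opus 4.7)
The plan is to treat each of the four critical subsystems separately and verify that the image of $\mM_i$ under $\mF$, viewed as a region of the algebraic surface $\pov_i$ in \eqref{eq3_9}, is described by the inequalities in part $i$ of the theorem. In each case $\pov_i$ is globally parametrized by either $(h,f^2)$, $(h,m)$ or $(h,s)$, so the task reduces to determining, for every fixed $h$, the range of the second coordinate for which the pre-image in the critical subsystem is non-empty. The boundary of this range is always built from the curves already identified in Section~\ref{sec3}: the rank-one periodic families $\delta_i$ and $\lambda_j$, and the degenerate sets $\Delta_i$.

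First I would handle $\mm$. With the parametrization $(h,f^2)$, the surface $\pov_1$ is $k=0$, $g=\tfrac{1}{2} p^2 h-\tfrac{1}{4} f^2$, so $g$ varies in an interval whose upper endpoint corresponds to $f=0$, giving $g=\tfrac{1}{2} p^2 h$; this boundary is the degenerate curve $\Delta_1$. The lower endpoint corresponds to the maximum of $f^2$ on the fiber over $h$, which is attained on one of the rank-one curves $\delta_i$; using \eqref{eq3_20} together with the inversion $s=s_1(h)$ one reads off the function $g_1(h)$, and the minimum value of $h$ on $\mm$ is $-2b$, attained at $e_1$. For $\mn$ I parametrize $\pov_2$ by $(h,m)$ and determine the admissible range of $m$ from the images $\lambda_1,\ldots,\lambda_4$ and $\Delta_2$ as given in Tables~\ref{tabee}--\ref{tabdel}; on the $(m,h)$-plane these are four half-lines and one hyperbola. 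Direct inspection shows that for $-(a+b)\leqslant h\leqslant p\sqrt{2}$ the $m$-interval cut out by $\lambda_{1,\ldots,4}$ has no interior obstruction, yielding the first system of part~2, whereas for $h\geqslant p\sqrt{2}$ the hyperbola $2p^2m^2+2hm+1=0$ enters and splits the range of $m$ into two components, producing the remaining two systems.

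Part 3 ($\Sigma_3$) proceeds by the same recipe in the $(s,h)$-plane, with boundaries given by the curves $\lambda_5,\lambda_6$ from Proposition~\ref{prop1}, the vertical lines $\lambda_{3,4}$ at $s=\mp b$, and the rank-one curve $\delta_3$ with its branch inverses $s_{31},s_{32}$ meeting at the cusp $e_4$; splitting by the critical energy values $-(a+b)$, $-2\sqrt{ab}$, $-2b$, $-a+b$, $2b$, $h_0$, $2a$ gives the listed collection of systems. Part 4 ($\Sigma_4$) is immediate from the one-degree-of-freedom pendulum equations written for $\mL_5,\mL_6$ in Section~\ref{sec3}: the minimum-energy equilibria lie at $h=-(a+b)$ and $h=-(a-b)$, while the linear relations $g=\pm abh$, $k=(a\mp b)^2$ follow by substituting $\boldsymbol\omega=\dot\theta\,\mbe_3$ and the stated $\boldsymbol\alpha,\boldsymbol\beta$ into \eqref{eq2_5}--\eqref{eq2_6}.

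The main obstacle is part 3: the combinatorial structure of the admissible region in the $(s,h)$-plane is intricate, with the boundary switching among several curves as $h$ crosses the nodal values. Special care is needed near the cusp $e_4$, whose $s$-coordinate $s_0$ is defined by \eqref{eq3_21}; the equivalent form \eqref{eq3_27} makes uniqueness on $s>a$ transparent, but matching the appropriate branches $s_{31}(h)$ and $s_{32}(h)$ with the corresponding pieces of $\lambda_5,\lambda_6$ to get the correct $s$-intervals still requires a careful case analysis. The final verification, namely that every $(s,h)$ in the listed regions actually lifts to a real critical point of $\mo$, rests on the explicit parametric equations of $\mo$ from \cite{Kh2007} and is carried out in full in \cite{Kh36}, which is the reference cited in the theorem.
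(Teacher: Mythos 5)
Your outline matches the paper's treatment: the paper states Theorem~\ref{theo2} without an internal proof, attributing it to \cite{Kh36}, and everything it does supply (the parametrizations of the $\pov_i$ by $(f^2,h)$, $(m,h)$, $(s,h)$, the boundary curves $\delta_i$, $\lambda_j$, $\Delta_i$, the branch inversions $s_1,s_2,s_{31},s_{32},s_{51},s_{52},s_6$, and the separating $h$-values read off from the points $e_i$ and $P_i$) is exactly the scaffolding you assemble. Your proposal is therefore the same fiber-over-$h$ strategy, correctly identifying which curve bounds each piece of $\Sigma_i$, with the remaining case analysis for $\Sigma_3$ and the non-emptiness of the pre-images deferred to \cite{Kh36} just as the paper does.
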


\begin{remark}\label{remtheo2}
This theorem provides a possibility to present in details any cross section of the diagram $\Sigma$ by the planes of fixed $h$ and to trace with the computer graphics the evolution of these cross section with the energy change. The boundary conditions for the value of $h$ are separating values of the energy. In particular, the conditions for $\Sigma_2$ make it clear why the point $e_4$ was introduced as the extremal value of $h$ on $\Delta_2$ $($see Remark~$\ref{rem2})$. Nevertheless, the boundary values of $h$ do not necessarily form the complete separating set. More exact statement will be given below.
\end{remark}

%\clearpage

\section{Classification of critical points by type}\label{sec4}
All necessary definitions dealing with the notion of a non-degenerate critical point and the type of a critical point are given in \cite{BolFom}. Let us give one comment on the terminology used here.

In the topological analysis of the systems with two degrees of freedom $\cite{BolFom}$ the following terms are used: a 3-atom to describe bifurcations in the neighborhood of a non-degenerate point of rank $1$ (it is a connected component of the pre-image of a small segment transversal to a smooth segment of the bifurcation diagram; this component is foliated into Liouville 2-tori with one singular fiber) and a 4-atom to describe a saturated neighborhood of a non-degenerate point of rank $0$ (usually in terms of an almost direct product of some atoms from the systems with one degree of freedom).
\begin{remark}\label{remt}
Dealing with a system with three degrees of freedom we fix the term 3-atom to characterize a bifurcation in the neighborhood of a point of rank $1$ \emph{inside} the corresponding critical subsystem with two degrees of freedom, while a 4-atom will always mean a bifurcation in the complete six-dimensional space in the neighborhood of a non-degenerate point of rank $2$.
\end{remark}

Thus, we define a 4-atom as a foliated into Liouville 3-tori with one singular fiber connected component of the pre-image of a small segment transversal to a smooth leave of the bifurcation diagram of the momentum map $\mF:\mP^6\to \bR^3$. Studying the iso-energy diagrams we may consider such a small segment belonging to the corresponding fixed energy level, since at any non-degenerate point of rank 2 the Hamiltonian is a regular function.

At four points of rank 0 (singular points of the initial system) the Hamiltonian $H$ is a Morse function and, as it was already mentioned, ${\rm ind}\,H(p_i)=i$. This fact almost completely determines the character of the system in the neighborhood of these points. Nevertheless, the strict classification demands establishing the types of them as the critical points of the momentum map.
\begin{theorem}\label{theo3}
All critical points of rank $0$ are \textit{non-degenerate} in $\mP^6$. Moreover, $\imm_0$ has the type ``center-center-center'', $\imm_1$ has the type ``center-center-saddle'', $\imm_2$ has the type ``center-saddle-saddle'', and $\imm_3$ has the type ``saddle-saddle-saddle''.
\end{theorem}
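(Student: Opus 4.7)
The plan is to reduce Theorem~\ref{theo3} to a direct symplectic linearization at each of the four equilibria $\imm_0,\dots,\imm_3$ and then to read off the type from the eigenvalue structure of the linearized flows.

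First I would fix a rank-0 point $\imm_i$ with $\alpha_1=\epsilon_1 a$, $\beta_2=\epsilon_2 b$, $\epsilon_{1,2}=\pm1$, and all other phase variables vanishing. Near $\imm_i$ the three Casimir relations \eqref{eq2_2} are solvable for three dependent coordinates (e.g.\ $\alpha_1,\beta_2,\alpha_2$) as smooth functions of the remaining six, providing local coordinates $u=(\omega_1,\omega_2,\omega_3,\alpha_3,\beta_1,\beta_3)$ on $\mP^6$. The Lie--Poisson bracket of $e(3,2)^*$ restricts to a non-degenerate bracket on this chart and yields a symplectic structure in which I can compute the linearizations explicitly.

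Next I would expand $H$, $K$, $G$ to second order in $u$ at $\imm_i$ and examine the linearized Hamiltonian operator $A_H$. The key observation is that $H$ is naturally mechanical: $H=T+V$ with $T=\frac12(2\omega_1^2+2\omega_2^2+\omega_3^2)$ positive definite in the ``velocity'' variables $\boldsymbol\omega$ and $V=-\alpha_1-\beta_2$ depending only on the ``position'' variables. This forces $A_H$ to have eigenvalues $\pm\lambda_j$ with $\lambda_j^2$ real and with the same signs as the eigenvalues of $-T''{}^{-1}V''|_{\imm_i}$. Since \cite{KhZot} already asserts $\operatorname{ind}H(\imm_i)=i$ and $T''$ is positive definite, $V''$ must have exactly $i$ negative eigenvalues, which gives $i$ hyperbolic (saddle) pairs and $3-i$ elliptic (center) pairs. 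Distinctness of the three values $\lambda_j^2$ in the whole open region $a>b>0$ I would verify by writing out $V''$ explicitly at each of the four points (it turns out to be block-diagonal with blocks of size $1$ and $2$, and the distinctness becomes an elementary inequality in $a,b$).

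To upgrade this from ``equilibrium of $X_H$'' to ``non-degenerate rank-0 point of the complete momentum map $\mF$'' in the sense of \cite{BolFom}, I would check that the three commuting operators $A_H,A_K,A_G$ span a three-dimensional commutative subalgebra of $sp(6,\bR)$ consisting of semisimple elements, i.e.\ a Cartan subalgebra. Commutativity is automatic from $\{H,K\}=\{H,G\}=\{K,G\}=0$; semisimplicity of $A_H$ follows from the distinctness of $\lambda_j^2$ above; and linear independence of $A_H,A_K,A_G$ is equivalent to the three eigenvalue triples $(\lambda_j^H,\lambda_j^K,\lambda_j^G)$ being pairwise distinct. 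I would verify this by computing the Hessians of $K$ and $G$ at each $\imm_i$ from the explicit formulas \eqref{eq2_6} and diagonalizing them simultaneously with $\mathrm{Hess}(H)$, using the common block decomposition inherited from the natural-mechanical splitting.

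The main obstacle is the last step: evaluating $\mathrm{Hess}(K)$ and $\mathrm{Hess}(G)$ at the four points is a bulky linear-algebra computation (both integrals contain quartic terms that contribute to the Hessian through the nontrivial quadratic dependence of the eliminated Casimir variables on $u$), and the distinctness of eigenvalue triples must hold throughout $a>b>0$, not merely generically. Once this is carried out, the four types follow uniformly from the four signatures of $V''$: index $0$ at $\imm_0$ gives center-center-center, index $1$ at $\imm_1$ gives center-center-saddle, index $2$ at $\imm_2$ gives center-saddle-saddle, and index $3$ at $\imm_3$ gives saddle-saddle-saddle, matching the statement.
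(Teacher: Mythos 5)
Your overall architecture --- linearize at the four equilibria, read the spectrum of $A_H$ from the natural--mechanical splitting $H=T+V$ and the Morse indices of \cite{KhZot}, then upgrade to non-degeneracy of the momentum map by exhibiting a Cartan subalgebra --- is essentially the route the paper takes (it defers to the explicit characteristic polynomials of $A_H$ computed in \cite{RyKh2012}), and reducing the count of center/saddle pairs to $\operatorname{ind}V''$ is a genuine simplification. However, there is a concrete false step. At $\imm_i$ one has $\boldsymbol\alpha=(\eps_1a,0,0)$, $\boldsymbol\beta=(0,\eps_2b,0)$ with $\eps_{1,2}=\pm1$, and the linearization of \eqref{eq2_1} splits into three blocks giving
\begin{equation*}
\lambda^2\in\Bigl\{-\tfrac{\eps_2 b}{2},\ -\tfrac{\eps_1 a}{2},\ -(\eps_1 a+\eps_2 b)\Bigr\}.
\end{equation*}
At $\imm_1$ ($\eps_1=+1$, $\eps_2=-1$) the second and third values are $-a/2$ and $-(a-b)$, which coincide exactly when $b=a/2$; the same happens at $\imm_2$. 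So your claim that distinctness of the $\lambda_j^2$ holds ``in the whole open region $a>b>0$'' and ``becomes an elementary inequality'' is false on the codimension-one set $b=a/2$, which lies inside the region covered by the theorem. This is not cosmetic: your identification of the Cartan subalgebra as the centralizer of the regular element $A_H$, and your reading of the type off the eigenvalues of $A_H$ alone (a repeated imaginary or repeated real pair of a non-regular $A_H$ is a priori compatible with a focus--focus block), both collapse precisely there. To close the gap you must either produce a regular element of the span of $A_H,A_K,A_G$ at $b=a/2$ --- which forces you to actually carry out the Hessian computations for $K$ and $G$ that you deferred as ``the main obstacle'' --- or establish non-degeneracy there by other means and then transport the type from nearby parameter values by continuity.

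A secondary slip: linear independence of $A_H,A_K,A_G$ inside the Cartan subalgebra is \emph{not} equivalent to the three eigenvalue triples $(\lambda_j^H,\lambda_j^K,\lambda_j^G)$ being pairwise distinct. Pairwise distinctness of these columns is necessary but not sufficient for the $3\times3$ matrix $\bigl[\lambda_j^F\bigr]$ to be nonsingular, and it is the nonsingularity (together with semisimplicity) that gives a three-dimensional toral, hence Cartan, subalgebra.
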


{\renewcommand{\arraystretch}{1.5} \setlength{\extrarowheight}{-2pt}
\begin{table}[!ht]
\centering
\small
\begin{tabular}{|c|c|c|c|c|}
\multicolumn{5}{r}{\fts{Table \myt\label{tab41}}}\\
\hline $\mK^0$
&
\begin{tabular}{c}Image in\\ ${\bR}^3(h,k,g)$ \end{tabular}
& Type in $\mn$
& \begin{tabular}{c}Image in\\ ${\bR}^2(s,h)$ \end{tabular}
& Type in $\mo$ \\
\hline
$\imm_0$ &$P_0$ &center-center & \begin{tabular}{l}$P^-_{0}$\\$P^+_{0}$\end{tabular}
& \begin{tabular}{l} center-center\\ center-center \end{tabular}
\\
\hline
$\imm_1$ &$P_1$ &center-saddle
& \begin{tabular}{l} $P^-_{1}$\\$P^+_{1}$ \end{tabular}
& \begin{tabular}{l} center-saddle\\ center-center \end{tabular}
\\
\hline
$\imm_2$ &$P_2$ &center-saddle
& \begin{tabular}{l} $P^-_{2}$\\$P^+_{2}$ \end{tabular}
& \begin{tabular}{l} saddle-saddle\\ center-saddle \end{tabular}
\\
\hline
$\imm_3$ & $P_3$ &saddle-saddle
& \begin{tabular}{l} $P^-_{3}$\\$P^+_{3}$ \end{tabular}&
\begin{tabular}{l} saddle-saddle\\ saddle-saddle \end{tabular}
\\
\hline
\end{tabular}
\end{table}
}

For the proof, in \cite{RyKh2012} the explicit form of the characteristic polynomials of the symplectic operator $A_H$ are calculated at the points $\imm_i$.

Note that at the points $\imm_i$, three local critical subsystems meet, namely, the subsystem $\mn$ and the two parts of the subsystem $\mo$ (recall that this subsystem has at these points a singularity of the self-intersection type). In particular, on the $(s,h)$-plane each point $\imm_i$ is represented by two points. In this sense, each point of the pair $P^{\pm}_{i}$ of the diagram $\Sigma_3^*$ has its own type (the type of the point $\imm_i$ with respect to some chosen smooth part of $\mo$ in the neighborhood of $\imm_i$). The corresponding description of the ctical points of rank 0 in the critical subsystems $\mn$ and $\mo$ is given in Table~\ref{tab41}.

%%%%%%%%%%%%%%%%%%%%%%%%%%%%%%%%%%%%%%%%%%%%%%%%%%%%%%%
%%%%%%%%%%%%%%%%%%%%%%%%%%%%%%%%%%%%%%%%%%%%%%%%%%%%%%%%%%%
%\vspace{5mm}

Now let us turn to the points of rank 1 organized into special periodic trajectories.

\begin{theorem}\label{theo4}
The points of rank $1$ forming the families of trajectories $\mL_i$ и $\mD_j$ ${(i=1,\ldots,6;}$ $j=1,2,3)$ are non-degenerate as the singularities of the map $\mF$ except for the following values of the energy:

{\parskip=2mm

on $\mD_1$: $h = -2b$;

on $\mD_2$: $h = 2b$;

on $\mD_3$: $h = 2a$, $h=h(s_0)$;

on $\mL_2$: $h = 2a$, $h={\frac{3a^2+b^2}{2a}}$;

on $\mL_3$: $h = -2b$;

on $\mL_4$: $h = 2b$, $h={\frac{a^2+3b^2}{2b}}$;

on $\mL_5$: $h =\pm 2\sqrt{a b}$.

}

Depending on the family and the energy value, the type of non-degenerate singularities in $\mP^6$ and the corresponding 3-atoms in the critical subsystems are given in Table~$\ref{tab42}$.
\end{theorem}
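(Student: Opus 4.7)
The plan is to reduce Theorem~\ref{theo4} to a spectral calculation for linearized Hamiltonian vector fields, in the spirit of the rank-$0$ analysis carried out in \cite{RyKh2012}. Fix a point $x$ on one of the families $\mL_i$ or $\mD_j$. Since $\rk d\mF(x)=1$, one may find two independent linear combinations $\Phi_1,\Phi_2$ of the integrals $G,K,H$ whose differentials vanish at $x$; the coefficients are read off directly from the defining equations of the surfaces $\pov_j$ containing the image of the trajectory, namely \eqref{eq3_9} and the explicit linear equations for $\lambda_i$ given just before Theorem~\ref{theo2}. The Hamiltonian vector fields of $\Phi_1,\Phi_2$ vanish at $x$, so their linearizations $A_{\Phi_1},A_{\Phi_2}$ are well-defined symplectic operators on $T_x\mP^6$ and form a two-parameter family $A(\mu_1,\mu_2)=\mu_1 A_{\Phi_1}+\mu_2 A_{\Phi_2}$. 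By symplecticity, the characteristic polynomial of $A(\mu_1,\mu_2)$ factors as $\lambda^2\cdot Q(\lambda^2;\mu_1,\mu_2)$, the factor $\lambda^2$ accounting for the tangent direction of the orbit and its symplectic conjugate, while $Q$ is a quadratic in $\lambda^2$ whose two roots, lifted back to $\lambda$, classify the type of $x$ as center-center, center-saddle, saddle-saddle, or focus-focus in the standard way.

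Rather than compute $A(\mu_1,\mu_2)$ from scratch for each family, I would exploit the inclusion of each trajectory into one or more of the critical subsystems $\mM_k$ listed in Proposition~\ref{prop1}. At $x\in\mM_k$ the point is a rank-$1$ critical point of the $2$-DOF partial momentum map $\mF_k$, so one pair of roots of $Q$ is already controlled by the \emph{internal} type of $x$ in $\mM_k$, known from \cite{Zot,KhSav,Kh2007}; the remaining \emph{external} pair comes from the two directions in $T_x\mP^6$ transverse to $T_x\mM_k$ and is computed via the gradients of the defining equations of $\mM_k$ in the complex variables $w_i,x_i,y_i,z_i$ of Section~\ref{sec3}, exactly as in the external-type calculation of \cite{RyKh2012}. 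The pendulum parametrizations of $\mL_i$ and the explicit formulae \eqref{eq3_20} for $\mD_j$ make the entries of $A_{\Phi_\alpha}$ rational in the auxiliary parameter ($s$ or $\theta$), so $Q$ becomes a polynomial in $h,a,b$ whose signs on each monotone interval are easily checked. The 3-atom in Table~\ref{tab42} is then read off from the internal type via the standard $2$-DOF dictionary \cite{BolFom}: an elliptic pair yields atom $A$, a hyperbolic pair with one saddle loop yields $B$, and higher atoms are identified by the global separatrix structure of the induced Hamiltonian on $\mM_k$, which is described in \cite{Zot,KhSav,Kh2006,Kh2009}.

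The main obstacle is verifying that the two degeneracy conditions on $Q$ --- a vanishing root or a vanishing discriminant --- exhaust exactly the list of exceptional energy values in the theorem. Geometrically these fall into three groups: (i) collisions with a rank-$0$ singularity $\imm_i$, producing $h=\pm(a\pm b)$ on those families containing $\imm_i$ in their closure; (ii) collisions with the degenerate sets $Q_1,Q_2,Q_3$, producing $h=\pm 2\sqrt{ab}$ on $\mL_5$ (points $\tau_7,\tau_8$), $h=\tfrac{3a^2+b^2}{2a}$ on $\mL_2$ and $\tfrac{a^2+3b^2}{2b}$ on $\mL_4$ (points $\tau_6,\tau_5$), and $h=\pm 2b$, $h=2a$, $h=h(s_0)$ on the $\mD_j$ (points $\tau_1,\tau_2,\tau_3,\tau_4$); (iii) extrema of $H$ on the closure of a family, which coincide with cases (i) and (ii). Matching these geometric coincidences to the analytic degeneracies of $Q$ for each of the nine families is tedious but finite, and is greatly shortened by the discrete symmetries of the system relating the pairs $(\mL_1,\mL_2)$, $(\mL_3,\mL_4)$, $(\mL_5,\mL_6)$ and by the Kovalevskaya-type involution exchanging $\mD_1$ with $\mD_2$.
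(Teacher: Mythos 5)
Your proposal follows essentially the same route as the paper, which (deferring the computation to \cite{RyKh2012}) verifies non-degeneracy by exhibiting, for each family, an explicit integral combination $\vp_i$ or $\psi_i$ with vanishing differential whose symplectic operator is a regular element of the two-dimensional Cartan subalgebra; the paper simply picks one distinguished combination per family rather than analyzing the full two-parameter pencil $A(\mu_1,\mu_2)$, and reads the 3-atoms off the prior topological analysis of $\mm,\mn,\mo$, exactly as you suggest. One concrete correction: your group (i) is not part of the exceptional set --- the theorem's list consists precisely of the images of $\tau_1,\ldots,\tau_8$, i.e., the intersections with the degenerate sets $Q_1,Q_2,Q_3$ (your group (ii)), whereas at $h=\pm(a\pm b)$ the rank-$1$ points on the separatrix levels through the equilibria $\imm_i$ remain non-degenerate (e.g.\ $\mL_1$ has no exceptional energies at all), so an honest evaluation of $Q$ there must return distinct roots rather than a degeneracy.
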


%%%%%%%%%%%%%%%%%%%%%%%

%%%%%%%%%%%%%%%%%%%%%%%%%%%%%%%%%%%%%%%%%%%%%%%%%%%%
{\renewcommand{\arraystretch}{1.5} \setlength{\extrarowheight}{0pt}
\centering
\small
\tabcolsep=3pt
\begin{longtable}{|c|l|c|c|c|c|c|}
\multicolumn{7}{r}{\fts{Table \myt\label{tab42}}}\\
\hline
$\mK^1$
& \multicolumn{1}{c|}{Image or segment}
& {\renewcommand{\arraystretch}{0.8}\begin{tabular}{c}Number of \\ trajectories \end{tabular}}
& {Type in $\mP^6$}
& {\renewcommand{\arraystretch}{0.8}\begin{tabular}{c}3-atom\\in $\mm$\end{tabular}}
& {\renewcommand{\arraystretch}{0.8}\begin{tabular}{c}3-atom\\in $\mn$\end{tabular}}
& {\renewcommand{\arraystretch}{0.8}\begin{tabular}{c}3-atom\\in $\mo$\end{tabular}}
\\
\hline\endfirsthead%
\multicolumn{7}{r}{\fts{Table \ref{tab42} (continued)}}\\
\hline
$\mK^1$
& \multicolumn{1}{c|}{Image or segment}
& {\renewcommand{\arraystretch}{0.8}\begin{tabular}{c}Number of \\ trajectories \end{tabular}}
& {Type in $\mP^6$}
& {\renewcommand{\arraystretch}{0.8}\begin{tabular}{c}3-atom\\in $\mm$\end{tabular}}
& {\renewcommand{\arraystretch}{0.8}\begin{tabular}{c}3-atom\\in $\mn$\end{tabular}}
& {\renewcommand{\arraystretch}{0.8}\begin{tabular}{c}3-atom\\in $\mo$\end{tabular}}
\\
\hline\endhead

\multirow{2}{*}{${\mL}_1$}
&
$\lambda_{11}:-(a+b)<h<-(a-b)$
&
{1}
&
center-center
&{}&{$A$}&{$A$}
\\
\hhline{|~|-|-|-|-|-|-|}
{}
&
$\lambda_{12}:h>-(a-b)$
&
{2}
&
center-center
&{}&{$2A$}&{$2A$}\\
\hline
\multirow{3}{*}{${\mL}_2$}
&
$\lambda_{21}:a-b<h<a+b$
&
{1}
&
saddle-saddle
&{}&{$B$}&{$B$}\\
\hhline{|~|-|-|-|-|-|-|}
{}
&
$\lambda_{22}:a+b<h<\frac{3a^2+b^2}{2a}$
&
{2}
&
saddle-saddle
&{}&{$2B$}&{$2B$}\\
\hhline{|~|-|-|-|-|-|-|}
{}
&
$\lambda_{23}:\frac{3a^2+b^2}{2a}<h<2a$
&
{2}
&
saddle-saddle
&{}&{$2B$}&{$2B$}\\
\hhline{|~|-|-|-|-|-|-|}
{}
&
$\lambda_{24}: h>2a$
&
{2}
&
saddle-center
&{}&{$2B$}&{$2A$}\\
\hline
\multirow{3}{*}{${\mL}_3$}
&
$\lambda_{31}:-(a+b)<h<-2b$
&
{1}
&
center-center
&{}&{$A$}&{$A$}\\
\hhline{|~|-|-|-|-|-|-|}
{}
&
$\lambda_{32}:-2b<h<a-b$
&
{1}
&
center-saddle
&{}&{$A$}&{$B$}\\
\hhline{|~|-|-|-|-|-|-|}
{}
&
$\lambda_{33}: h>a-b$
&
{2}
&
center-saddle
&{}&{$2A$}&{$C_2$}\\
\hline
\multirow{3}{*}{${\mL}_4$}
&
$\lambda_{41}:-(a-b)<h<2b$
&
{1}
&
saddle-center
&{}&{$B$}&{$A$}\\
\hhline{|~|-|-|-|-|-|-|}
{}
&
$\lambda_{42}:2b<h<a+b$
&
{1}
&
saddle-saddle
&{}&{$B$}&{$B$}\\
\hhline{|~|-|-|-|-|-|-|}
{}
&
$\lambda_{43}: a+b<h <\frac{a^2+3b^2}{2b}$
&
{2}
&
saddle-saddle
&{}&{$2B$}&{$C_2$}\\
\hhline{|~|-|-|-|-|-|-|}
{}
&
$\lambda_{44}: h >\frac{a^2+3b^2}{2b}$
&
{2}
&
saddle-saddle
&{}&{$2B$}&{$C_2$}\\
\hline

\multirow{4}{*}{${\mL}_5$}
&
$\lambda_{50}:-2\sqrt{ab}<h<2\sqrt{ab}$
&
{1}
&
focus-focus
&{}&{}&{}
\\
\hhline{|~|-|-|-|-|-|-|}

{}
&
$\lambda_{51}:-(a+b)<h<-2\sqrt{ab}$
&
{1}
&
center-center
&{}&{}&{$A,A$}\\
\hhline{|~|-|-|-|-|-|-|}
{}
&
$\lambda_{52}:2\sqrt{ab}<h<a+b$
&
{1}
&
saddle-saddle
&{}&{}&{$B,B$}\\
\hhline{|~|-|-|-|-|-|-|}
{}
&
$\lambda_{53}:h>a+b$
&
{2}
&
saddle-saddle
&{}&{}&{$2A^*,2B$}\\
\hline

\multirow{2}{*}{${\mL}_6$}
&
$\lambda_{61}:-(a-b)<h<a-b$
&
{1}
&
saddle-center
&{}&{}&{$B,A$}\\
\hhline{|~|-|-|-|-|-|-|}
{}
&
$\lambda_{62}:h>a-b$
&
{2}
&
saddle-center
&{}&{}&{$2A^*,2A$}\\
\hline

${\mD}_1$
&
$\delta_1:s\in(-b,0)$
&
{2}
&
center-center
&{$2A$}&{}&{$2A$}\\
\hline

${\mD}_2$
&
$\delta_2:s\in(0,b)$
&
{2}
&
saddle-center
&{$2B$}&{}&{$2A$}\\
\hline

\multirow{2}{*}{${\mD}_3$}
&
$\delta_{31}: s\in(a,s_0)$
&
{4}
&
saddle-center
&{$4B$}&{}&{$4A$}\\
\hhline{|~|-|-|-|-|-|-|}
{}
&
$\delta_{32}: s\in(s_0,+\infty)$
&
{4}
&
center-center
&{$4A$}&{}&{$4A$}\\
\hline
\end{longtable}
}

Note that the cases of degeneration correspond to the points $e_1 - e_8$ on the bifurcation diagrams.

Let us comment the order of words characterizing the type of points with respect to $\mP^6$ in Table~\ref{tab42}. For the families $\mL_j$ with $i=1,\ldots,4$ the first type is the type of the point of rank 1 in the subsystem $\mn$, and the second type is the type of this point in the subsystem $\mo$. For the families $\mD_{1,2,3}$ the first type is the type of the point of rank 1 in the subsystem $\mm$, and the second type is the type of this point in the subsystem $\mo$.
The families $\mL_{5,6}$ (except for the pre-image of the focus type segment $\lambda_{50}$, which is isolated in the space of the integral constants, and for its boundary points with degenerate rank 1 critical points in the pre-image) form the set of the transversal self-intersection of the phase space of the subsystem $\mo$. This implies that each trajectory corresponds to the pair of values $s$ and to two points $\lambda^{\pm}_{ij}$ ($i=5,6$) in the diagram of the subsystem $\mo$ (see Fig.~\ref{fig_bifset3}). The minus sign corresponds to the smaller value of~$s$. Therefore, in Table~\ref{tab42} for these families the first type is the type of the point of rank 1 in the part of the subsystem $\mo$ with smaller $s$, and the second type is the type of this point in the part of the subsystem $\mo$ with larger $s$.

For the proof of the theorem, in the work \cite{RyKh2012}, the pairs of the first integrals generating the Cartan subalgebra of dimension 2 in the algebra of the symplectic operators are explicitly written out and it is shown that the following integrals generate regular elements in these subalgebras:
\begin{equation*}
\begin{array}{ll}
\textrm{на}\quad{\mL}_{1,2}:& \vp_{1,2}=K-2(h\pm 2a)H, \\
\textrm{на}\quad{\mL}_{3,4}:& \vp_{3,4}=K-2(h\pm 2b)H, \\
\textrm{на}\quad{\mL}_{5,6}:& \vp_{5,6}=\pm abH-G, \\
\textrm{на}\quad{\mD}_{i}: & \psi_i=2G-(p^2-\vk_i)H\qquad (i=1,2,3),
\end{array}
\end{equation*}
where
\begin{equation*}
\begin{array}{ll}
\displaystyle{\vk_1=(\sqrt{a^2-s^2}+\sqrt{b^2-s^2})^2}, & s \in
[-b,0); \\
\displaystyle{\vk_2=(\sqrt{a^2-s^2}-\sqrt{b^2-s^2})^2}, & s \in
(0,b]; \\
\displaystyle{\vk_3=-(\sqrt{s^2-b^2}-\sqrt{s^2-a^2})^2}, & s \in
[a,+\infty).
\end{array}
\end{equation*}
The atoms of all bifurcations taking place inside the critical subsystems $\mm,\mn,\mo$ at the periodic trajectories consisting of the rank 1 points are established as a result of the topological analysis of these subsystems (see \cite{Zot,KhSav,Kh2009}).

Let us consider the set $\mK^2$ of the critical points of rank 2. It is the union of three subsystems $\mm$, $\mn$ and $\mo$, from which we remove already investigated points $\mK^0 \cup \mK^1$. According to dimensions the type of a non-degenerate rank 2 singularity can only be \textit{elliptic} (``center'') or \textit{hyperbolic} (``saddle''). Since the property to be non-degenerate is applied to the whole Liouville torus and then all the points of this torus have the same type, we speak about elliptic or hyperbolic tori.

First of all note that any integral 2-torus regular in the subsystem $\mm$ and non-degenerate in $\mP^6$ is elliptic. Indeed, the integral $K=Z_1^2+Z_2^2$ is everywhere non-negative function which vanishes exactly on $\mm$.

\begin{theorem}\label{theo5} All critical points of rank 2 on the manifold $\mm$, except for the points of the zero level of the integral $F$, are non-degenerate of the elliptic type.
\end{theorem}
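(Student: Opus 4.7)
The plan is to exploit the fact that $K=Z_1Z_2=|Z_1|^2\geqslant 0$ attains its global minimum zero precisely on $\mm$, which gives a very strong local picture of $K$ at any $x\in\mm$: one has $dK(x)=0$ and the Hessian $d^2K(x)$ is positive semidefinite with kernel exactly $T_x\mm$. In particular $d^2K(x)$ is strictly positive definite on every linear complement of $T_x\mm$ in $T_x\mP^6$. The idea is therefore to show that the symplectic operator $A_K=\{K,\cdot\}$, well-defined at $x$ because $dK(x)=0$, is a regular elliptic element of the Cartan subalgebra spanned by $A_H,A_K,A_G$.

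The first ingredient is a convenient symplectic transversal to $T_x\mm$. On $\mm\setminus Q_1$ the identity $\{Z_1,Z_2\}\equiv -2\rmi F$ gives $F(x)\neq 0$, so the matrix of Poisson brackets of the defining functions $Z_1,Z_2$ is nondegenerate; equivalently, $T_x\mm$ is itself a symplectic subspace of $T_x\mP^6$. Its 2-dimensional symplectic orthogonal $N_x=(T_x\mm)^{\perp}$ is spanned by the Hamiltonian vectors $X_{Z_1},X_{Z_2}$ and produces a symplectic splitting $T_x\mP^6=T_x\mm\oplus N_x$. Since the Hessian of $K$ vanishes on $T_x\mm$, the operator $A_K$ is identically zero on $T_x\mm$; being skew-adjoint with respect to the symplectic form, it preserves $N_x$; and on $N_x$ it has the form $J\cdot (d^2K|_{N_x})$ with the standard symplectic matrix $J$ and positive definite $d^2K|_{N_x}$. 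A direct $2\times 2$ computation then yields trace zero and positive determinant, so the two nonzero eigenvalues of $A_K$ form a purely imaginary pair $\pm\rmi\lambda$, $\lambda>0$.

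The last step is to match $N_x$ with the reduced 2-dimensional symplectic transversal $\mathcal{O}^{\perp}/\mathcal{O}$ that enters the standard definition of non-degeneracy at a rank 2 point, where $\mathcal{O}=\mathrm{span}(X_H,X_G)$. The relations $\{H,K\}=\{G,K\}=0$ give $\mathcal{O}\subset T_x\mm$; the relations $\{H,Z_i\}|_{\mm}=\{G,Z_i\}|_{\mm}=0$ give $N_x\subset\mathcal{O}^{\perp}$; and $\mathcal{O}\cap N_x\subset T_x\mm\cap N_x=0$ because $T_x\mm$ is symplectic. A dimension count then forces $\mathcal{O}^{\perp}=\mathcal{O}\oplus N_x$, yielding a canonical isomorphism $N_x\cong\mathcal{O}^{\perp}/\mathcal{O}$. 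Under this identification $A_K$ descends to the elliptic operator computed above and supplies a regular Cartan element of elliptic type, proving simultaneously the non-degeneracy and the elliptic character of the rank 2 critical torus through $x$.

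The essential geometric content ($K\geqslant 0$, $K|_{\mm}=0$, and $F\neq 0$ off $Q_1$) is elementary; the main obstacle is purely a piece of symplectic bookkeeping, namely the identification of $N_x$ with the reduced transversal $\mathcal{O}^{\perp}/\mathcal{O}$, because only after this step can one legitimately read the type of the singularity off the action of $A_K$ on $N_x$.
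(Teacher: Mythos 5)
Your argument is correct, and it is worth comparing with the paper's own proof, which is organized differently. The paper first disposes of the rank condition by citing the independence of $H$ and $G$ on $\mm$ away from $\mD_1$--$\mD_3$, observes that $dK=0$ identically on $\mm$, and then simply computes the characteristic polynomial of $A_K$ in $\bR^9$ using the equations of $\mm$: seven zero roots plus the factor $\mu^2+4f^2$, whence two distinct purely imaginary eigenvalues for $f\ne 0$. Your proof replaces that explicit computation by a structural one: since $K=Z_1Z_2\gs 0$ vanishes exactly on $\mm$, the Hessian $d^2K$ is positive semidefinite with kernel $T_x\mm$, and on the symplectic complement $N_x=\operatorname{span}(X_{Z_1},X_{Z_2})$ (symplectic precisely because $\{Z_1,Z_2\}=-2\rmi F\ne 0$ off $Q_1$) the operator $A_K$ is of the form $J S$ with $S>0$, hence has a purely imaginary pair of eigenvalues. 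In effect you have promoted to a complete proof the informal remark the paper makes just before the theorem (``$K$ is non-negative and vanishes exactly on $\mm$, so regular non-degenerate tori in $\mm$ are elliptic''), and you make explicit the identification $N_x\cong\mathcal{O}^{\perp}/\mathcal{O}$ with the reduced transversal, a step the paper leaves implicit. What your route buys is coordinate-freeness and a transparent explanation of why the exceptional set is exactly $\{F=0\}=Q_1$ (it is where $T_x\mm$ fails to be symplectic and the splitting collapses); what the paper's computation buys is the exact eigenvalues $\pm 2\rmi f$, which quantify the degeneration as $f\to 0$. The only points you should make sure are on record are that $\mm$ is a smooth codimension-two submanifold (so that $dZ_1,dZ_2$ are independent and $\ker d^2K$ is exactly $T_x\mm$) and that $X_G$ is tangent to $\mm$ --- both follow from $\mm=K^{-1}(0)$ together with $\{G,K\}=0$ and are asserted in the paper, so there is no genuine gap.
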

\begin{proof}
On $\mm$, there are no points of dependence of the integrals $H$ and $G$ except for the sets $\mD_1 - \mD_3$ \cite{Kh2005} (note that the dependence of the \textit{restrictions} $H|_{\mm}$ and $F|_{\mm}$ or, which is the same, of the \textit{restrictions} $H|_{\mm}$ and $G|_{\mm}$ is investigated in \cite{Zot,Zot1}). At the same time everywhere on $\mm$ we have $dK=0$. The characteristic equation of the operator $A_K$ in $\bR^9$ is easily written out with the help of the equations of the manifold $\mm$; it has seven zero roots and the remaining factor
$\mu^2+4 f^2$ has two different imaginary roots for $f\ne 0$. The theorem is proved.
\end{proof}

In the image of the momentum maps the value $F=0$ gives the set $\Delta_1 \subset \bR^3(h,k,g)$. Thus, for this set we have proved that the corresponding critical points are degenerate.

%%%%%%%%%%%%%%%%%%%%%%%%%%%%%%%%%%%%%%%%%%%%%%%%%
On the manifold $\mn$, system \eqref{eq2_1} has an explicit algebraic solution \cite{KhSav}. This means that all phase variables are expressed as rational functions of two auxiliary variables $s_1,s_2$ and some radicals of the form $\sqrt{s_i - c_j}$, where $c_j$ are constant values depending on the constants of the integrals $H,M$. In the variables $s_1,s_2$, the equations of motion separate and reduce to elliptic quadratures. The corresponding formulas can also be found in \cite{RyKh2012}.

\begin{theorem}\label{theo6} All critical points of rank 2 on the manifold $\mn$, except for the points in the pre-image of the curves $\Delta_1$ and $\Delta_2$, are non-degenerate. They have elliptic type for $m>0$ and hyperbolic type for $m<0$.
\end{theorem}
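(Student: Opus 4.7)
The plan follows the strategy of Theorem~\ref{theo5}. The starting point is to pin down the locus on $\mn$ at which degeneracy is forced. The pre-image of $\Delta_2$ in $\mn$ is the set $Q_2=\mn\cap\{L=0\}$, on which the Poisson bracket $\{F_1,F_2\}=-2\rmi r^2 L$ vanishes and the induced symplectic form on $\mn$ is singular; every such point is a degenerate critical point of $\mF$ by~\cite{RyKh2012}. The pre-image of $\Delta_1$ in $\mn$ coincides with $Q_1$, which as noted already lies in $\mn$; degeneracy there is inherited from its degeneracy in $\mm$, witnessed by the vanishing of the factor $\mu^2+4f^2$ appearing in the proof of Theorem~\ref{theo5}. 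It remains to establish non-degeneracy of the correct type at every rank-2 point outside $Q_1\cup Q_2$.

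Fix such a point $x\in\mn\setminus(Q_1\cup Q_2)$. By definition of rank there is a non-trivial combination
\begin{equation*}
\Psi=\lambda_1 H+\lambda_2 K+\lambda_3 G,\qquad d\Psi(x)=0,
\end{equation*}
and the symplectic operator $A_\Psi$ is well defined at $x$. Dimension counting, exactly as in Theorem~\ref{theo5}, dictates that its characteristic polynomial in $\bR^9$ has seven zero roots coming from the three Casimirs~\eqref{eq2_2} and from the four-dimensional direction tangent to the Liouville tori of $\mF$, leaving a quadratic factor of the form $\mu^2-\Lambda(x)$. Non-degeneracy at $x$ is equivalent to $\Lambda(x)\ne 0$; the type is elliptic when $\Lambda(x)<0$ and hyperbolic when $\Lambda(x)>0$.

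The concrete computation of $\Lambda(x)$ is carried out in the ambient coordinates of $\bR^9$ using the invariant relations $F_1=F_2=0$ and the Casimirs~\eqref{eq2_2} to reduce the $9\times 9$ matrix to its non-trivial $2\times 2$ block. The cleanest route is via the explicit algebraic solution on $\mn$ in the separation variables $s_1,s_2$ of~\cite{KhSav}: on this parametrization both $M$ and $L$ become rational expressions, the coefficients $\lambda_i$ solving $d\Psi(x)=0$ can be written in closed form, and the result of the reduction should take the shape $\Lambda(x)=-c(x)\,m$, where $c(x)$ is assembled from quantities that are positive on $\mn\setminus(Q_1\cup Q_2)$. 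Combined with the sign dichotomy above, this yields elliptic type for $m>0$ and hyperbolic type for $m<0$, as required.

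The main obstacle is the algebraic reduction of the characteristic polynomial to its non-trivial quadratic factor and the verification that the coefficient $c(x)$ in front of $-m$ is strictly positive on the complement of $Q_1\cup Q_2$. The two independent invariant relations defining $\mn$ make the direct matrix manipulation in $\bR^9$ substantially heavier than the single-relation setting of Theorem~\ref{theo5}; passing to the separation variables is precisely the device that replaces this heavy elimination by polynomial inequalities in $s_1,s_2$ whose validity is controlled by the distance from the boundary curves $L=0$ and $F=0$ bounding the admissible region on $\mn$.
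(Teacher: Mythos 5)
Your overall strategy coincides with the paper's: both reduce the question to the sign of the single non-trivial quadratic factor of the characteristic polynomial of a symplectic operator associated with a function that is critical along $\mn$, computed via the separation variables of \cite{KhSav}. The paper's only technical twist is that instead of a point-dependent linear combination $\lambda_1H+\lambda_2K+\lambda_3G$ it uses the single global function $\Phi=(2G-p^2H)^2-r^4K$ built from the equation of the leaf $\pov_2$, whose differential vanishes identically on $\mn$; this removes the need to solve $d\Psi(x)=0$ pointwise and makes the seven-dimensional kernel transparent (three Casimir directions plus the four-dimensional tangent space to $\mn$ itself, on which the Hessian of $\Phi$ automatically vanishes --- not, as you write, a ``four-dimensional direction tangent to the Liouville tori,'' since the critical tori here are two-dimensional).

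The genuine gap is that the decisive step is asserted rather than performed. You write that the reduction ``should take the shape $\Lambda(x)=-c(x)\,m$'' with $c(x)$ positive off $Q_1\cup Q_2$, and then name exactly this reduction and positivity check as ``the main obstacle.'' But that computation \emph{is} the theorem: without it one has neither non-degeneracy nor the elliptic/hyperbolic dichotomy. The paper carries it out and obtains $\mu^2+4r^{12}\ell^2 m=0$, i.e.\ $c=4r^{12}\ell^2$ where $\ell$ is the constant of the partial integral $L$ from \eqref{eqelu}. Note that the structure of this answer is sharper than your heuristic: the exceptional locus $\Delta_2$ enters not through some ``distance from the boundary curve $L=0$'' controlling an inequality, but through the exact vanishing of the factor $\ell^2$, while $\Delta_1$ enters through the factor $m$ itself, which simultaneously carries the sign information. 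Your proposal correctly guesses the form of the answer but supplies no derivation of it, so as written it is a plan for a proof rather than a proof.
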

\begin{proof}
Due to quite simple equations \eqref{eq3_9} of the leave $\pov_2$, it is convenient to take as a single integral playing the role of \eqref{eq2_11} and having a singularity at each point of $\mn \cap \mK^2$ the function
\begin{equation}\label{eq4_5}
\Phi = \phi_2(H,K,G)=(2G-p^2H)^2-r^4 K.
\end{equation}
For this function, after the necessary factorization by the zero kernel space and substitution of the explicit expressions for the phase variables via the separation variables,
the characteristic equation of the operator $A_\Phi$ takes the form
\begin{equation}\label{eq4_6}
\mu^2+4r^{12} \ell^2 m=0,
\end{equation}
and, except for the cases $m=0$ ($\Delta_1$) and $\ell=0$ ($\Delta_2$),
has two different roots, purely imaginary for $m>0$ and real for $m<0$.
Here $\ell$ is the constant of the partial integral $L$ defined in \eqref{eqelu}.
The theorem is proved.
\end{proof}

As it is shown in \cite{KhSav}, $\mn$ consists of critical points of the zero level of the functions $2G -p^2H \pm r^2 \sqrt{K}$, one of which is the sum of the squares of two regular smooth functions ($m>0$), and the other one is the difference of such squares ($m<0, \ell \ne 0$). This is the reason why the tori are elliptic in the first case and hyperbolic in the second one. Equation \eqref{eq4_6} strictly proves the non-degeneracy of such points. It is interesting to note the connection between degenerate points and the analytic solution. When $m=0$, the degree of the polynomial standing under the radical in the separated equations of the Abel\,--\,Jacobi type jumps from four to three. When $\ell=0$, the variables $s_1,s_2$ in the separated equations have only even powers, which provides an additional symmetry. As a whole, $\mn$ is non-orientable and the main role here plays the neighborhood of the set ${\ell=0}$ \cite{KhGEOPHY}.

%%%%%%%%%%%%%%%%%%%%%%%%
On the manifold $\mo$, the explicit algebraic solution of system \eqref{eq2_1} is given in \cite{Kh2009,Kh2007}.
All phase variables are expressed as rational functions of two auxiliary variables $t_1,t_2$ and some radicals of the form $\sqrt{t_i - c_j}$, where $c_j$ are constant values depending on the constants of the integrals $H,S$. In the variables $t_1,t_2$, the equations of motion separate and have the form of the Abel\,--\,Jacobi equations or the equations of Kovalevskaya with a polynomial of degree six under the radical. This separation is hyperelliptic. The corresponding formulas also are given in~\cite{RyKh2012}.

%%%%%%%%%%%%%%%%%%%%%%%%%%%%%%%%%%%%%%%%%%%%%%%
%%%%%%%%%%%%%%%%%%%%%%%%%%%%%%%%%%%%%%%%%%%%%%%%%%%%%
\begin{theorem}\label{theo7} All regular two-dimensional tori of the subsystem $\mo$
consist of non-degenerate critical points of rank $2$ of the momentum map ${\mF}$, except for the points in the pre-images of the sets $\Delta_2, \Delta_3$. The torus is elliptic if the value
$
s(2s^2-2h s+a^2+b^2)[s^4+2(s-h)s^3+a^2b^2]
$
is negative and hyperbolic if it is positive.
\end{theorem}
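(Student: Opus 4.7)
The plan is to follow the pattern of Theorems~\ref{theo5} and~\ref{theo6}. Equation~\eqref{eq3_9} gives $\pov_3$ in the parametric form $k=k(s,h)$, $g=g(s,h)$, so by eliminating $s$ between the two relations (e.g., by taking the resultant of the polynomials $s^4-hs^3+gs-a^2b^2$ and $hs^3-(p^2+h^2-k)s^2-3gs-2a^2b^2$ obtained after clearing denominators) one gets a single regular equation $\phi_3(h,k,g)=0$ for~$\pov_3$. Treating $\phi_3(H,K,G)$ as in~\eqref{eq2_11}, every rank~2 point of $\mo$ is a critical point of $\phi_3(H,K,G)$ on its zero level, so the Hamiltonian vector field of $\phi_3$ vanishes there. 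Consequently the symplectic operator $A_{\phi_3}$ computed in $\bR^9$ has at least seven zero eigenvalues (four tangent to $\mo$, three along the Casimirs of the Lie\,--\,Poisson bracket), and the remaining non-trivial factor of its characteristic polynomial is quadratic in~$\mu$.

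To evaluate this factor I would use the explicit hyperelliptic parametrization of $\mo$ from \cite{Kh2007,Kh2009,RyKh2012}, where the phase variables are rational in auxiliary variables $(t_1,t_2)$ with radicals $\sqrt{t_i-c_j(s,h)}$. Substituting these expressions into $A_{\phi_3}$ and simplifying by means of the defining relations $R_1=R_2=0$ reduces everything to a finite algebraic identity. The expected outcome is that the non-trivial factor takes the form
\begin{equation*}
\mu^2 \;=\; \lambda(t_1,t_2,s,h)\cdot s\,(2s^2-2hs+a^2+b^2)\,[s^4+2(s-h)s^3+a^2b^2],
\end{equation*}
with a coefficient $\lambda$ that is strictly positive on the regular part of $\mo$.

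The three factors in the product vanish, respectively, on the boundary $s=0$ (excluded from $\pov_3$ by the admissible ranges of $s$ in Theorem~\ref{theo2}), on the curve $\Delta_2$, and on the curve $\Delta_3$ (see Table~\ref{tabdel}, observing that $s^4+2(s-h)s^3+a^2b^2=3s^4-2hs^3+a^2b^2$). Hence outside $\Delta_2\cup\Delta_3$ the right-hand side is non-zero, the point is non-degenerate, and the sign of the product directly determines the type: a positive product gives two real opposite roots (hyperbolic torus) and a negative product gives two purely imaginary roots (elliptic torus), which matches the claimed sign rule.

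The main obstacle is the algebraic computation of this non-trivial eigenvalue pair from the hyperelliptic parametrization: both the $9\times 9$ operator $A_{\phi_3}$ and the substitutions involved are cumbersome, and the cancellations producing the clean three-factor form above are not automatic. A convenient shortcut, avoiding the explicit elimination of $s$ in \eqref{eq3_9}, is to work with a suitable linear combination of $R_1,R_2$ whose gradient at the chosen point is collinear with the Cartan direction transverse to $\mo$; the characteristic equation on the $2$-dimensional symplectic transversal to $\mo$ then produces the same product, at the cost of carrying out the elementary but tedious linear algebra on a fixed symplectic $2$-plane rather than on the ambient $\bR^9$.
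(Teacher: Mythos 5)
Your overall scheme---attach to $\mo$ a function vanishing to second order on it, compute the single non-trivial quadratic factor of the characteristic polynomial of the associated symplectic operator, substitute the separation variables, and read the type off the sign of $\mu^2$---is the same scheme the paper uses here and in Theorems~\ref{theo5} and~\ref{theo6}, and your identification of the degenerate loci ($2s^2-2hs+p^2=0$ giving $\Delta_2$, $s^4+2(s-h)s^3+a^2b^2=3s^4-2hs^3+a^2b^2=0$ giving $\Delta_3$, with $s=0$ excluded) and of the sign rule is correct. But the concrete route does not close. Your primary device, eliminating $s$ from \eqref{eq3_9} to obtain one equation $\phi_3(h,k,g)=0$ and computing $A_{\phi_3}$, is precisely the option the paper mentions and discards as unworkable (``the result is too huge''), and your argument stops at ``the expected outcome is that the non-trivial factor takes the form\dots'': the factorization that constitutes the entire content of the theorem is posited, not derived. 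The paper's actual device, which is what makes the computation tractable, is the Lagrange-multiplier function $\Psi = 2G + 2s(s-h)H + sK$ of \cite{Kh2005}, in which $s$ and $h$ are kept as undetermined constants while $A_\Psi$ is computed and are identified with the values of $S$ and $H$ only afterwards; the coefficients of the Cartan element then stay polynomial and of low degree, and substituting the hyperelliptic parametrization yields the clean two-factor right-hand side directly.

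Your proposed shortcut---replacing $\phi_3$ by ``a suitable linear combination of $R_1,R_2$ whose gradient is collinear with the Cartan direction transverse to $\mo$''---has a conceptual flaw. The outer type of a rank-2 singularity is read from the linearization of the Hamiltonian vector field of a combination $\lambda_1 G+\lambda_2 K+\lambda_3 H$ of the \emph{first integrals} whose differential vanishes at the point. The invariant relations $R_1,R_2$ are not first integrals: at a generic point of $\mo$ their differentials are nonzero and span the conormal to $\mo$, so any combination $c_1R_1+c_2R_2$ has non-vanishing differential there, its Hamiltonian vector field does not vanish at the point, and there is no fixed-point linearization from which to extract eigenvalues. (The only information the paper draws from $R_1,R_2$ is the bracket $\{R_1,R_2\}$, which detects degeneration of the \emph{induced} symplectic form, not the outer type.) To make a shortcut of this kind rigorous you would have to express $\Psi$ as a quadratic form in $R_1,R_2$ modulo higher-order terms and track that form, which is essentially the paper's computation in another guise.
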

\begin{proof}
As an integral having singularity on $\mo$ one can take the function of the type \eqref{eq4_5} obtained, for example, by eliminating $s$ in equations \eqref{eq3_9} for the surface $\pov_3$. Nevertheless, the result is too huge and this approach is not convenient. It is better to consider the function with Lagrange multipliers introduced in \cite{Kh2005} for finding the equations of critical subsystems, $\Psi = 2 G + 2s(s-h) H+s K$, where $s$ and $h$ are first considered as undefined multipliers. As shown in \cite{Kh2005}, {\it after} writing down the condition that $\Psi$ has a critical point the constants $s,h$ on $\mo$ turn out to be the values of the integrals $S,H$. Thus, calculating the characteristic polynomial of the operator $A_\Psi$ we consider $s$ and $h$ to be some constants, and after that we substitute in the obtained expression the dependencies of the phase variables on the separation variables. Then we get the characteristic equation
\begin{equation*}
\mu^2-\frac{2(2s^2-2h s+a^2+b^2)}{s}\Bigl[s^4+2(s-h)s^3+a^2b^2\Bigr]=0.
\end{equation*}
This proves the theorem.
\end{proof}

Finally, let us again turn to Fig.~\ref{fig_bifset1} -- \ref{fig_bifset3}, on which the images of the critical points of rank 2  fill the regions $a_1 - a_3$, $b_1-b_9$, and $c_1 - c_{17}$. For the critical subsystems these regions are in fact the chambers, since we include in $\Sigma_i^*$ also the image of the degenerate 2-tori. According to Theorems~\ref{theo5}\,--\,\ref{theo7} we establish the type of all points of rank 2.

\begin{propos}\label{propr2}
The critical points of rank 2 have the elliptic type in the pre-images of the regions $a_1-a_3$, $b_1-b_3$, $c_1 - c_4,c_{10},c_{13},c_{17}$ and the hyperbolic type in the pre-images of the rest of the regions. In the pre-images with elliptic bifurcations the 4-atoms are as follows:

$A$ for the regions $c_1,b_1$;

$2A$ for the regions $a_1,b_2,c_2,c_3,c_4,c_{10},c_{13}$;

$4A$ for the regions $b_4,b_5,b_6,c_{12},c_{16} $;

$8A$ for the region $a_3$.

\noindent In the pre-images of the sets $\Delta_1,\Delta_2$, and $\Delta_3$ the two-dimensional tori consist of degenerate points of rank~2.
\end{propos}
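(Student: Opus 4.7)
The proof splits naturally into two tasks: identifying the type (elliptic vs. hyperbolic) of each rank~2 region, and identifying the 4-atom structure in the elliptic cases. The plan is to apply Theorems~\ref{theo5}\,--\,\ref{theo7} region-by-region, using the stratifications shown in Fig.~\ref{fig_bifset1}\,--\,\ref{fig_bifset3}, and then to count how many connected components of the pre-image collapse into an elliptic 4-atom on crossing the relevant leaf of $\Sigma$.

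First I would handle the elliptic/hyperbolic dichotomy. The regions $a_1$\,--\,$a_3$ lie in the diagram $\Sigma_1^*$ of $\mm$, so by Theorem~\ref{theo5} every rank~2 point over them is non-degenerate elliptic (the only excluded set, $F=0$, maps to $\Delta_1$, which is part of the bifurcation diagram and not a chamber). For the regions $b_1$\,--\,$b_9$ in $\Sigma_2^*$ of $\mn$, Theorem~\ref{theo6} says the type is dictated by $\operatorname{sgn} m$: inspecting the $(m,h)$-picture from Fig.~\ref{fig_bifset2}, one sees that $b_1,b_2,b_3$ lie in the half-plane $m>0$ while $b_4$\,--\,$b_9$ lie in $m<0$, which gives the claimed partition. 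Finally for the chambers $c_1$\,--\,$c_{17}$ of $\Sigma_3^*$ in $\mo$, I would evaluate the sign of the product
\begin{equation*}
\Theta(s,h) = s\,(2s^2-2hs+a^2+b^2)\bigl[s^4+2(s-h)s^3+a^2b^2\bigr]
\end{equation*}
from Theorem~\ref{theo7} in each region on the $(s,h)$-plane. The three factors vanish precisely on the curves $s=0$, on $\Delta_2$ (second factor) and on $\Delta_3$ (third factor), which together with $\lambda_5,\lambda_6$ and the curves $\delta_i$ make up the stratification drawn in Fig.~\ref{fig_bifset3}; thus $\Theta$ has constant sign on each $c_i$ and it suffices to test one convenient point per region. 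The listing $c_1,c_2,c_3,c_4,c_{10},c_{13},c_{17}$ as elliptic will follow from this sign check.

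Next I would identify the 4-atoms in the elliptic regions. By definition, an elliptic 4-atom of type $nA$ arises when, on crossing the corresponding bifurcation leaf from the empty side, exactly $n$ disjoint Liouville 3-tori are born simultaneously (each through a local center-type bifurcation). The count of $n$ can be read off directly from the structure of the critical subsystems already established: the number of connected components of $\mM_i$ projecting onto the given chamber equals the number of tori collapsing there. For example, a chamber bordered by a single smooth stratum of $\mm$ which meets the ``empty'' chamber across $\Delta_1$ gives an $A$-atom when only one branch of $\mm$ is involved ($c_1,b_1$) and $2A$ when the symmetry of $\mm$ (both connected components over the relevant half-plane) contributes two sheets ($a_1,b_2,c_2,c_3,c_4,c_{10},c_{13}$). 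The higher multiplicities $4A$ at $b_4,b_5,b_6,c_{12},c_{16}$ and $8A$ at $a_3$ come from chambers simultaneously adjacent to several elliptic strata: each intersecting center-family doubles the count, and the final tally is confirmed by comparing with the local 3-atoms $A$ collected in Table~\ref{tab42} and with the numbers of regular tori (the equipped diagrams of the critical subsystems).

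The main obstacle is the last bookkeeping step. The elliptic/hyperbolic classification by Theorems~\ref{theo5}\,--\,\ref{theo7} is automatic once the signs of the factors in $\Theta$ and of $m$ are tabulated. What takes real work is tracking, in each elliptic chamber, the number of disjoint 3-tori collapsing at the boundary — especially in $a_3$ (eight sheets) and in the $b_4$\,--\,$b_6$ and $c_{12},c_{16}$ ``compound'' chambers where two or three different elliptic strata of different critical subsystems bound the same chamber. I would verify the counts by cross-checking with the numbers of tori in the neighbouring chambers and with the 3-atom data in Table~\ref{tab42}, ensuring consistency of the Euler-characteristic bookkeeping. The degeneracy of 2-tori over $\Delta_1,\Delta_2,\Delta_3$ was already proved in the course of Theorems~\ref{theo5}\,--\,\ref{theo7}, completing the last sentence of the statement.
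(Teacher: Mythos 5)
Your overall route is the same as the paper's: the elliptic/hyperbolic dichotomy is read off from Theorems~\ref{theo5}--\ref{theo7} (non-negativity of $K$ on $\mm$, the sign of $m$ on $\mn$, the sign of the product $\Theta$ on $\mo$, whose three factors you correctly match with $s=0$, $\Delta_2$ and $\Delta_3$ in the $(s,h)$-plane), and the multiplicity $n$ of an elliptic atom $nA$ is just the number of critical 2-tori over the chamber, which the paper does not recompute but imports from the earlier topological analysis of the critical subsystems \cite{Zot1,KhSav,Kh2009} (it is the ``Number of 2-tori'' column of Table~\ref{tabatom}). Your phrasing of that count as ``the number of connected components of $\mM_i$ over the chamber,'' cross-checked against the equipped diagrams of the subsystems, is the same datum; just be aware that the number of regular 3-tori in the chambers of the \emph{three}-degree-of-freedom system is derived \emph{from} this proposition in Section~\ref{sec5}, so it cannot serve as an independent check here.

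There is, however, one step where your argument cannot go through as written, and you should have caught it: you first establish (correctly, via Theorem~\ref{theo6} and the sign of $\Theta$) that $b_4,b_5,b_6$ and $c_{12},c_{16}$ lie in the hyperbolic part of the picture, and then you purport to ``derive'' the elliptic atoms $4A$ for exactly those regions from adjacency to several elliptic strata. These two claims are mutually exclusive; no amount of sheet-counting produces an elliptic atom over a hyperbolic torus. The inconsistency is inherited from the statement itself: the $4A$ line of the proposition contradicts its own first sentence and Table~\ref{tabatom}, which assigns $4B$, $2C_2$, $2C_2$, $4A^*$, $4B$ to $b_4,b_5,b_6,c_{12},c_{16}$ respectively; the list of regions actually carrying the atom $4A$ (elliptic, four critical 2-tori) is $a_2,b_3,c_{17}$, which are otherwise missing from the enumeration. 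A correct proof must flag and repair this rather than supply a heuristic justification for an impossible assignment.
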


The information on the number of critical 2-tori in the pre-images of all mentioned regions is obtained during the rough topological analysis of the critical subsystems in the works \cite{Zot1,KhSav,Kh2009} and, for the sake of brevity, it is shown in Table~\ref{tabatom} of the next section. It is clear that the elliptic 4-atoms by this information are totally defined. For the hyperbolic atoms the necessary foundations will be given later.

Thus, we gave a total classification of the critical points of the momentum map with respect to their types and 3-atoms, and this completes the description of the rough phase topology of the critical subsystems.

%\clearpage

\section{Iso-energy atlas}\label{sec5}
Iso-energy diagram $\sigh$ is the bifurcation diagram of the restriction of the momentum map $\mF$ to the iso-energy level $H_h=\{H=h\}\subset \mP^6$. The natural identification is used of this restriction with the map to the plane of the constants $g,k$:
\begin{equation*}
  \mF|_{\{H=h\}} \cong (G{\times}K)|_{\{H=h\}}: H_h \to \bR^2(g,k).
\end{equation*}
For brevity, the last map is denoted by $\mF(h)$.

We now consider the problem of classification of the diagrams $\sigh$ equipped by additional information on the topology of the pre-image, i.e., we also point out the number of the families of regular tori in the supplement to the diagram and the atoms of those bifurcations that take place when crossing the 1-dimensional skeleton of the diagram. The diagram $\sigh$ itself is considered as a stratified manifold. Smooth segments of the one-dimensional skeleton are the images of non-degenerate points of rank 2 and zero-dimensional skeleton is the image of all critical points of rank 1 and degenerate critical points of rank 2. Here we speak, of course, of the points on a fixed regular iso-energy level $H_h$. Having all previous information and the information revealed here, one can easily endow the points of the zero-dimensional skeleton with the description of the saturated neighborhood of the pre-image. Due to the lack of space, we restrict ourselves to pointing out in the next section the rough loop molecules of hyperbolic rank 0 singularities.

Generally speaking, the diagram can have a very complicated structure. For this reason until now there is no strict and unique definition of the diagrams equivalence. In our case, $\sigh$ is a plane one-dimensional stratified manifold. Therefore we call two diagrams equivalent if there exists a diffeomorphism of their neighborhoods in the plane preserving taking one diagram to another together with the above introduced equipment.

What does it mean to point out the bifurcation? Let $\gamma$ be a smooth segment of the 1-skeleton of $\sigh$ and $z=(g,k)\in \gamma$. Consider a small one-dimensional segment $\eps$ drawn through $z$, transversal to $\gamma$ and having no common points with $\sigh$ other than $z$. The pre-image of $\eps$ in the phase space is a four-dimensional manifold $\mF^{-1}(\eps)\subset H_h$ foliated into Liouville 3-tori with one singular fiber on each connected component. The set of critical points on a singular fiber consists of a finite number of 2-tori, all points of which have rank 2. We call a connected component of $\mF^{-1}(\eps)$ a \mbox{4-a}tom (see Remark~\ref{remt}). Some examples of 4-atoms are the products of the standard 3-atoms with a circle. For theses atoms we keep the generally accepted notation $A,B,C_2,A^*$ etc. Thus, any smooth edge of the equipped diagram must have the notation of the corresponding 4-atom. On the other hand, any smooth edge of $\sigh$ is a cross section at the level $h$ of the corresponding chamber of one of the critical subsystems and the 4-atom is obtained when we consider the bifurcation taking place while crossing transversally the corresponding smooth leaf that is the image of the critical subsystem. So it is convenient to show on the edges of the iso-energy diagrams the notation of the corresponding chambers of the critical subsystems, and then, collecting the obtained information and analyzing the evolution of the diagrams $\sigh$, to point out all 4-atoms corresponding to those chambers.

Obviously $\sigh$ depends not only on $h$ but also on the physical parameters $a,b$. The point $(a,b,h)$ will be called separating if any its neighborhood in $\bR^3$ contains points with non-equivalent diagrams. Since the stratified manifold $\sigh$ is built from the $h$-sections $\Sigma_i(h)$ of the sets $\Sigma_i$ $(i=1,\ldots,4)$, the transformations of $\sigh$ take place if and only if one of the stratified manifolds $\Sigma_i(h)$ is transformed. It follows from Theorem~\ref{theo2} (see Remark~\ref{remtheo2}) that the cross sections $\Sigma_i(h)$ are transformed when $h$ crosses the boundary values met in this theorem. This gives eight separating values of $h$ (depending on $a,b$), i.e., eight separating surfaces in $\bR^3(a,b,h)$. In general, the separating values of $h$ are the critical values of $h$ considered as a function on the stratified manifolds $\Sigma_i$ $(i=1,\ldots,4)$ \cite{KhGDIS,KhVVMSH13}. One can easily see from Fig.~\ref{fig_bifset1} -- \ref{fig_bifset3} that the critical values of $h$ on $\Sigma_i$ are exactly the $h$-coordinates of the points $P_0 - P_3$ and $e_1 - e_9$. Looking at the Table~\ref{tabee} we come to the following theorem~\cite{Kh36}.

\begin{theorem}\label{theo8}
In the space of the parameters $\bR^3(a,b,h)$ there exists $13$ separating surfaces for the classification of the iso-energy diagrams $\sigh$. These surfaces are considered under the natural restriction $0\leqslant b \leqslant a$ and have the form
\begin{equation}\label{eq5_1}
\begin{array}{l}
\Q_1-\Q_4 :\, h =  \mp a \mp b;  \\
\Q_5:\,h=-2b; \qquad \Q_6:\,h=2b; \qquad \Q_7:\,h=2a; \\
\Q_{8} :\, \left\{
\begin{array}{l}
\ds h = s\left(3 - \frac{s^2}{a^2}\right) - \frac{1}{a^2}\sqrt{(s^2  - a^2)^3} \\[3mm]
\ds b  = \frac{s}{a}\sqrt{s \left[s\left(3 - \frac{2s^2}{a^2}\right) -
\frac{2}{a^2}\sqrt{(s^2 - a^2)^3} \right]}
\end{array}
\right.,   \quad s \in [a,\frac{2}{\sqrt{3}}a];\\
\ds \Q_9 :\,h = \frac{1}{2b}(a^2 + 3b^2); \qquad \ds \Q_{10}:\,h = \frac{1}{2a}(3a^2 + b^2); \\
\Q_{11}:\, h = - 2\sqrt {ab}; \qquad \Q_{12}:\, h = 2\sqrt {ab}; \qquad \ds \Q_{13} :\,h = \sqrt{2(a^2+b^2)}.
\end{array}
\end{equation}
Non-empty diagrams exist for the region $h \geqslant -a-b$. All cross sections by the planes of constant $a \ne 0$ are taken to the cross section $a=1$ by the homothetic transformation $h'=h/a$, $b'=b/a$. It corresponds to the fact that $a$ can be taken as a measurement unit for the intensities of the force fields.
\end{theorem}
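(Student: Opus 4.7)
The plan is to invoke the general principle, stated just before the theorem, that a value $(a,b,h)$ is separating precisely when the stratified cross section $\sigh$ of $\Sigma$ changes combinatorial type, and that this occurs if and only if $h$ is a critical value of the function $H$ restricted to one of the stratified manifolds $\Sigma_i$ ($i=1,\dots,4$). By Theorem~\ref{theo2} and the explicit inequalities describing $\Sigma_1,\dots,\Sigma_4$, the transitions in $\Sigma_i(h)$ occur exactly at the boundary values of $h$ appearing in those inequalities; these boundary values are the $h$-coordinates of the singular points of the one-dimensional skeletons of the diagrams $\Sigma_i^*$, i.e.\ the points $P_0,\dots,P_3$ and $e_1,\dots,e_9$ tabulated in Table~\ref{tabee} (together with Remark~\ref{rem2} for $e_9$).

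I would then go through these points one by one and read off the surfaces. The rank-zero points $P_0,P_1,P_2,P_3$ have $h=\mp a\mp b$, producing $\Q_1$--$\Q_4$. The points $e_1,e_2,e_3$ lie on the axis $f^2=0$ of the diagram $\Sigma_1^*$ with $h=-2b,2b,2a$, giving $\Q_5,\Q_6,\Q_7$. The points $e_5,e_6$ give the rational expressions $\Q_9,\Q_{10}$, and $e_7,e_8$ give $\Q_{11},\Q_{12}$. Finally, $e_9$, the minimum of $H$ on $Q_2$ noted in Remark~\ref{rem2}, has $h=p\sqrt2=\sqrt{2(a^2+b^2)}$, which yields $\Q_{13}$.

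The main technical step is the derivation of $\Q_8$ from the cusp point $e_4$ on $\delta_3$. From \eqref{eq3_20} the cusp corresponds to the critical value $s_0$ of $h(s)$ on $\delta_3$, characterised by \eqref{eq3_21} (equivalently by \eqref{eq3_27}). The idea is to use $s$ itself as a parameter along $\Q_8$: at the cusp we have $h=2s-\frac1s\sqrt{(s^2-a^2)(s^2-b^2)}$, so solving the cusp condition \eqref{eq3_27} for $b$ in terms of $s$ and $a$ gives the relation
\begin{equation*}
b^2=\frac{s^2}{a^2}\Bigl[s\bigl(3-\tfrac{2s^2}{a^2}\bigr)s-\tfrac{2}{a^2}\sqrt{(s^2-a^2)^3}\Bigr],
\end{equation*}
yielding the parametric form written in the theorem; the range $s\in[a,\tfrac{2}{\sqrt3}a]$ comes from requiring $0\le b\le a$ (the endpoint $s=a$ corresponds to $b=a$ and the upper endpoint to $b=0$, as one checks by elementary algebra). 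This will be the most computationally involved step, but it is a direct manipulation of \eqref{eq3_21} and \eqref{eq3_27}.

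To complete the proof I would verify that no further separating values arise. By Theorem~\ref{theo2} any change of combinatorial type of $\sigh$ happens at a boundary value of one of the inequalities defining $\Sigma_i(h)$, and each such boundary value has been matched above with one of the points $P_i$ or $e_j$. Finally, the scaling claim follows from the observation that equations \eqref{eq2_1}, \eqref{eq2_5}, \eqref{eq2_6} are quasi-homogeneous with weight $1$ in $\boldsymbol\alpha,\boldsymbol\beta$ and weight $1/2$ in time, so the rescaling $(\boldsymbol\alpha,\boldsymbol\beta,\boldsymbol\omega,H,K,G)\mapsto(\boldsymbol\alpha/a,\boldsymbol\beta/a,\boldsymbol\omega/\sqrt a,H/a,K/a^2,G/a^2)$ reduces the parameter region $\{0\le b\le a\ne 0\}$ to $a=1$, with $h'=h/a$, $b'=b/a$, and preserves the stratification.
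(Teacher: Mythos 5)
Your proposal is correct and takes essentially the same route as the paper: the separating surfaces are identified with the critical values of $h$ on the stratified manifolds $\Sigma_i$, read off as the $h$-coordinates of $P_0$--$P_3$ and $e_1$--$e_9$ from Table~\ref{tabee}, with $\Q_8$ obtained parametrically from the cusp condition \eqref{eq3_27} combined with the expression for $h$ on $\delta_3$. The extra details you supply (the explicit elimination giving $b(s)$, the endpoint check for $s\in[a,\tfrac{2}{\sqrt3}a]$, and the quasi-homogeneity argument for the reduction to $a=1$) are consistent with, and merely flesh out, the paper's argument.
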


Here the comments are needed only for the surface $\Q_{8}$. It is obtained as a parametric form of representation of the value $h(s_0)$ at the point $e_4$ by using equation \eqref{eq3_27} and the corresponding equation for $h$ on the curve $\delta_3$.

The separating set is shown in Fig.~\ref{fig_chart} under the projection onto the cross section $a=1$. Iso-energy diagrams are non-empty in 19 regions in the parameters space. Let us choose one point in each region and connect them with paths to obtain convenient ways to visualize and analyze the diagrams transformations. According to Fig.~\ref{fig_chart}, let us call these paths ``the left circle'', ``the right circle'', ``the line'' and ``the block''. The circles have common region 1, the right circle and the line meet at the region 9, and the line and the block have common region 14. The corresponding diagrams are shown in Fig.~\ref{fig_leftc1} -- \ref{fig_block1}. To obtain a more clear picture, we made some changes by plane diffeomorphisms, i.e., in the same equivalence class as defined above. All elements of the diagrams are equipped with notation of the corresponding elements introduced above for the diagrams of the critical subsystems. Moreover,  the roman numbers $\tx{I}$ -- $\tx{IX}$ enumerate the connected components of the supplement to the bifurcation set of the complete momentum map in the admissible domain (i.e., the domain corresponding to non-empty integral manifolds) in the enhanced space $\bR^5(g,k,h,a,b)$. We keep for such components the term \textit{chamber}. In each chamber the structure of the regular integral manifold stays the same for any change of the integral constants and of the physical parameters. The ``external'' chamber, which is not admissible (it contains unbounded values of the integrals and thus obviously corresponds to empty integral manifolds), will be denoted by $\varnothing$ and called the zero-chamber. We can now establish the number of regular 3-tori in the chambers, describe the families of these tori and find all types of the 4-atoms.

\begin{theorem}\label{theon}
Iso-energy diagrams divide the enhanced space of the parameters into $10$ chambers, one of which $($external$)$ corresponds to empty integral manifolds. In the phase space according to all changes of the parameters there exists exactly $23$ families of three-dimensional regular Liouville tori. The chambers are separated by $29$ walls corresponding to the two-dimensional chambers defined by the diagrams of the critical subsystems. While crossing these walls we have the bifurcations defined by the 4-atoms given in Table~$\ref{tabatom}$.
\end{theorem}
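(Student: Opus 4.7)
The plan is to verify each of the four quantitative claims by direct enumeration based on the equipped iso-energy diagrams already constructed for the 19 parametrically stable regions in Fig.~\ref{fig_leftc1}--\ref{fig_block1}, and then to identify the 4-atom on each wall by combining the Williamson-type data from Theorems~\ref{theo5}--\ref{theo7} with the topological information about the critical subsystems summarized in Table~\ref{tab42}. By Theorem~\ref{theo8}, the projection of the enhanced space $\bR^5(g,k,h,a,b)$ onto $(a,b,h)$ is stratified by 13 separating surfaces into 19 open regions on which the equipped diagram $\sigh$ has a fixed combinatorial structure. Within each such region, a connected component of the complement of $\sigh$ in $\bR^2(g,k)$ is either admissible (non-empty integral manifolds) or corresponds to the external zero-chamber. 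An enhanced chamber is by definition an equivalence class of such 2-dimensional chambers under the relation \emph{being joined by a continuous path in the admissible domain of} $\bR^5$; equivalently, two 2-chambers are glued whenever their boundary 1-stratum is a piece of $\sigh$ that disappears (does not separate them) as the parameter crosses the corresponding separating surface.

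First I would compute the chambers: for each of the 19 diagrams I label the 2-chambers by the critical-subsystem labels $a_i, b_i, c_i$ inherited from Fig.~\ref{fig_bifset1}--\ref{fig_bifset3}, then trace the identifications induced by crossing $\Q_1,\ldots,\Q_{13}$ along the four paths (``left circle'', ``right circle'', ``line'' and ``block'') used in Fig.~\ref{fig_chart}. Summing over equivalence classes yields the 9 admissible chambers $\textrm{I}$--$\textrm{IX}$ plus the zero-chamber. Next, for the 23 families of 3-tori, I would use that within each enhanced chamber the pre-image $\mF^{-1}(z)$ consists of a fixed number of Liouville 3-tori, determined by the number of critical 2-tori in each adjacent 2-chamber of a critical subsystem (known from \cite{Zot1,KhSav,Kh2009} and reproduced in Table~\ref{tabatom}). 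A family is a connected component of the pre-image of an enhanced chamber; counting components by inspection of the diagrams and of the gluing across separating surfaces gives the number 23. The 29 walls are then enumerated as the 2-strata of the equipped iso-energy diagrams in the enhanced space, in bijection with the equivalence classes (under parameter deformation) of the smooth 1-strata appearing on the boundaries of 2-chambers in the critical subsystems.

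Finally, for the 4-atoms on each wall: for the elliptic walls (classified in Proposition~\ref{propr2}) the 4-atom is a disjoint union of copies of $A\times S^1=A$ with multiplicity given there, so these entries of Table~\ref{tabatom} are immediate. For the hyperbolic walls, away from the degenerate curves $\Delta_1,\Delta_2,\Delta_3$ the critical set on the singular fiber consists of a finite number of non-degenerate hyperbolic 2-tori, each being a regular $S^1$-bundle over the corresponding 3-atom in the critical subsystem; hence locally the 4-atom is an almost-direct product of the 3-atom from Table~\ref{tab42} with a circle, and the multiplicity equals the number of critical tori in the adjacent cell of $\Sigma_i^*$. The main obstacle is precisely this last step in the non-trivial hyperbolic situations: near pieces of $\sigh$ that abut the self-intersection locus of $\mo$ (where trajectories of $\mL_5,\mL_6$ give pairs of points $\lambda_{ij}^{\pm}$) or that touch the degenerate curves $\Delta_i$, several critical 2-tori coalesce and the 4-atom need not split as a simple product; here one must argue by the explicit separation of variables on $\mn,\mo$ (Theorems~\ref{theo6},\ref{theo7}) and by the relations $\imm_i\in\mL_j\cap\mL_k\cap\mL_l$ from Proposition~\ref{prop1} to guarantee connectedness of the singular fiber and to fix the correct multiplicity entered in Table~\ref{tabatom}.
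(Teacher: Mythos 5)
Your outline of the counting part (chambers, families, walls) matches the paper's strategy in spirit: the paper also starts from the elliptic crossings of Proposition~\ref{propr2} to fix the number of 3-tori in each chamber I--IX (its Table~\ref{tab51}), and obtains the 23 families by identifying the common families in the pairs $\tx{III},\tx{VI}$; $\tx{IV},\tx{IX}$; $\tx{V},\tx{VIII}$. The genuine gap is in your treatment of the hyperbolic walls. You assert that away from $\Delta_1,\Delta_2,\Delta_3$ the 4-atom of a wall is an almost-direct product of ``the corresponding 3-atom from Table~\ref{tab42}'' with a circle, with multiplicity equal to the number of critical 2-tori. This cannot work as stated: Table~\ref{tab42} assigns 3-atoms to the rank-1 periodic trajectories (the node points of $\sigh$), not to the rank-2 strata that form the walls, and the number of critical 2-tori alone does not determine the atom --- the edges $b_4$, $b_5$, $b_6$ and $c_{12}$ all carry four critical 2-tori yet receive the distinct atoms $4B$, $2C_2$, $2C_2$ and $4A^*$. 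The distinction lies in how the separatrices glue on the singular fiber, which is exactly the information your product ansatz discards. Your fallback (``argue by the explicit separation of variables to guarantee connectedness of the singular fiber'') is not carried out and would be substantially harder than what the paper actually does.

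The paper closes this gap with a chain of four arguments you do not have. First, Lemma~\ref{lem1}: at a non-degenerate ``center-saddle'' point of rank $n-2$ where two local critical subsystems meet transversally, the hyperbolic $n$-atom $V$ seen in one subsystem forces the $(n+1)$-atom $k(V{\times}S^1)$ on the edge generated by the elliptic leaf of the other; this fixes the atoms on $b_8, c_6, b_9, c_7, c_5, c_{16}$ (Table~\ref{tab52}) by transporting the 3-atoms of Table~\ref{tab42} from the node points to the adjacent edges. Second, at the cusps $\Delta_{31}$--$\Delta_{33}$ the known elliptic edge determines the remaining incoming edges $c_{14}, c_9, c_{11}$ as $2B$. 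Third, a counting criterion: if an edge carries $k$ hyperbolic 2-tori and the torus count jumps from $k$ to $2k$ across it, the atom is $kB$ (edges $b_7, c_8, c_{15}, b_4$). Fourth, for the residual edges $b_5, b_6, c_{12}$ one examines the node points $\lambda_{43}, \lambda_{44}, \lambda_{53}^{\pm}$ where three of the four incident 4-atoms and all adjacent torus counts are already known, which pins down $2C_2$ and $4A^*$ by elimination. Without a statement and proof of Lemma~\ref{lem1} and these supplementary combinatorial constraints, your proposal does not determine the hyperbolic entries of Table~\ref{tabatom}.
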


{\renewcommand{\arraystretch}{1.5} \setlength{\extrarowheight}{0pt}
\tabcolsep=3pt
\centering
\small
\begin{longtable}{|c|c|c|c||c|c|c|c|}
\multicolumn{8}{r}{\fts{Table \myt\label{tabatom}}}\\
\hline
{\renewcommand{\arraystretch}{0.8}
\begin{tabular}{c} Segment\\(region)\end{tabular}}
&
{\renewcommand{\arraystretch}{0.8}
\begin{tabular}{c} Number of\\2-tori \end{tabular}}
&
Transfer
&
4-atom
&
{\renewcommand{\arraystretch}{0.8}
\begin{tabular}{c} Segment\\(region)\end{tabular}}
&
{\renewcommand{\arraystretch}{0.8}
\begin{tabular}{c} Number of\\2-tori \end{tabular}}
&
Transfer
&
4-atom
\\
\hline\endfirsthead%
\multicolumn{8}{r}{\fts{Таблица \ref{tabatom} (продолжение)}}\\
\hline
{\renewcommand{\arraystretch}{0.8}
\begin{tabular}{c} Segment\\(region)\end{tabular}}
&
{\renewcommand{\arraystretch}{0.8}
\begin{tabular}{c} Number of\\2-tori \end{tabular}}
&
Transfer
&
4-atom
&
{\renewcommand{\arraystretch}{0.8}
\begin{tabular}{c} Segment\\(region)\end{tabular}}
&
{\renewcommand{\arraystretch}{0.8}
\begin{tabular}{c} Number of\\2-tori \end{tabular}}
&
Transfer
&
4-atom
\\
\hline\endhead

$a_1$ & 2 & $\varnothing \to \tx{II}$ & $2A$ & $c_4$ & 2 & $\varnothing \to \tx{IV}$ & $2A$\\
\hline
$a_2$ & 4 & $\varnothing \to \tx{V}$ & $4A$ & $c_5$ & 2 & $\tx{II} \to \tx{V}$ & $2B$\\
\hline
$a_3$ & 8 & $\varnothing \to \tx{VIII}$ & $8A$ & $c_6$ & 1 & $\tx{I} \to \tx{III}$ & $B$\\
\hline
$b_1$ & 1 & $\varnothing \to \tx{I}$ & $A$ & $c_7$ & 2 & $\tx{II} \to \tx{IV}$ & $2A^*$\\
\hline
$b_2$ & 2 & $\varnothing \to \tx{III}$ & $2A$ & $c_8$ & 2 & $\tx{IV} \to \tx{VII}$ & $2B$\\
\hline
$b_3$ & 4 & $\varnothing \to \tx{VII}$ & $4A$ & $c_9$ & 2 & $\tx{III} \to \tx{VI}$ & $2B$\\
\hline
$b_4$ & 4 & $\tx{VII} \to \tx{VIII}$ & $4B$ & $c_{10}$ & 2 & $\tx{III} \to \tx{VI}$ & $2A$\\
\hline
$b_5$ & 4 & $\tx{VI} \to \tx{VII}$ & $2C_2$ & $c_{11}$ & 2 & $\tx{III} \to \tx{VI}$ & $2B$\\
\hline
$b_6$ & 4 & $\tx{VII} \to \tx{IX}$ & $2C_2$ & $c_{12}$ & 4 & $\tx{V} \to \tx{VII}$ & $4A^*$\\
\hline
$b_7$ & 2 & $\tx{III} \to \tx{V}$ & $2B$ & $c_{13}$ & 2 & $\tx{IV} \to \tx{IX}$ & $2A$\\
\hline
$b_8$ & 1 & $\tx{I} \to \tx{II}$ & $B$ & $c_{14}$ & 2 & $\tx{IV} \to \tx{IX}$ & $2B$\\
\hline
$b_9$ & 2 & $\tx{III} \to \tx{IV}$ & $C_2$ & $c_{15}$ & 2 & $\tx{III} \to \tx{VII}$ & $2B$\\
\hline
$c_1$ & 1 & $\varnothing \to \tx{I}$ & $A$ & $c_{16}$ & 4 & $\tx{V} \to \tx{VIII}$ & $4B$\\
\hline
$c_2$ & 2 & $\varnothing \to \tx{II}$ & $2A$ & $c_{17}$ & 4 & $\tx{V} \to \tx{VIII}$ & $4A$\\
\hline
$c_3$ & 2 & $\varnothing \to \tx{III}$ & $2A$ & {} & {} & {} & {}\\
\hline
\end{longtable}
}

The proof needs several steps.

First, using only known elliptic bifurcation we find the number of regular 3-tori in the chambers and the corresponding families.

Recall that a family of regular tori is a foliation to the Liouville tori of a connected component of a set in the phase space which is obtained by removing all singular fibers. The families are fixed inside one chamber, but the same family may be present in several chambers.

Walls between chambers (including walls between a regular chamber and the zero-chamber) which correspond to the 4-atoms of the type $kA$ ($k\in \mathbb{N}$) will be called $A$-crossings. Let us put all $A$-crossings on the diagrams using the bifurcations of the type ``center'' from Proposition~\ref{propr2}. Then we obtain the information collected in Table~\ref{tab51}. Thus, we establish the number of regular tori in all chambers and the total number of families of tori in all chambers. As we see, there are 23 families. The common families are present in the pairs of the chambers $\tx{III},\tx{VI}$ (two families), $\tx{IV},\tx{IX}$ (two families) and $\tx{V},\tx{VIII}$ (four families).

{\renewcommand{\arraystretch}{1.5} \setlength{\extrarowheight}{0pt}
\begin{table}[!htbp]
\centering
\small
\begin{tabular}{|c|c|c|c|c|c|}
\multicolumn{6}{r}{\fts{Table \myt\label{tab51}}}\\
\hline
Chamber &
{\renewcommand{\arraystretch}{0.8}
\begin{tabular}{c} Crossing \end{tabular}}
& Segments & 4-atom &
{\renewcommand{\arraystretch}{0.8}
\begin{tabular}{c} Number\\of tori \end{tabular}} &
{\renewcommand{\arraystretch}{0.8}
\begin{tabular}{c} New\\families\end{tabular}}
\\
\hline
$\tx{I}$ & $\varnothing \to \tx{I} $ & $c_1,b_1$ & $A$ & {1} &{1}\\
\hline
$\tx{II}$ & $\varnothing \to \tx{II} $ & $a_1,c_2$ & $2A$ & {2} &{2}\\
\hline
$\tx{III}$ & $\varnothing \to \tx{II} $ & $b_2,c_3$ & $2A$ & {2} &{2}\\
\hline
$\tx{IV}$ & $\varnothing \to \tx{IV} $ & $c_4$ & $2A$ & {2} &{2}\\
\hline
$\tx{V}$ & $\varnothing \to \tx{V} $ & $a_2$ & $4A$ & {4} &{4}\\
\hline
$\tx{VI}$ & $\tx{III} \to \tx{VI} $ & $c_{10}$ & $2A$ & {4} &{2}\\
\hline
$\tx{VII}$ & $\varnothing \to \tx{VII} $ & $b_3$ & $4A$ & {4} &{4}\\
\hline
$\tx{VIII}$ & $\tx{V} \to \tx{VIII} $ & $c_{17}$ & $4A$ & {8} &{4}\\
\hline
$\tx{IX}$ & $\tx{IV} \to \tx{IX} $ & $c_{13}$ & $2A$ & {4} &{2}\\
\hline
\end{tabular}
\end{table}
}

The following steps are dealing with defining all hyperbolic bifurcations. Some of them can be found by the next lemma, the analog of which for two degrees of freedom can be obtained from the results of~\cite{BolFom}.

\begin{lemma}\label{lem1}
Let $x_0$ be a non-degenerate critical point of rank $n-2$ of the type ``center-saddle'' in an integrable system with $n$ degrees of freedom and the momentum map $\mathcal{J}$. Denote by $U_1$ and $U_2$ two local critical subsystems with $n-1$ degrees of freedom transversaly intersecting at $x_0$ and such that in $U_1$ the point $x_0$ has some hyperbolic $n$-atom $V$ and in $U_2$ the $n$-atom of $x_0$ is the elliptic atom $kA$ $(k\in \mathbb{N})$. Let us consider a small two-dimensional section at the point $\mathcal{J}(x_0)$ transversal to the smooth leaves $\mathcal{J}(U_1)$ and $\mathcal{J}(U_2)$. Let $q$ be a point on the edge of this section generated by the leaf $\mathcal{J}(U_2)$. Then the $(n+1)$-atom containing the pre-image of $q$ is $k(V{\times}S^1)$.
\end{lemma}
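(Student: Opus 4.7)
The plan is to work in the Eliasson--Miranda--Zung symplectic normal form near the critical orbit of $x_0$ and compute the preimage of the transversal directly. Since $x_0$ is non-degenerate of type center-saddle with $n-2$ regular directions, the Eliasson theorem together with its saturated version by Miranda and Zung furnishes symplectic coordinates $(p_1,\dots,p_{n-2},\varphi_1,\dots,\varphi_{n-2},x,y,u,v)$ on a saturated neighborhood $\mathcal{N}$ of the orbit in which $\mathcal{J}$ equals, up to a local diffeomorphism of the base, the map $(p_1,\dots,p_{n-2},f_e,f_h)$ with $f_e=\frac{1}{2}(x^2+y^2)$ and $f_h=uv$. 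The induced decomposition $\mathcal{N}\cong W_e\times W_h\times T^{n-2}\times D^{n-2}$ factors the Liouville foliation: $W_e$ realises the global elliptic atom of $U_2$ (topologically $k$ disjoint disks forming $kA$), $W_h$ realises the transverse $2$-atom $V_2$ of the hyperbolic $n$-atom $V$ of $U_1$ (with $V\cong V_2\times T^{n-2}$), and the last two factors supply the regular action--angle.

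Next I would identify the two walls: in these coordinates $\mathcal{J}(U_1)=\{f_e=0\}$ and $\mathcal{J}(U_2)=\{f_h=0\}$. The $2$-dimensional section through $\mathcal{J}(x_0)$ transversal to both walls can be taken as the slice $\{p_i=p_i(x_0)\}$, which is the $(f_e,f_h)$-plane. In it, the edge generated by $\mathcal{J}(U_2)$ is the ray $\{f_h=0,\,f_e\geq 0\}$, so a point $q$ on this edge close to but distinct from $\mathcal{J}(x_0)$ has coordinates $(f_e,f_h)=(e_0,0)$ for some $e_0>0$, and the transversal at $q$ moves along $f_h$ while $f_e=e_0$ stays fixed. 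Here $e_0>0$ is a regular value of $f_e$ on $W_e$, so $\{f_e=e_0\}\cap W_e$ is a disjoint union of $k$ circles, one in each $A$-component of $kA$; the transversal covers a neighborhood of $f_h=0$, so its $\mathcal{J}$-preimage in $W_h$ is the entire saturated $2$-atom $V_2$; and the regular angle torus $T^{n-2}$ is inherited entirely because the $p_i$ are pinned. Consequently
\[
\mathcal{J}^{-1}(\text{transversal})\;\cong\;(k\,S^1)\times V_2\times T^{n-2}\;=\;k\bigl(V_2\times T^{n-1}\bigr)\;=\;k(V\times S^1),
\]
using $V=V_2\times T^{n-2}$ and $T^{n-1}=T^{n-2}\times S^1$, which is precisely the asserted description of the $(n+1)$-atom.

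The main obstacle will be passing rigorously from the pointwise Eliasson normal form to the global product structure of $\mathcal{N}$ used above. I would handle this by invoking the Miranda--Zung theorem: for a non-degenerate compact critical orbit with no focus-focus block, as holds for the center-saddle type, the classifying monodromy around the orbit is trivial, so the saturated neighborhood is indeed symplectomorphic to the direct product of the two $1$-DOF local models with the regular action--angle factor, and the global atoms $V_2$ and $kA$ appear exactly as the globalisations of the $W_h$ and $W_e$ factors in this product. Once this product structure is granted, the topological identification of the $k$ connected components with $V\times S^1$ is immediate from the computation above.
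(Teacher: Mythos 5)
The paper never actually proves this lemma --- it is stated as a computational tool, with only a pointer to \cite{BolFom} for the two-degrees-of-freedom analogue --- so your argument has to stand on its own. Your identification of the two walls ($\mathcal{J}(U_1)=\{f_e=0\}$, $\mathcal{J}(U_2)=\{f_h=0\}$, the edge being $\{f_h=0,\ f_e>0\}$) and the computation of the preimage of the transversal \emph{given} a direct product structure are fine. The genuine gap is exactly the step you flag as ``the main obstacle'': the assertion that the absence of focus--focus blocks forces the classifying monodromy to be trivial, so that the saturated neighbourhood is an honest direct product. That is not what Miranda--Zung or Zung's decomposition theorem gives: for elliptic--hyperbolic singularities one only obtains an \emph{almost} direct product, a quotient of the product model by a finite group, and this group can be nontrivial with no focus--focus component present. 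A $\mathbb{Z}_2$-twist of the hyperbolic block along the vanishing elliptic circle turns $B\times S^1$ into $A^*$ while leaving the elliptic atom of $U_2$ equal to $A$, so the hypotheses of the lemma do not exclude the twist; the triviality of this twist is the entire content of the statement and cannot be asserted for free. Relatedly, your assumption $V=V_2\times T^{n-2}$ already excludes twisted hyperbolic atoms, yet the paper applies the lemma on the edge $c_7$ of Table~\ref{tab52}, where $V=2A^{*}$ is not a product of a $2$-atom with a torus.

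A second, more elementary problem is the bookkeeping of $k$. The elliptic atom $kA$ in $U_2$ and the hyperbolic atom $V$ in $U_1$ share the same $k$ critical orbits: the $k$ components of $kA$ are neighbourhoods in $U_2$ of the very orbits that are the vertices of $V$. Hence the neighbourhood of the singular fibre cannot be modelled as $W_e\times W_h\times\cdots$ with $W_e$ the $k$-component atom $kA$ and $W_h$ the full $2$-atom $V_2$: that product has $k\cdot(\hbox{number of vertices of }V_2)$ critical tori instead of $k$. The correct elliptic factor is a single normal $2$-disk to $U_1$, and (granting triviality of the twisting) the resulting $(n+1)$-atom is one copy of $V\times S^1$ with $V$ the full hyperbolic atom, not $k$ copies. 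The paper's own data show the discrepancy: for the edge $b_9$ the boundary point $\lambda_{33}$ has hyperbolic atom $C_2$ (two critical circles, elliptic atom $2A$, so $k=2$), and the $4$-atom recorded in Table~\ref{tab52} is $C_2$, not the $2C_2$ that your formula $(kS^1)\times V_2\times T^{n-2}$ produces. So even if the product structure were granted, your computation over-counts whenever $k>1$ and $V$ is connected.
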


In Table~\ref{tab52}, we show all the edges of the iso-energy diagrams satisfying the conditions of the lemma, i.e., the edges along which inside the corresponding critical subsystem in the pre-image the boundary bifurcation is elliptic, but the pre-image of the boundary point has the type ``center-saddle''. The hyperbolic 3-atom at the boundary point is found from Table~\ref{tab42}, then, on the edges considered, the 4-atoms are defined from the lemma (they are given in the last column of the table). Recall that the 4-atoms which are direct products of 3-atoms and the circle have the same notation as the initial 3-atoms.

{\renewcommand{\arraystretch}{1.5} \setlength{\extrarowheight}{0pt}
\begin{table}[!htbp]
\centering
\small
\begin{tabular}{|c|c|c|c|c|}
\multicolumn{5}{r}{\fts{Table \myt\label{tab52}}}\\
\hline
Edge &
{\renewcommand{\arraystretch}{0.8}
\begin{tabular}{c} Boundary\\point \end{tabular}}
&
{\renewcommand{\arraystretch}{0.8}
\begin{tabular}{c} Boundary\\crossing \end{tabular}}
&
{\renewcommand{\arraystretch}{0.8}
\begin{tabular}{c} 3-atom \end{tabular}}
&
{\renewcommand{\arraystretch}{0.8}
\begin{tabular}{c} 4-atom\\of the edge\end{tabular}} \\
\hline
$b_8$ & $\lambda_{32}$ & $c_2 \to c_1$ & $B$ в $\mo$& $B$ \\
\hline
$c_6$ & $\lambda_{41}$ & $b_1 \to b_2$ & $B$ в $\mn$ & $B$ \\
\hline
$b_9$ & $\lambda_{33}$ & $c_4 \to c_3$ & $C_2$ в $\mo$& $C_2$ \\
\hline
$c_7$ & $\lambda_{62}^{-}$ & $c_2 \to c_4$ & $2A^*$ в $\mo$& $2A^*$ \\
\hline
$c_5$ & $\delta_{2}$ & $a_1 \to a_2$ & $2B$ в $\mm$ & $2B$ \\
\hline
$c_{16}$ & $\delta_{31}$ & $a_2 \to a_3$ & $4B$ в $\mm$ & $4B$ \\
\hline
\end{tabular}
\end{table}
}

Now let us consider the cusps $\Delta_{31} - \Delta_{33}$ of the iso-energy diagrams. Since for theses points we know the 4-atoms on the incoming elliptic edges ($2A$ on the edges $c_{13}$ at $\Delta_{31}$ and $c_{10}$ at $\Delta_{32},\Delta_{33}$ ), we see that the remaining incoming edges $c_{14},c_9$ and $c_{11}$ have the atoms $2B$. The cusp $\Delta_{34}$ does not giva any new information (the atoms on the edges $c_{16}$ and $c_{17}$ are already known).

One more block of information is obtained from the following simple reasoning. Let the critical points in the pre-image of some edge form $k$ hyperbolic two-dimensional tori, and crossing the edge between the adjacent chambers changes the number of regular tori from $k$ to $2k$. Then the atom on the edge is $kB$. From here, having the information about 2-tori, we establish the following atoms: $2B$ on the edges $b_7$, $c_8$ and $c_{15}$ (crossings $\tx{III}\to \tx{V}$, $\tx{IV}\to \tx{VI}$ and $\tx{III}\to \tx{VII}$ respectively), $4B$ on the edge $b_4$ (crossing $\tx{VII}\to \tx{VIII}$).

Still we have unknown 4-atoms on the edges $b_5,b_6$ and $c_{12}$. All of them correspond to the crossings between the chambers containing four tori. Let us analyze the information we already have on the non-degenerate periodic trajectories in the pre-images of the following points of the iso-energy diagrams: $\lambda_{43}$ for $b_5$ (region 12),  $\lambda_{44}$ for $b_6$ (region 15),  and $\lambda_{53}^{\pm}$ for $c_{12}$ (region 13). For all of these points we know the 4-atoms on three incoming edges out of four and also the number of regular tori in the adjacent chambers. From here we find out that the 4-atoms for the edges left are $2C_2$ on the edges $b_5,b_6$ and $4A^*$ on the edge $c_{12}$. The theorem is proved.

This completes the description of the equipped iso-energy diagrams and the rough thre-dimensional topology of the regular iso-energy levels.

%\clearpage

\section{Rough net invariants}\label{sec6}

The theory of the topological invariants of the integrable systems with arbitrary many degrees of freedom was developed by A.T.\,Fomenko in the works \cite{fom91,fomams91}. At the same time, due to the absence of methods of practical analysis of non-trivial examples up to this moment no investigations of a system not reducible to a family of systems with two degrees of freedom were fulfilled. Such a system is the Kovalevskaya top in a double field studied here. The equipped iso-energy diagrams built above show one of the possible rough topological invariant. Nevertheless, as it can be seen from the above illustrations, the attempt to keep real scales lead to the appearance of extra small regions of the diagrams transformations. Then the clear picture of these transformations is not always possible to obtain and the illustrations themselves in some details are overloaded with information.

Here we give the description of the rough topological invariants in the form of the analog of the Fomenko nets on the iso-energy levels for all parametrically stable cases. Let us explain the idea of the Fomenko nets and the simplified analog suggested in this work.

Having an integrable system with three degrees of freedom, let us identify to a point each conneted component of integral manifolds. Under some conditions we obtain a three-dimensional stratified manifold (see~\cite{fom91,Fom1989} for details). The rough topological invariant of the system is a one-dimensional complex built from this manifold. It is a graph with two types of vertices. A  vertex of the type ``center'' replaces each connected component of any three-dimensional stratum. A  vertex of the type ``atom'' stands for each connected component of any two-dimensional wall and is  equipped with the notation of the atom of the bifurcation, which takes place in the neighborhood of  corresponding critical integral surface. A graph edge is drawn from a center vertex to that atom  vertex on which the bifurcation of the corresponding three-dimensional torus takes place when crossing the wall. This invariant is called the net invariant of Fomenko.
This invariant may be naturally applied also to reducible systems (i.e., systems having
a one-dimensional symmetry group and admitting the reduction to a family of systems with two degrees of freedom). The experience of constructing multidimensional invariants in such systems brought us to the following conclusions. First, it is more rational to build an invariant for the systems induced on nonsingular iso-energy manifolds. Second, in order to describe the rough topology it is convenient to simplify the construction basing on the notion of an iso-energy diagram and generated chambers.

Let, as before, $H$ be the Hamilton function of a system, $\mF$ the
momentum map, $\Sigma\subset{\bR}^3$ the bifurcation diagram, i.e.,
the set of critical values of the momentum map. Suppose that all iso-energy manifolds $H_h=\{H=h\}$ are compact and fix a nonsingular $H_h$. The diagram $\sigh$ of the restriction $\mF(h)$ of the map $\mF$ to $H_h$ divides the plane into open connected components (chambers) having regular three-dimensional tori in its pre-image. As it was mentioned above, due to the fact that $H_h$ is compact, the zero-chamber arises containing all sufficiently far points of the plane. Let us add the ``infinitely distant'' point to the plane to obtain the two-dimensional sphere $S^2$. If there are no focus type singularities, i.e., the diagram does not have isolated points, then the diagram $\sigh$ generates a presentation of $S^2$ as a two-dimensional cell complex. Let us denote this complex by $\csigh$. The equipped diagram adds to the complex the information on the number of regular tori in the pre-images of the 2-cells and on the bifurcations which correspond to 1-cells.  
Replace each chamber (including the zero-chamber) with a point supplied with the number of regular tori in the pre-image of any point in the chamber. These points are the vertices of the new graph $\Gamma_h$, which is natural to consider lying on the two-dimensional sphere ${S}^2$.
The vertices corresponding to two adjacent chambers are joined by the edge, on which the set of atoms is marked taking place in the neighborhood of the preimage of a point on the common wall of the chambers. If the boundary of a chamber contains an isolated point of the diagram $\sigh$ which
have a focus type singularity of rank 1 in its pre-image, then at the vertex corresponding to this chamber we draw a loop edge supplied by the symbol F. It is convenient to draw the graph $\Gamma_h$ using the projection onto the plane of the sphere $S^2$ from its ``infinitely distant'' point. The corresponding zero-vertex is represented by a dashed closed curve surrounding all other vertices
and edges of the graph. Note that the $F$-loops does not lie on the sphere, but are only shown on it for simplicity. If there are no such loops (i.e., no focus type singularities), then the sphere together with the graph $\Gamma_h$ is a two-dimensional complex conjugate to the complex $\csigh$.

Now we return to the Kovalevskaya top in a double field. According to the above results on the classification of the iso-energy diagrams $\sigh$ with respect to $h$ and the physical parameters $a,b$ (see Fig.~\ref{fig_chart}) we come to the following statement.

\begin{theorem}\label{theo9}
The parametrically stable diagrams $\sigh$ correspond to $19$ types of the rough topological net invariants $\Gamma_h(a,b)$ of the five-dimensional iso-energy surfaces of the Kovalevskaya top in a double field.
\end{theorem}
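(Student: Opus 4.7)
The plan is to construct the graph $\Gamma_h(a,b)$ explicitly in each of the $19$ parametrically stable regions listed in Theorem~\ref{theo8} and then verify that no two of the resulting equipped graphs coincide. The construction itself is mechanical given the preceding work: in each region the equipped iso-energy diagram $\sigh$ is already drawn (see the four chains of diagrams along the left circle, the right circle, the line, and the block), so I would begin by adding the point at infinity to turn the complement of $\sigh$ in $\bR^2(g,k)$ into a cell decomposition of $S^2$, check that it is a legitimate cell complex (this uses the fact that outside the isolated focus-focus image of $\lambda_{50}$, proved in Theorems \ref{theo5}--\ref{theo7} to be the only focus type singularity, the diagram $\sigh$ has no isolated points), and then pass to the dual complex.

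Next, for every region I would populate the dual: each chamber $\mathrm{I}$--$\mathrm{IX}$ together with the zero-chamber $\varnothing$ becomes a vertex of $\Gamma_h$ carrying the number of regular $3$-tori read off from Table~\ref{tab51}; every smooth edge of $\sigh$ becomes a dual edge carrying the corresponding $4$-atom from Table~\ref{tabatom}; and at every vertex whose chamber has the image of $\mathcal{L}_5$ restricted to $\lambda_{50}$ on its boundary I would attach an $F$-loop according to Remark~\ref{remt} and the focus-focus entry of Table~\ref{tab42}. The presence or absence of $\lambda_{50}$ in the boundary of a chamber is controlled by the separating surfaces $\Q_{11},\Q_{12}$ from Theorem~\ref{theo8}, so it is simply read off from Fig.~\ref{fig_chart}. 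This gives, mechanically but in a reasonable number of steps, a concrete labeled graph for each of the $19$ regions.

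The remaining and only genuinely nontrivial step is to prove that the $19$ equipped graphs obtained in this way are pairwise non-isomorphic, so that $\Gamma_h(a,b)$ really takes $19$ distinct values as $(a,b,h)$ varies over the stable strata. The plan is to distinguish the graphs by cumulative invariants of increasing strength: first the sum of the vertex weights (total number of regular $3$-tori, which already separates many regions using Table~\ref{tab51}); then the multiset of vertex weights; then the multiset of edge labels drawn from $\{A,B,2A,2B,C_2,2C_2,2A^{*},4A,4A^{*},4B\}$ obtained from Table~\ref{tabatom}; then the presence or absence and number of $F$-loops; and finally, for regions not separated by any of the above, the local degree sequence at the zero-vertex and the adjacency pattern between vertices of given weights. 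Since the transition across each separating surface $\Q_1,\ldots,\Q_{13}$ produces a change that affects one of these invariants (either a chamber appears/disappears, or an edge is gained/lost, or the focus loop is added/removed), distinct stable regions must give distinct invariants.

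The main obstacle I expect is the bookkeeping in this last distinguishing step rather than anything conceptual: several regions differ only by the creation of a very small chamber or by a single $A$-edge, and one must check case-by-case that this difference survives at the level of the dual graph. To keep the argument disciplined I would traverse the four paths (the left circle, the right circle, the line, the block) in the order used in Fig.~\ref{fig_leftc1}--\ref{fig_block1}, at each transition identifying precisely which separating surface from \eqref{eq5_1} is crossed, and then recording the corresponding modification of $\Gamma_h$; the common regions~$1$, $9$, and $14$ serve as consistency checks, since the graph obtained from different paths must coincide there. After this sweep, exactly $19$ distinct equipped graphs remain, which is the assertion of the theorem.
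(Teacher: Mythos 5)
Your plan is essentially the paper's own argument: the nineteen parametrically stable regions of Fig.~\ref{fig_chart} are already fixed by Theorem~\ref{theo8} and the surrounding discussion, and in each one the invariant $\Gamma_h(a,b)$ is obtained mechanically as the dual of the equipped iso-energy diagram, decorated with the torus counts of Table~\ref{tab51}, the $4$-atoms of Table~\ref{tabatom}, and an $F$-loop for the isolated focus-focus image $\lambda_{50}$; the paper then simply exhibits the resulting list in Fig.~\ref{fig_leftcnet}--\ref{fig_blocknet}. The pairwise-distinctness sweep you emphasize is left implicit in the paper (to be read off from the figures), so your proposal only makes that bookkeeping step more explicit; just note that ``each crossing of a separating surface changes the invariant'' by itself would only separate adjacent regions, so the cumulative invariants you list are indeed needed for the full pairwise check.
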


Note that the topology of the iso-energy levels is known \cite{KhZot} and it changes when crossing the values $h=\mp a\mp b$, i.e., the separating surfaces $\Q_1 - \Q_4$.

The complete list of the invariants is shown in Fig.~\ref{fig_leftcnet}~--~\ref{fig_blocknet}. Here the circles with numbers inside them stand for the vertices together with the number of regular tori (except for the above described way of presenting the vertex for the outer zero-chamber). On the graph edges we place the notation of the 4-atoms of three-dimensional bifurcations obtained
from the corresponding atoms for a system with two degrees of freedom by multiplying by the topological circle. The illustrations are placed in the same way as the corresponding iso-energy diagrams in Fig.~\ref{fig_leftc1}~--~\ref{fig_block1}.

The obtained isoenergetic invariants give a possibility to construct the complete net invariant of Fomenko in the sense of \cite{fom91} on the six-dimensional phase space. To this end, we must watch the dynamics of $\Gamma_h(a,b)$ along the fixed values of $a,b$ and unite these graphs on $h$ in the corresponding enhanced space.  Recall that $a$ is not an essential parameter and all objects are preserved by the homothetic transformation $h'=h/a,b'=b/a$. The complete invariant changes when the line $b'={\rm const}$ in the plane $(b',h')$ crosses one of the double points of the set of the separating curves. The corresponding values of $b'$ are straightforwardly defined from \eqref{eq5_1}. In particular, the point of intersection of the curves $\Q_8$ and $\Q_9$ is defined by the unique positive root $b_0$ of the equation $6591b'^8+612b'^3-262b'^2-28b'-1=0$.

\begin{theorem}\label{theo10}
The problem of the motion of the Kovalevskaya top in a double field has four stable with respect to the physical parameter $b/a\in [0,1]$ rough net invariants of Fomenko. The separating values of the parameter are $0$ $($the Kovalevskaya case$)$, $3-2\sqrt{2}$,  $\frac{1}{3}$, $b_0$ и $1$  $($the reducible Yehia case$).$
\end{theorem}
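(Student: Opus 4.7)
The plan is to reduce the theorem to a finite combinatorial analysis of the separating set of Theorem~\ref{theo8} in the parameter plane. By the homothety noted after Theorem~\ref{theo8} we may set $a=1$ and work in the plane $(b',h')$ with $b'=b/a\in[0,1]$, $h'=h/a$. For fixed $b'$ the complete Fomenko net invariant on the six-dimensional phase space is obtained by stacking the iso-energy slice invariants $\Gamma_{h'}(1,b')$ along $h'$, and by Theorem~\ref{theo9} the slice invariant is locally constant inside each of the $19$ regions of Fig.~\ref{fig_chart}. Consequently the complete invariant changes as $b'$ varies exactly at those values where the vertical line $\{b'=\text{const}\}$ alters its intersection pattern with $\bigcup_i\Q_i$, i.e.\ at the abscissae of the double points (transverse crossings, tangencies, or boundary points) of the separating set.

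So the proof reduces to finding all such abscissae in $[0,1]$. First I would enumerate the pairwise intersections of the twelve curves $\Q_1,\ldots,\Q_7,\Q_9,\ldots,\Q_{13}$, each of which at $a=1$ is the graph of an elementary function of $b'$. Each intersection then reduces to a single algebraic equation in $b'$. A direct case check shows that in the open interval $(0,1)$ the only new abscissae produced are $b'=3-2\sqrt{2}$, coming from $\Q_3\cap\Q_{12}$ (equation $2\sqrt{b'}=1-b'$, giving $b'=(\sqrt{2}-1)^2$) and from $\Q_2\cap\Q_{11}$, and $b'=1/3$, coming from $\Q_3\cap\Q_6$, from $\Q_2\cap\Q_5$, and from $\Q_7\cap\Q_9$ (equation $3b'^2-4b'+1=0$). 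All remaining intersections collapse to $b'=0$ or to $b'=1$, the latter being highly degenerate because the symmetric Yehia limit forces several surfaces to coincide.

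Next I would treat the parametric curve $\Q_8$ separately. Its endpoints lie at $(b',h')=(1,2)$ for $s=1$ and at $(0,\sqrt{3})$ for $s=2/\sqrt{3}$, so only transverse intersections with other curves can produce a new separating value in $(0,1)$. Substituting the parametrization of $\Q_8$ from \eqref{eq5_1} into the equation $h=(1+3b'^2)/(2b')$ of $\Q_9$, isolating and squaring the radical $\sqrt{(s^2-1)^3}$, and then eliminating the parameter $s$, one arrives at the polynomial relation $6591b'^8+612b'^3-262b'^2-28b'-1=0$ stated in the theorem. A sign check at $b'=1/3$ (the expression is negative) and at $b'=1$ (it equals $6912>0$), combined with Descartes' rule of signs (a single sign change in the coefficient sequence), shows that this octic has a unique positive root $b_0$ and that $b_0\in(1/3,1)$. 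Analogous substitutions of the parametrization of $\Q_8$ into $\Q_7,\Q_{10},\Q_{12},\Q_{13}$ either reproduce the endpoint $b'=1$ (tangencies at the common endpoint of $\Q_8$) or give no root in $(0,1)$, so no further separating value arises.

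Collecting the five critical abscissae $0<3-2\sqrt{2}<1/3<b_0<1$ partitions $[0,1]$ into four open intervals on each of which the complete Fomenko net invariant is stable, proving the theorem. The main obstacle is the elimination step producing the octic: because $\Q_8$ involves two nested radicals in the auxiliary parameter $s$, combining it with the rational equation of $\Q_9$ requires two iterated squarings and careful tracking of extraneous roots before the polynomial of Theorem~\ref{theo10} emerges in its stated form; by contrast, once this polynomial is in hand the localization $b_0\in(1/3,1)$ is immediate.
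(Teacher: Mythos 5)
Your proposal is correct and follows essentially the same route as the paper: normalize to $a=1$, observe that the complete Fomenko net invariant (obtained by uniting the slice invariants $\Gamma_{h'}$ over $h'$) can only change when the line $b'=\mathrm{const}$ passes through a double point of the separating set \eqref{eq5_1}, and then read off the abscissae of those double points, with the intersection $\Q_8\cap\Q_9$ producing the octic whose unique positive root is $b_0$. Your write-up merely makes explicit the case check that the paper dismisses as ``straightforwardly defined from \eqref{eq5_1}'', and your identifications of the intersections yielding $3-2\sqrt{2}$ and $\tfrac13$, as well as the sign/Descartes argument locating $b_0\in(\tfrac13,1)$, all check out.
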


Knowing the simplified net invariant, let us establish the number and the character of periodic solutions and degenerate motions. Since $H_h$ is supposed non-singular the zero-skeleton of $\sigh$ is generated by periodic solutions consisting of critical points of rank~$1$.
Let us also include in the zero-skeleton the image of degenerate points of rank $2$ filling two-dimensional tori of special type. The graph $\Gamma_h$ (not counting on the $F$-loops) divides the sphere $S^2$ into open connected components; each component is bounded by a cycle of the graph. Thus the 2-cells of the partition of the sphere generated by the graph $\Gamma_h$ correspond to singularities of rank $1$ or to degenerate singularities of rank $2$. Knowing the atoms along the edges bounding a cell corresponding to periodic solutions, one can define the number of these solutions on a given level of the first integrals and the type of these solutions.
Ignoring the $F$-loops, we obtain that the solutions of the type ``center-center'' have on the boundary of the cell only atoms of the type $A$, and the solutions of the type ``saddle-saddle''  only hyperbolic atoms (here these atoms are $B,A^*$ and $C_2$); for the solutions of the type ``center-saddle'' on the boundary of the cell we meet atoms of both elliptic and hyperbolic types. The number of periodic solutions in the pre-image of a 2-cell (the node point of the diagram) is defined by the multiplicity of atoms on the edges and the multiplicity of vertices of graph $\Gamma_h$ along the bounding cycle of a 2-cell. To distinguish 2-cells corresponding to degenerate solutions of rank $2$, it is necessary to address the corresponding bifurcation diagram and select the node points which are cusps or tangency points of the bifurcation curves.

\begin{theorem}\label{theo11}
The number and the types of singular periodic solutions and the number of degenerate two-dimensional tori on non-singular energy levels of the Kovalevskaya top in a double field in $19$ parametrically stable cases are given in Table~$\ref{tab61}$. The regions are enumerated according to Fig.~$\ref{fig_chart}$.
\end{theorem}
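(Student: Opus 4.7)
The plan is to extract the statement directly from the $19$ simplified net invariants $\Gamma_h(a,b)$ catalogued in Theorem~\ref{theo9}, which, for each parametrically stable region $1,\ldots,19$ of Fig.~\ref{fig_chart}, encode the full combinatorial structure of $\sigh$ together with the $4$-atom labels on the edges and the torus counts at the vertices. For a fixed region I would look at the cellular decomposition of $S^2$ induced by $\Gamma_h$ (setting the $F$-loops aside for a moment) and enumerate its $2$-cells. By construction, each such $2$-cell is dual to a vertex of the skeleton of $\sigh$, hence corresponds either to a critical periodic trajectory (rank~$1$ point in $\mF^{-1}$) or to a two-dimensional degenerate torus coming from cusps/tangencies of $\Sigma_i$.

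The classification rule for the cells is the one articulated just before the theorem: a cell bounded only by atoms of type $A$ has a center-center periodic solution in the pre-image; a cell bounded only by hyperbolic atoms ($B$, $A^*$, $C_2$) corresponds to saddle-saddle; a mixed boundary gives center-saddle. To determine how many periodic trajectories sit in the pre-image of a given node, I would multiply the multiplicity indicated in the $4$-atom notation (e.g.\ $2A$, $4B$, $2C_2$, $4A^*$ from Table~\ref{tabatom}) by the number of times each edge is traversed around the bounding cycle, and then divide by the appropriate ramification coming from how the edges of $\Gamma_h$ meet at the vertex. The $2$-cells to be classified as degenerate-torus cells (rather than periodic-orbit cells) are identified by checking against the geometric description of $\Sigma_i$ in Section~\ref{sec4}: they are precisely those dual to cusps of $\Sigma_3$ (points $e_4$, and the cusps of the curves $\Delta_3$) and to tangency/boundary nodes of $\Delta_1$, $\Delta_2$ visible in Fig.~\ref{fig_bifset1}--\ref{fig_bifset3}. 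The $F$-loops are treated separately: each one accounts for the focus-focus orbit from $\mL_5$ on $\lambda_{50}$.

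The main obstacle is bookkeeping, not mathematics. With $19$ regions, three possible periodic-orbit types, a rank-$2$ degenerate category, and non-trivial edge multiplicities (especially along $b_5,b_6$ with atom $2C_2$ and $c_{12}$ with atom $4A^*$, which may be shared by several bounding cycles), a careful accounting is easy to get wrong. To guard against this I would impose two consistency checks on each row of Table~\ref{tab61}: first, the total count of orbits of a given type must coincide with the count obtained by slicing the curves $\lambda_i,\delta_j$ of Table~\ref{tab42} by the energy level $\{H=h\}$ lying in that region; second, the numbers of cells of a given type and their multiplicities must be compatible with an Euler-characteristic calculation on $S^2$ using the vertex and edge multiplicities of $\Gamma_h$. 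Once each region is processed and both checks succeed, Table~\ref{tab61} is filled in and the theorem follows by direct inspection.
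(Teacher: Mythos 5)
Your proposal follows essentially the same route as the paper: the paper likewise reads off Table~\ref{tab61} from the $19$ net invariants by classifying the $2$-cells of the conjugate complex according to whether their bounding edges carry only $A$-atoms (center-center), only hyperbolic atoms $B,A^*,C_2$ (saddle-saddle), or a mixture (center-saddle), counting orbits via the atom and vertex multiplicities along the bounding cycle, handling the $F$-loops separately for the focus-focus orbits on $\lambda_{50}$, and identifying the degenerate rank-$2$ cells as those dual to cusps or tangency points of the bifurcation curves. Your two added consistency checks (slicing the curves of Table~\ref{tab42} by the energy level, and the Euler-characteristic count on $S^2$) are sensible safeguards but not a different method.
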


\begin{table}[ht]
\centering
%\footnotesize
\tabcolsep=3pt
\begin{tabular}{|p{40mm}|c|c|c|c|c|c|c|c|c|c|c|c|c|c|c|c|c|c|c|}
\multicolumn{20}{r}{\fts{Table \myt\label{tab61}}}\\
\hline
Region number &\textbf{01}&\textbf{02}&\textbf{03}&\textbf{04}&\textbf{05}&\textbf{06}&\textbf{07}&\textbf{08}&\textbf{09}&\textbf{10}&\textbf{11}&\textbf{12}&\textbf{13}&\textbf{14}&\textbf{15}&\textbf{16}&\textbf{17}&\textbf{18}&\textbf{19}\\
\hline
Type ``center-center''&4&3&4&3&2&3&4&4&4&4&4&4&4&4&4&8&8&8&8\\
\hline
Type ``center-saddle''&3&2&2&0&0&1&4&4&6&6&5&6&6&6&6&10&10&8&8\\
\hline
Type ``saddle-saddle''&0&0&0&0&0&0&1&2&3&2&1&6&6&6&6&6&6&4&4\\
\hline
Type ``focus-focus''&1&1&0&0&1&1&1&0&0&1&1&0&0&0&0&0&0&0&0\\
\hline
Degenerate tori&1&0&0&0&0&1&2&6&6&2&1&6&10&14&12&10&10&12&12\\
\hline
\end{tabular}
\end{table}
%%%%%%%%%%%%%%%%%%%%%%%%%%%%%%%%

From the net invariants built above we also get the description of all loop molecules of rank $1$ singularities. If the type of a singularity includes the part ``center'', then its saturated neighborhood in $\mP^6$ is presented as the direct product of two 3-atoms and is easily found from Table~\ref{tab42}. Having a singularities of the type ``saddle-saddle'', let us consider its loop molecule on the iso-energy level. In the introduced simplified net invariant such a singularity with necessity generates a 2-cell with four vertices and with the edges marked by the atoms of hyperbolic three-dimensional bifurcations. The complete list of the forms of the saddle type loop molecules met in this problem together with their notation is given in Fig.~\ref{fig_circle}. The first three molecules have the analogs in two degrees of freedom; these analogs are the molecules of the critical points of rank $0$ with the saturated neighborhoods $B{\times}B$, $(B{\times}C_2)/\mathbb{Z}_2$ and $B{\times}C_2$ \cite{Osh1,Osh2}. Here, of course, we use the standard notations for the 2-atoms. The last molecule does not have such an analog. For all saddle type singularities in our problem we finally get the following iso-energy loop molecules: $\mathrm{NM}_1$ for $\lambda_{42}$ and $\lambda_{52}$; $2\mathrm{NM}_1$ for $\lambda_{23}$; $\mathrm{NM}_2$ for $\lambda_{21}$; $2\mathrm{NM}_2$ for $\lambda_{22}$; $\mathrm{NM}_3$ for $\lambda_{43}$ and $\lambda_{44}$; $2\mathrm{NM}_4$ for $\lambda_{53}$.

\section*{Благодарности}

The authors thank Academician A.T.\,Fomenko for valuable discussions of the approaches and results. 
The work is partially supported by the RFBR (grant N~14-01-00119) and the Authority of Volgograd Region (грант N~13-01-97025).

%%%%%%%%%%%%%%%%%%%%%%%%%%
\clearpage

%\bibliographystyle{unsrturl}
%\bibliography{FAM}
\small

\clearpage
\begin{figure}[!htbp]
\centering
\includegraphics[width=0.25\textwidth,keepaspectratio]{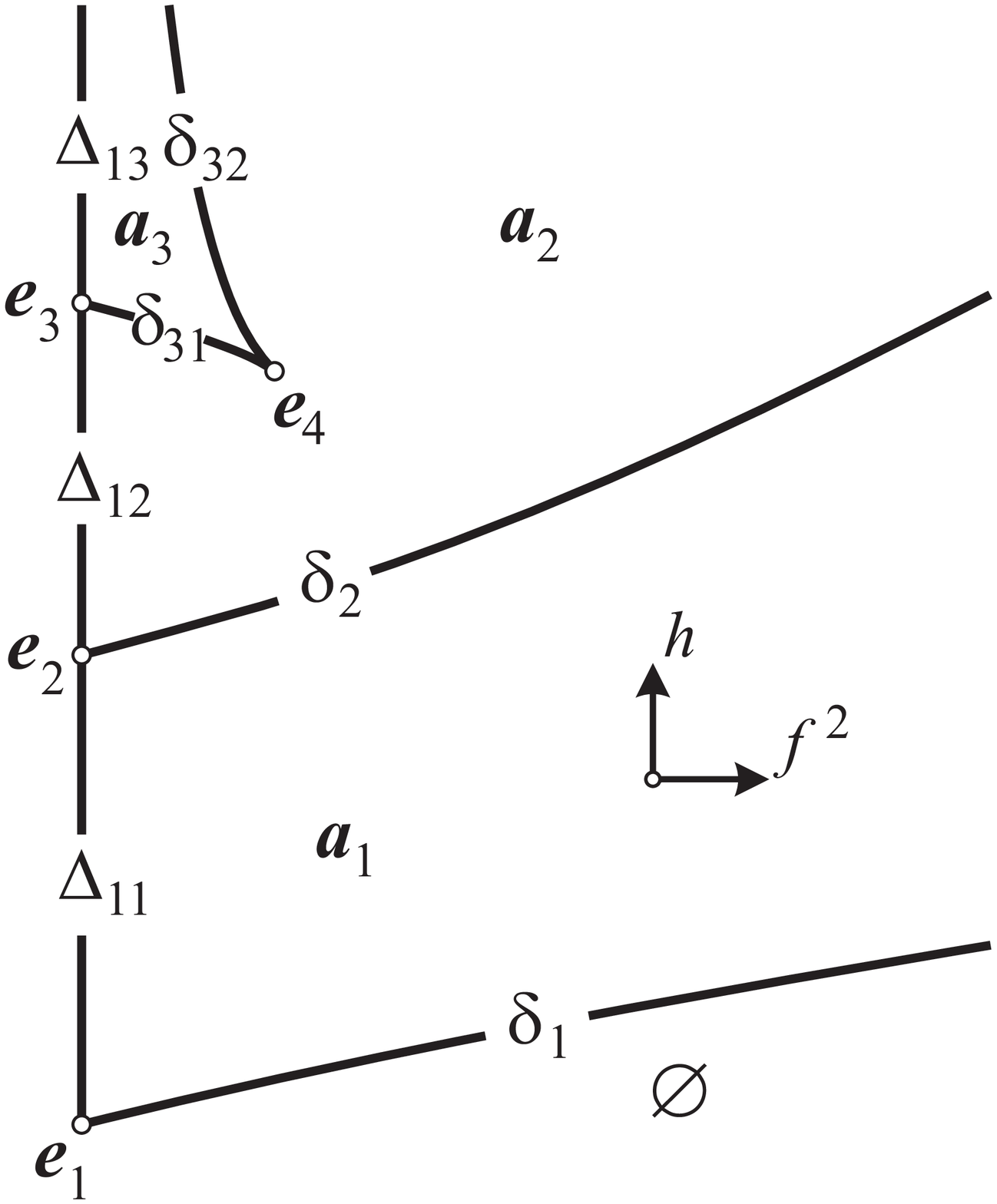}
\caption{The diagram of the subsystem $\mm$ on the $(f^2,h)$-plane.}\label{fig_bifset1}
\end{figure}

\begin{figure}[!htbp]
\centering
\includegraphics[width=0.3\textwidth,keepaspectratio]{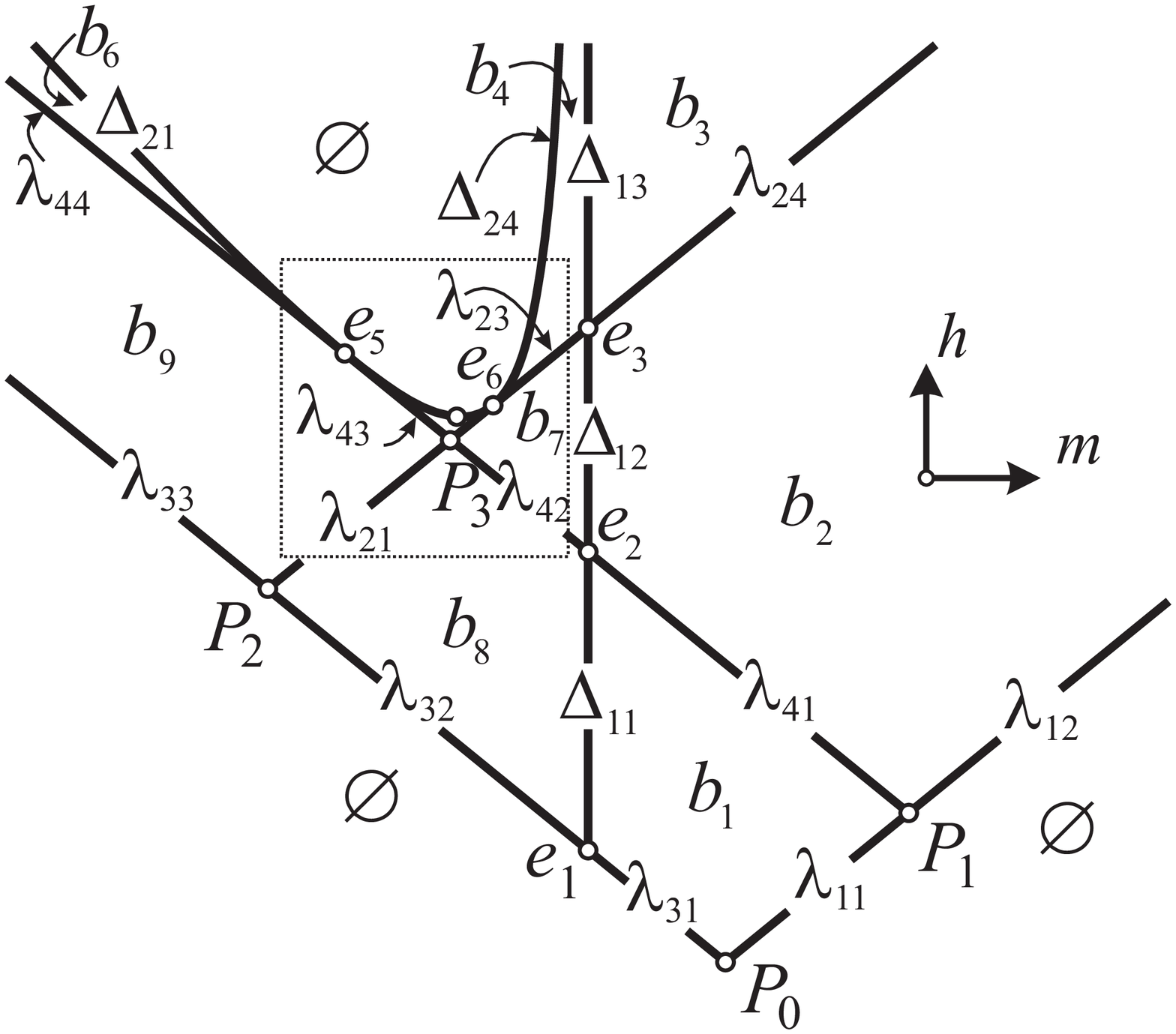}\ \includegraphics[width=0.22\textwidth,keepaspectratio]{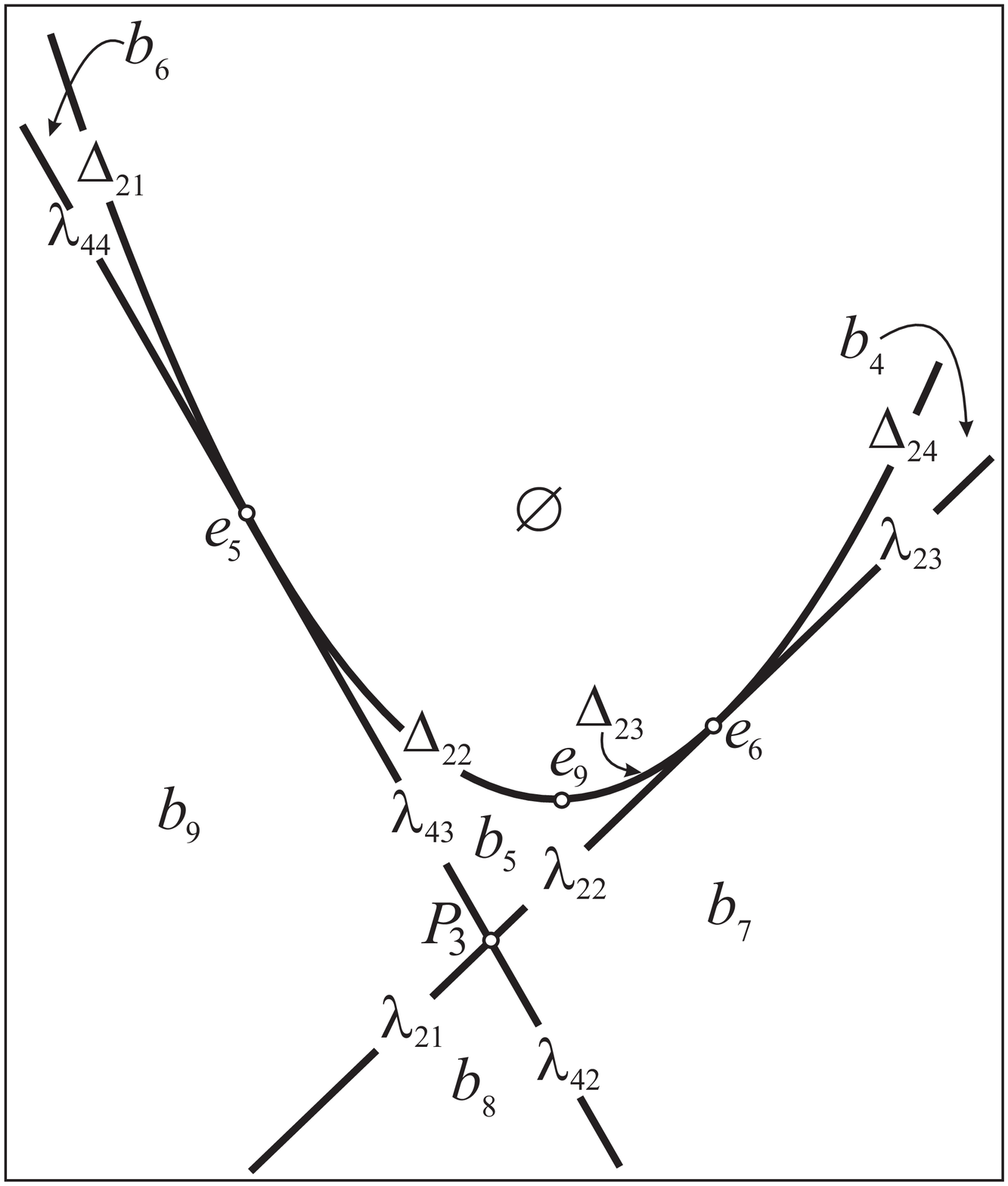}
\caption{The diagram of the subsystem $\mn$ on the $(m,h)$-plane and its fragment.}\label{fig_bifset2}
\end{figure}

\begin{figure}[!htbp]
\centering
\includegraphics[width=0.41\textwidth,keepaspectratio]{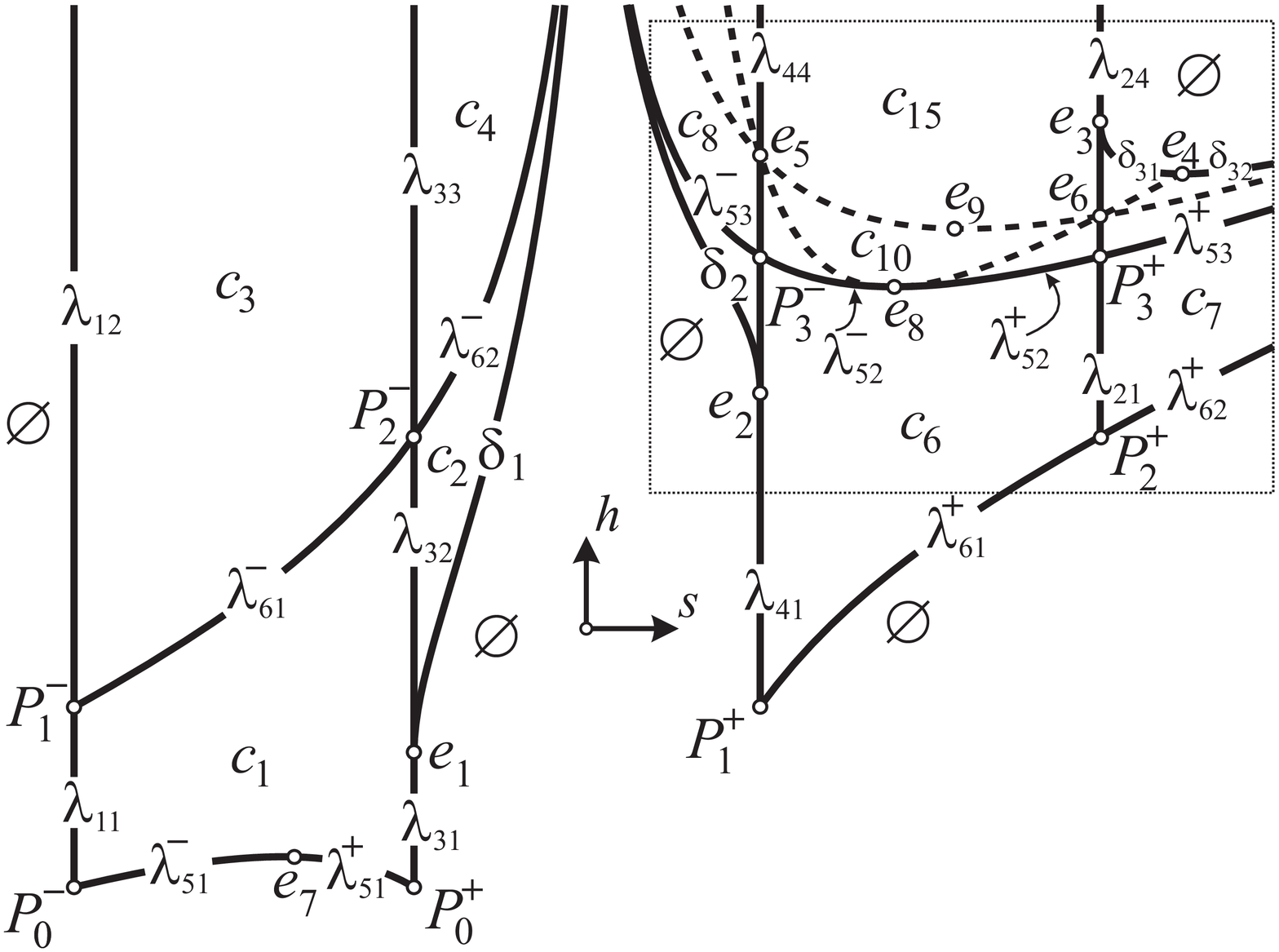}\  \includegraphics[width=0.41\textwidth,keepaspectratio]{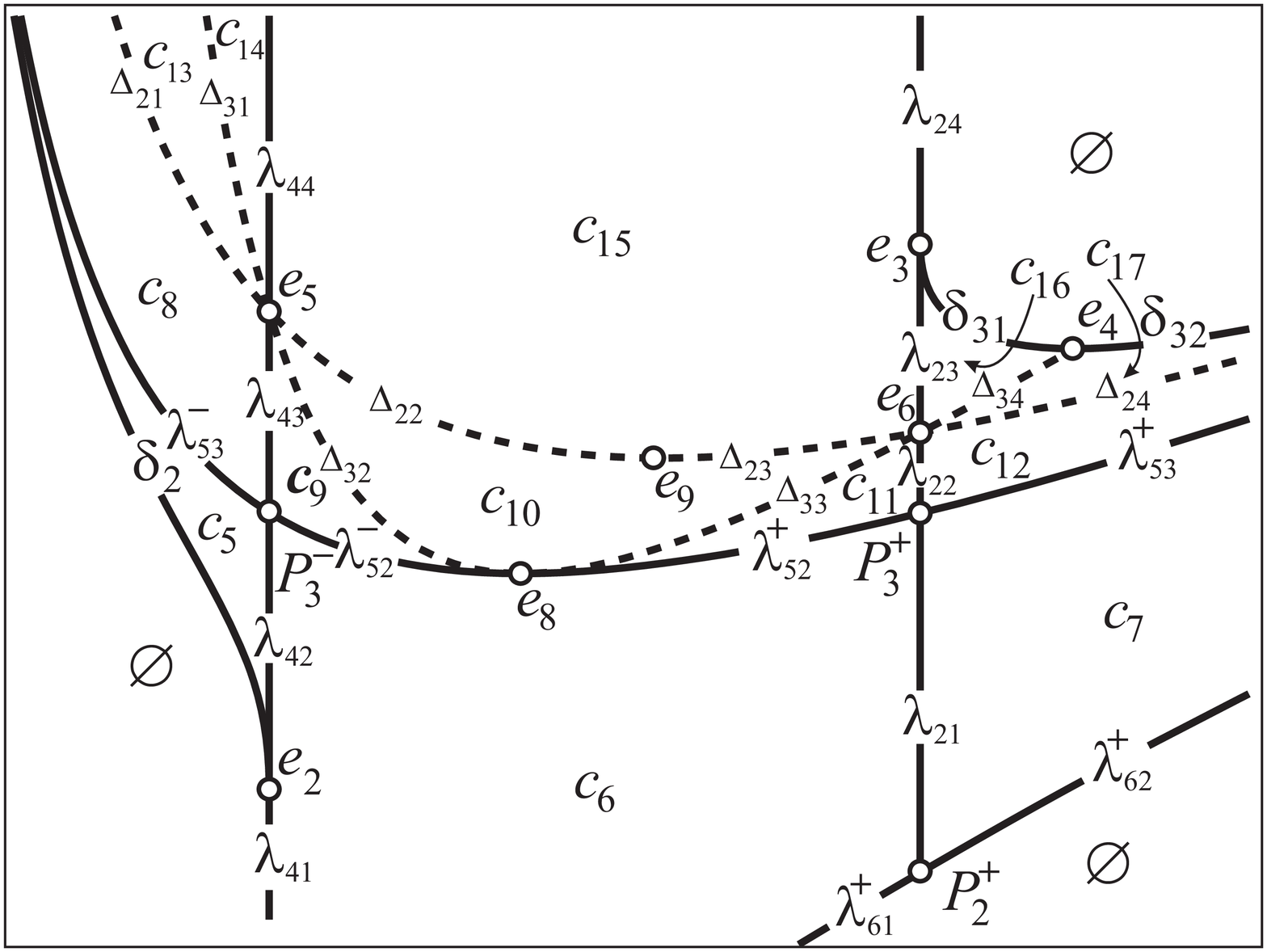}
\caption{The diagram of the subsystem $\mo$ on the $(s,h)$-plane and its fragment.}\label{fig_bifset3}
\end{figure}

\begin{figure}[!htbp]
\centering
\includegraphics[width=0.6\textwidth,keepaspectratio]{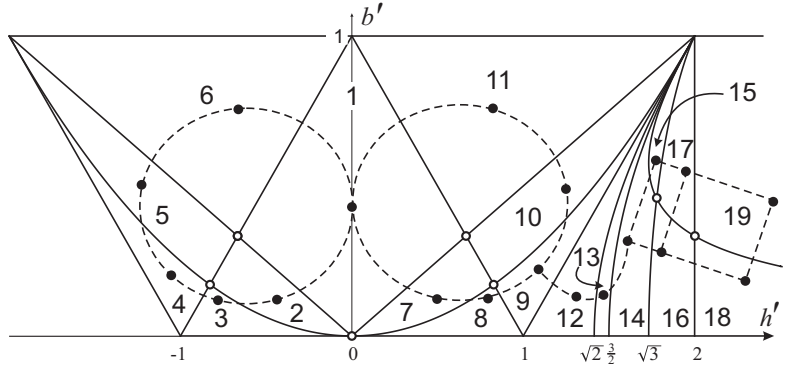}
\caption{The separation set and the paths of the look-up.}\label{fig_chart}
\end{figure}

\begin{figure}[!htbp]
\centering
\includegraphics[width=\cofb\textwidth,keepaspectratio]{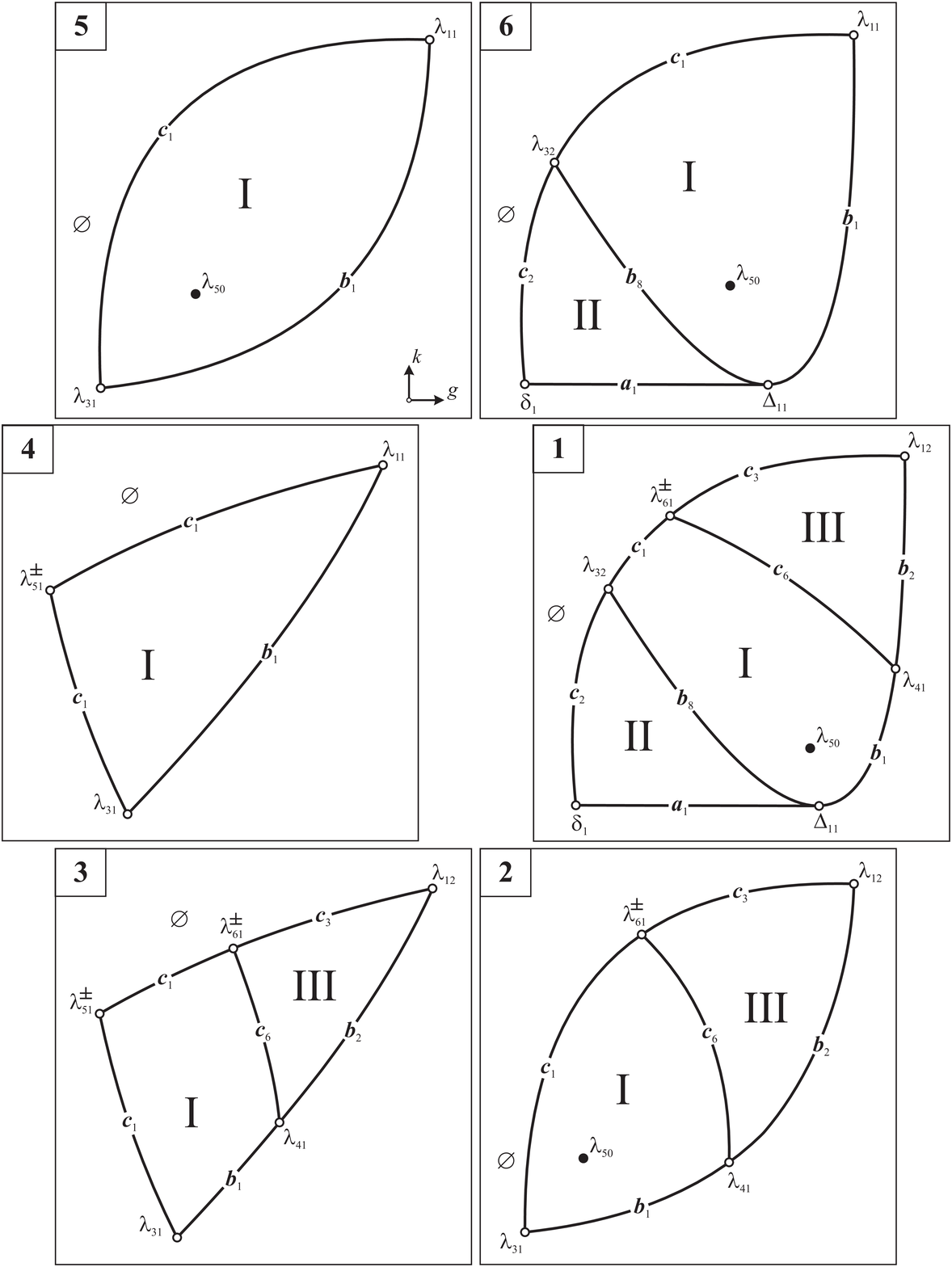}
\caption{Equipped diagrams: ``left circle''.}\label{fig_leftc1}
\end{figure}

\begin{figure}[!htbp]
\centering
\includegraphics[width=\cofb\textwidth,keepaspectratio]{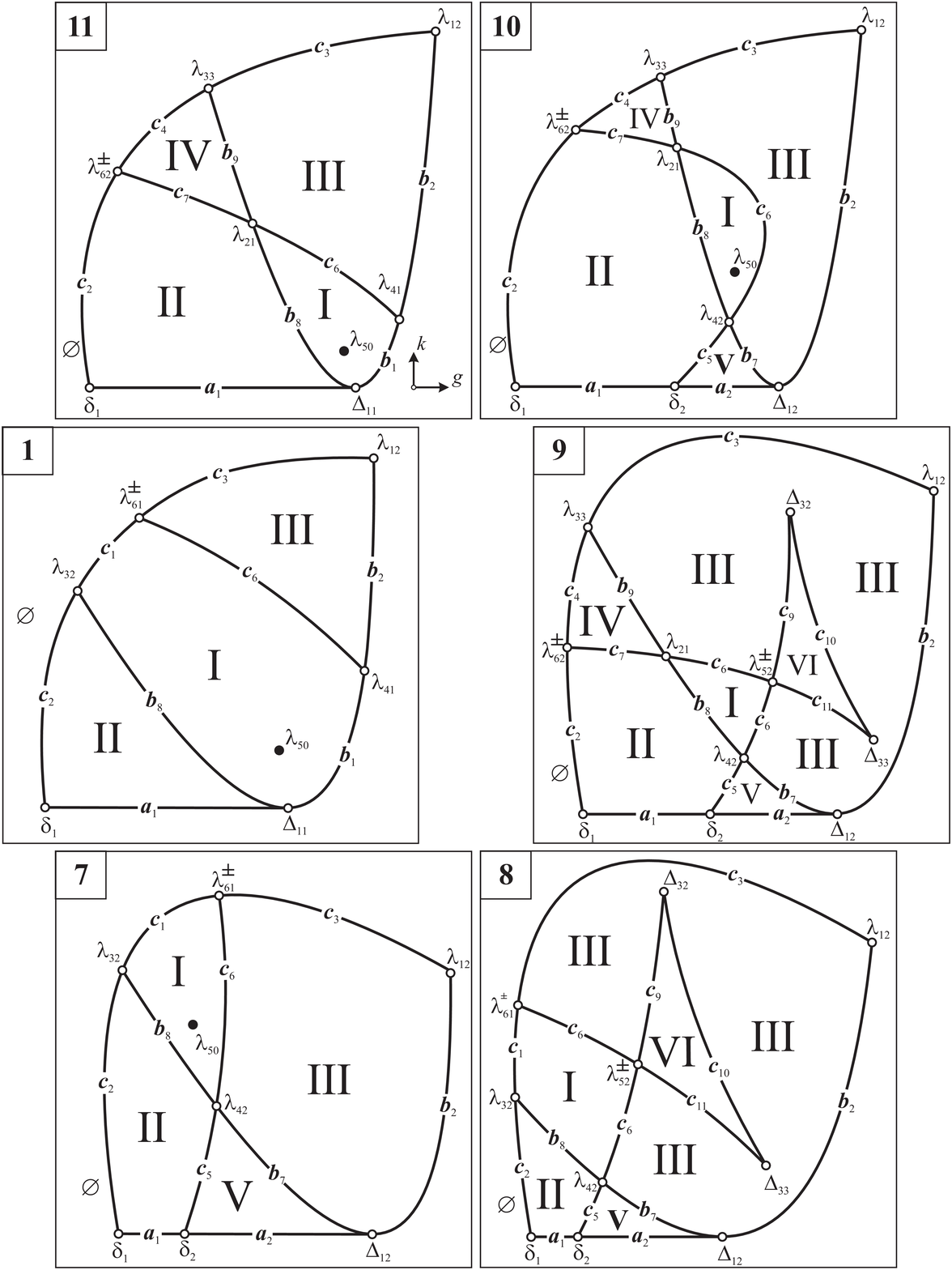}
\caption{Equipped diagrams: ``right circle''.}\label{fig_rightc1}
\end{figure}

\begin{figure}[!htbp]
\centering
\includegraphics[width=\cofs\textwidth,keepaspectratio]{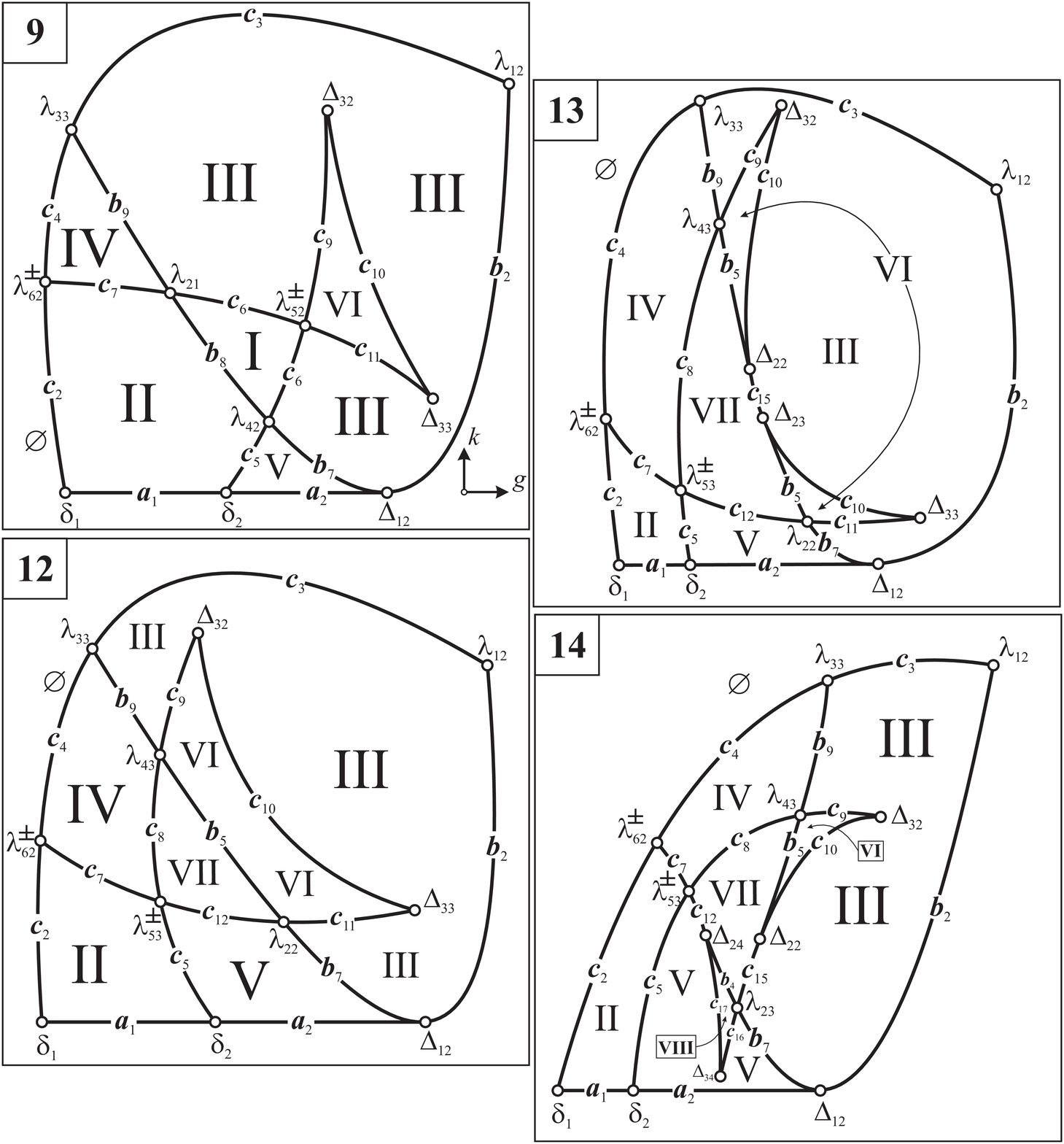}
\caption{Equipped diagrams: ``line''.}\label{fig_line1}
\end{figure}

\begin{figure}[!htbp]
\centering
\includegraphics[width=\cofs\textwidth,keepaspectratio]{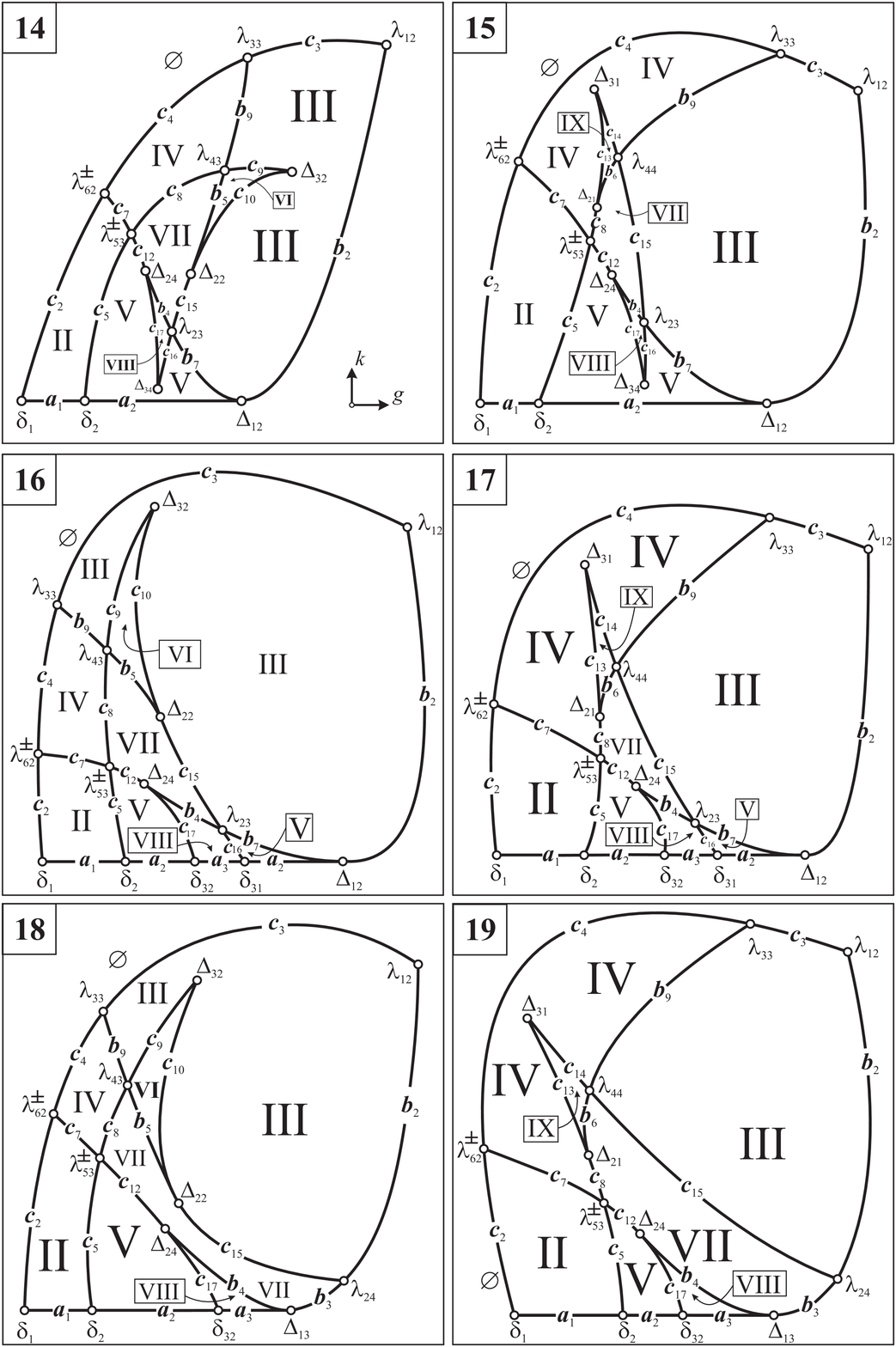}
\caption{Equipped diagrams: ``block''.}\label{fig_block1}
\end{figure}

\begin{figure}[htp]
\centering
\includegraphics[width=\cofb\textwidth,keepaspectratio]{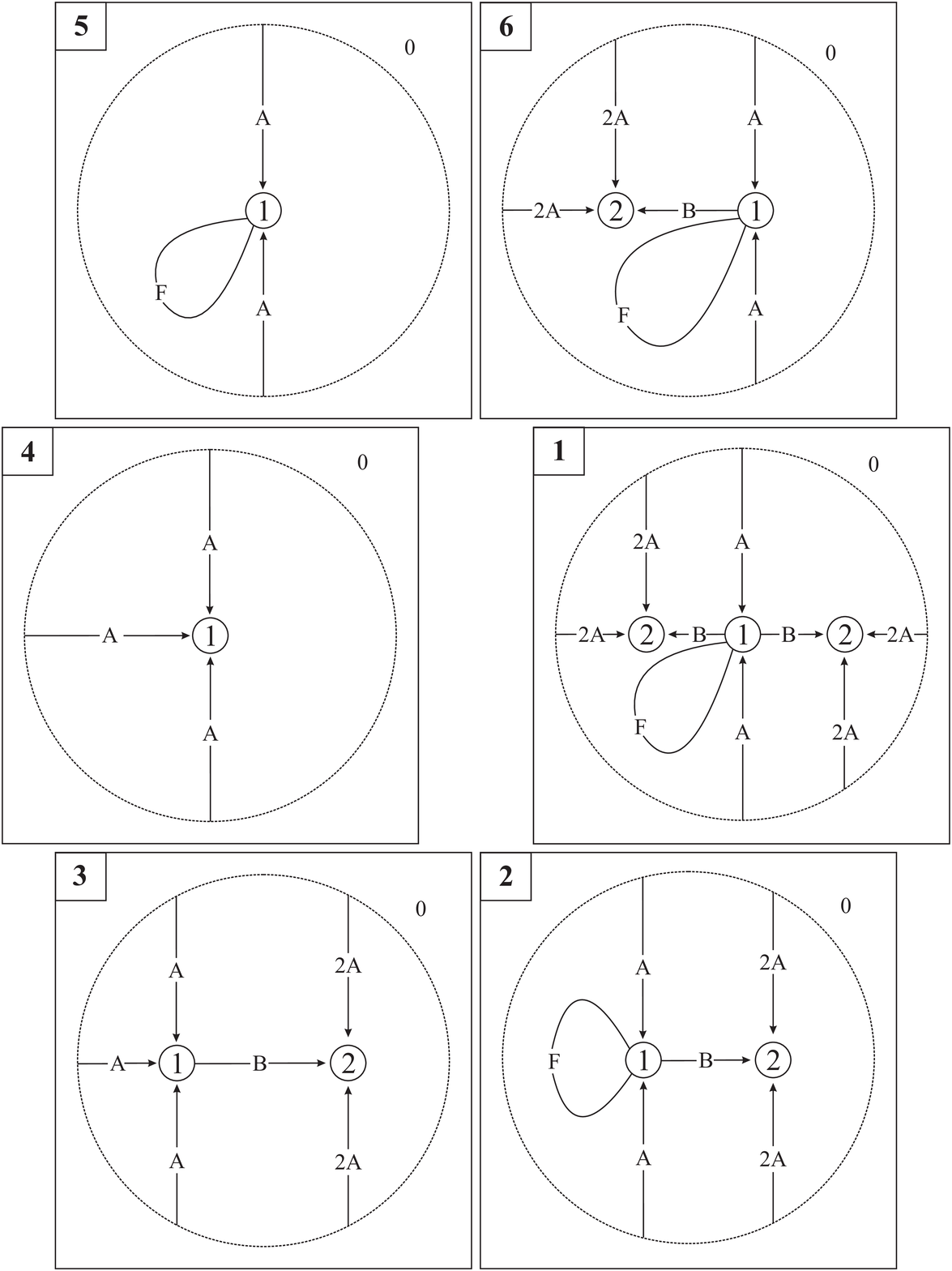}
\caption{Net invariants: ``left circle''.}\label{fig_leftcnet}
\end{figure}

\begin{figure}[htp]
\centering
\includegraphics[width=\cofb\textwidth,keepaspectratio]{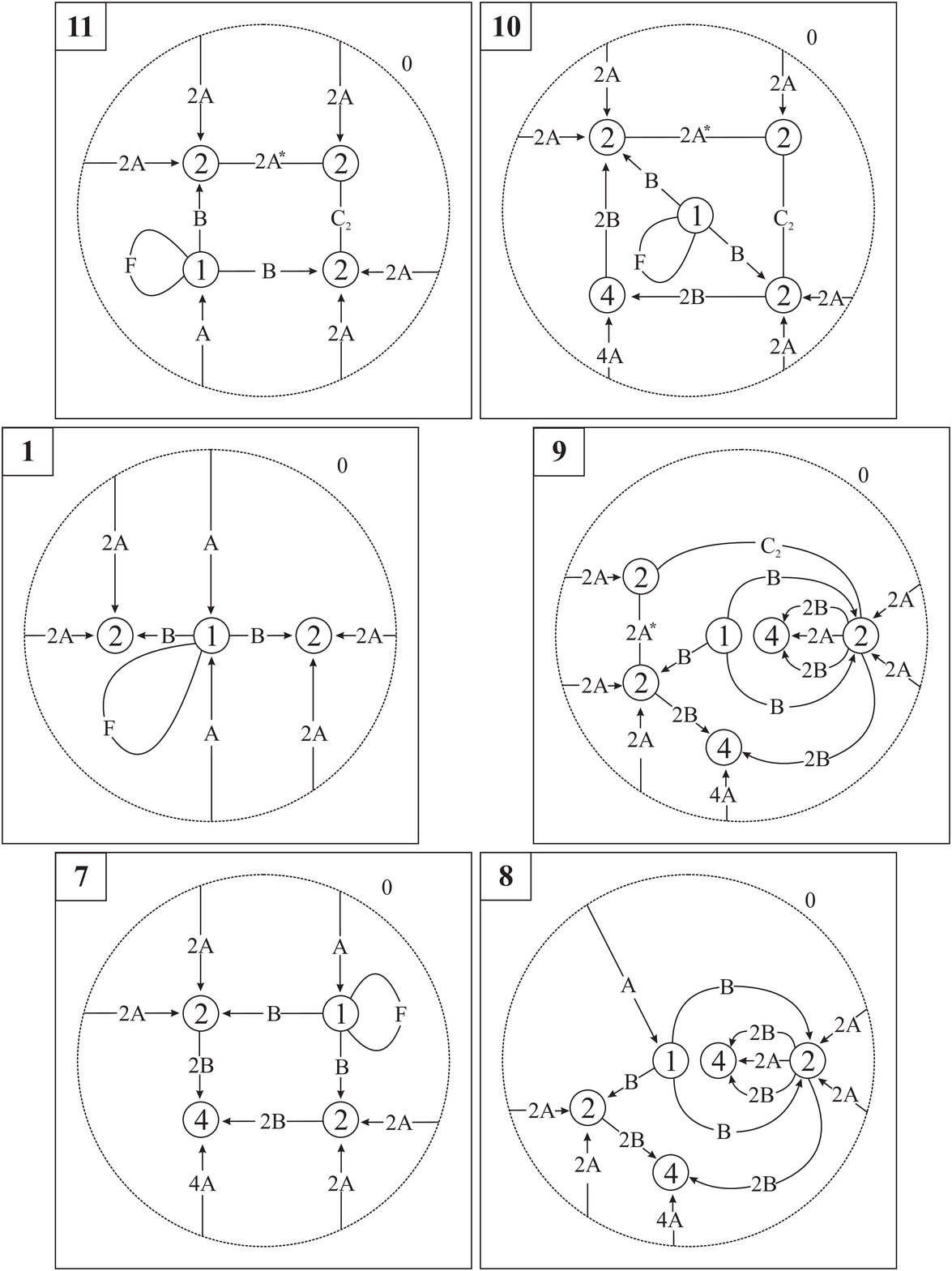}
\caption{Net invariants: ``right circle''.}\label{fig_rightcnet}
\end{figure}

\begin{figure}[htp]
\centering
\includegraphics[width=\cofs\textwidth,keepaspectratio]{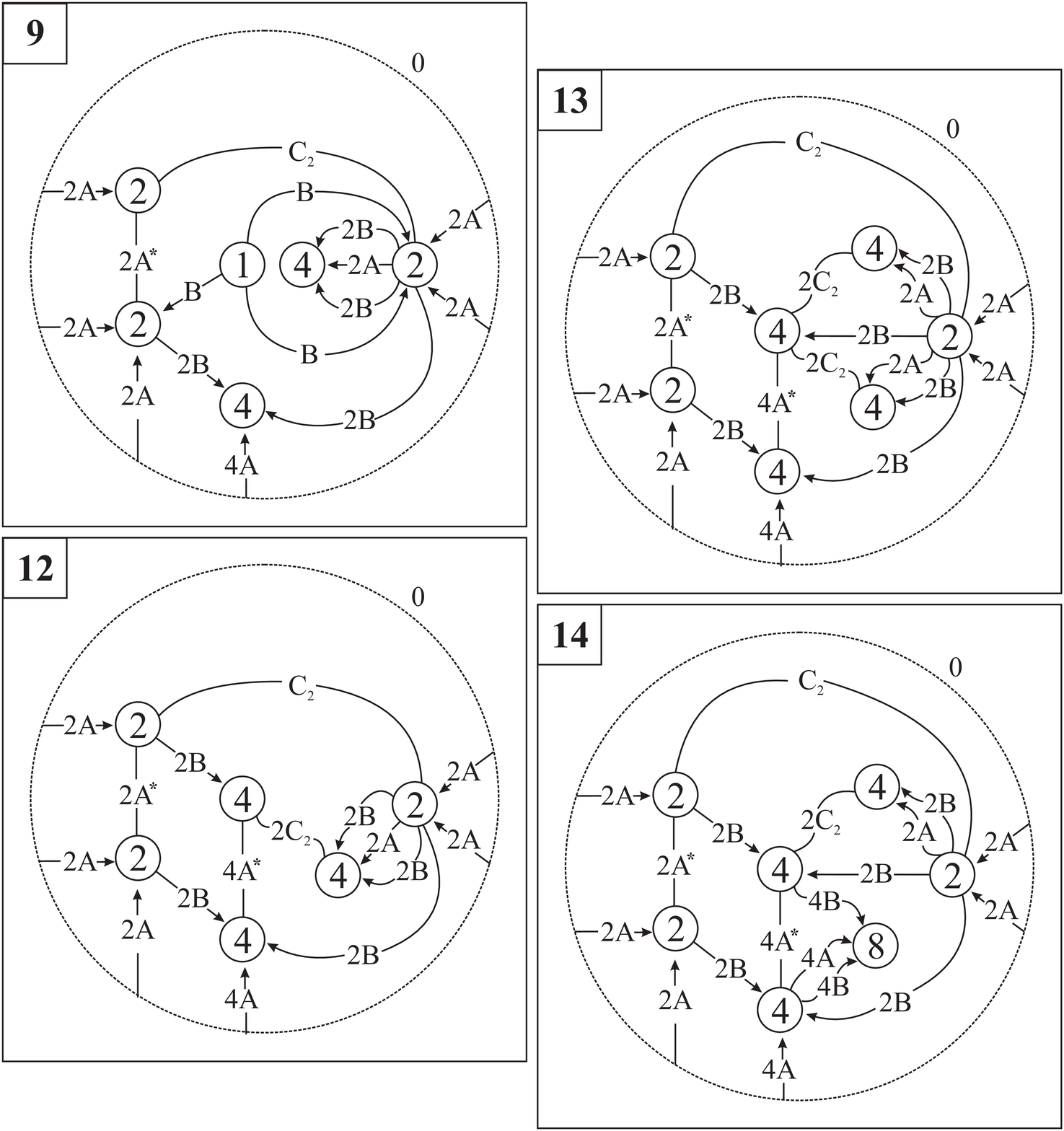}
\caption{Net invariants: ``line''.}\label{fig_linenet}
\end{figure}

\begin{figure}[htp]
\centering
\includegraphics[width=\cofs\textwidth,keepaspectratio]{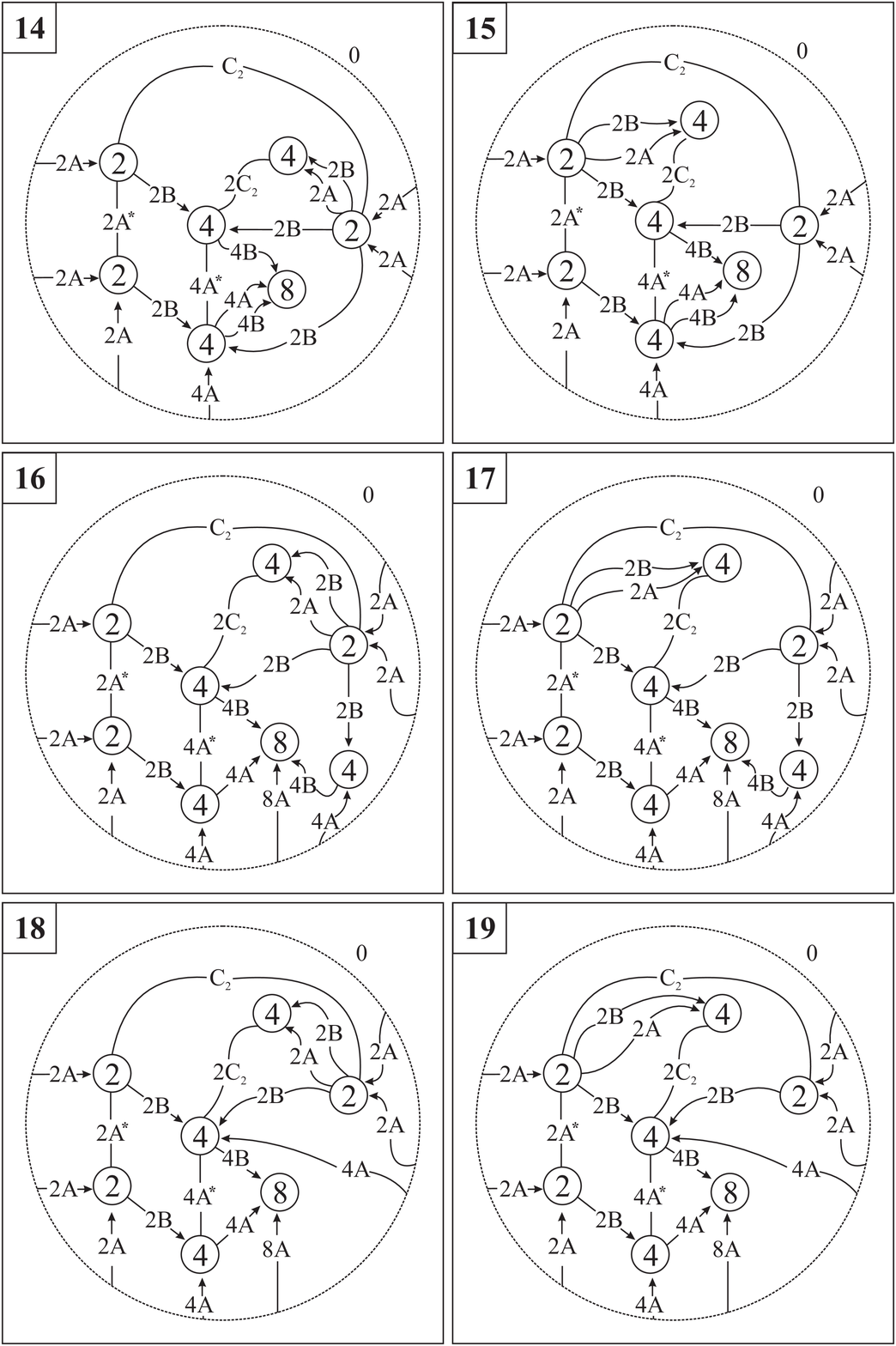}
\caption{Net invariants: ``block''.}\label{fig_blocknet}
\end{figure}

\begin{figure}[ht]
\centering
\includegraphics[width=120mm,keepaspectratio]{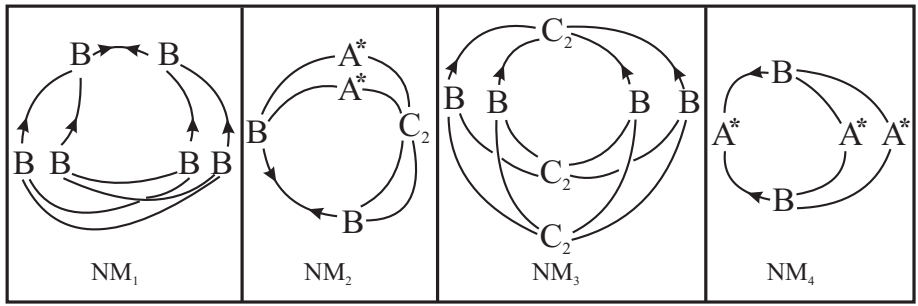}
\caption{Loop molecules of the saddle type singularities of rank 1.}\label{fig_circle}
\end{figure}

\end{document}